\def\thm@space@setup{\thm@preskip=2pt
	\thm@postskip=2pt \itshape}
\newtheoremstyle{newstyle}      
{} 
{} 
{\mdseries} 
{} 
{\bfseries} 
{.} 
{ } 
{} 
\theoremstyle{newstyle}
\newtheorem{theorem}{Theorem}
\newtheorem{lemma}{Lemma}
\newtheorem{corollary}{Corollary}
\theoremstyle{definition}
\newtheorem{definition}{Definition}
\theoremstyle{remark}
\newtheorem{remark}{Remark}
\begin{document}
	
		\sloppy
		
		\setlength{\belowcaptionskip}{-6pt}
		\setlength{\abovedisplayskip}{1mm}
		\setlength{\belowdisplayskip}{1mm}
		\setlength{\abovecaptionskip}{1mm}
	
	\title{Characterizing the Rate-Memory Tradeoff in Cache Networks within a Factor of 2}
		\author{Qian~Yu,~\IEEEmembership{Student~Member,~IEEE,}
	Mohammad~Ali~Maddah-Ali,~\IEEEmembership{Member,~IEEE,}
        and~A.~Salman~Avestimehr,~\IEEEmembership{Senior Member,~IEEE}
			\thanks{Manuscript received February 15, 2017; revised February 07, 2018; accepted Augest 16, 2018. A shorter version of this paper was presented at ISIT, 2017 \cite{8006555}. }
			\thanks{Q.~Yu and A.S.~Avestimehr are with the Department of Electrical Engineering, University of Southern California, Los Angeles, CA, 90089, USA (e-mail:  qyu880@usc.edu; avestimehr@ee.usc.edu).}
\thanks{M. A. Maddah-Ali is with Department of Electrical Engineering, Sharif University of Technology, Tehran, 11365, Iran (e-mail: maddah\_ali@sharif.edu).}
\thanks{
Communicated by M. Neely, Associate Editor for Shannon Theory. }
\thanks{
This work is in part supported by NSF grants CCF-1408639, NETS-1419632, and ONR award N000141612189.}
\thanks{Copyright (c) 2017 IEEE. Personal use of this material is permitted.  However, permission to use this material for any other purposes must be obtained from the IEEE by sending a request to pubs-permissions@ieee.org.}
		}
	\maketitle

	\begin{abstract} 
    We consider a basic caching system, where a single server with a database of $N$ files (e.g. movies) is connected to a set of $K$ users through a shared bottleneck link. 
     Each user has a local cache memory with a size of $M$ files. The system operates in two phases: a placement phase, where each cache memory is populated up to its size from the database, and a following delivery phase, where each user requests a file from the database, and the server is responsible for delivering the requested contents. The objective is to design the two phases to minimize the load (peak or average) of  the bottleneck link. We characterize the rate-memory tradeoff of the above caching system within a factor of $2.00884$ for both the \emph{peak rate} and the \emph{average rate} (under uniform file popularity), improving state of the arts that are within a factor of $4$ and $4.7$ respectively. Moreover, in a practically important case where the number of files ($N$) is large, we exactly characterize the tradeoff for systems with no more than $5$ users, and characterize the tradeoff within a factor of $2$ otherwise. To establish these results, we develop two new converse bounds that improve over the state of the art.
	\end{abstract}

    \section{Introduction}

    
    Caching is a common strategy to mitigate heavy peak-time communication load in a distributed network, via duplicating parts of the content in memories distributed across the network during off-peak times. In other words, caching allows us to trade distributed memory in the network for communication load reduction. Characterizing this fundamental \emph{rate-memory tradeoff} is of great practical interest, and has been a research subject for several decades. 
     For single-cache networks, the \emph{rate-memory tradeoff}  has been characterized for various scenarios  in the 80s~\cite{sleator1985amortized}. However, those techniques were found insufficient to tackle the multi-cache cases.

    There has been a surge of recent results in information theory that aim at formalizing and characterizing such rate-memory tradeoff in multi-cache networks~\cite{maddah-ali12a,maddah-ali13, pedarsani13, niesen13, ji2015order, zhang15, ji14b, lim2016information, bidokhti2016noisy, zhang2015fundamental,hachem14, karamchandani14,maddah2015cache,naderializadeh2016fundamental,7996347}. 
    In particular, a basic bottleneck caching network was considered in \cite{maddah-ali12a}, where a set of $K$ users is connected to a server through a shared error-free link. In this setting, each user has a local cache of size $M$, which can be used to prefetch the contents (a library of $N$ files).
    The objective is to design the caching functions, such that in a following delivery phase, the server can serve the user demands with efficient bandwidth usage (measured by the communication rate $R$). 
    For this case, the peak rate vs. memory tradeoff (the tradeoff between maximum $R$ over all possible user demands and $M$) was formulated and characterized within a factor of $12$~\cite{maddah-ali12a}. 
    This caching framework has been extended to many scenarios, including decentralized caching \cite{maddah-ali13}, online caching \cite{pedarsani13}, caching with nonuniform demands \cite{niesen13, zhang15, ji2015order},  device-to-device caching \cite{ji14b},  caching on file selection networks \cite{lim2016information}, caching on broadcast channels \cite{bidokhti2016noisy}, caching for channels with delayed feedback with channel state information  \cite{zhang2015fundamental}, hierarchical cache networks~\cite{hachem14, karamchandani14}, and caching on interference channels~\cite{maddah2015cache,naderializadeh2016fundamental,7996347}, among others. Many of these extensions share similar ideas in terms of the achievability and the converse bounds. Therefore, if we can improve the results for the basic bottleneck caching network, the ideas can be used to improve the results in other cases as well.

  In the literature, various approaches have been proposed to improve the bounds on rate-memory tradeoff for the bottleneck network. 
  Several caching schemes have been proposed in \cite{DBLP:journals/corr/Chen14h, wan2016caching, sahraei2016k, tian2016caching, amiri2016fundamental, amiri2016coded, yu2016exact, gomez2016fundamental}, and converse bounds have also been introduced in \cite{ghasemi15,   lim2016information, sengupta15, DBLP:journals/corr/WangLG16, tian2016symmetry, prem2015critical}.
  For the case, where the prefetching is uncoded, the exact rate-memory tradeoff for both peak and average rate (under uniform file popularity) and for both centralized and decentralized settings have been established in~\cite{yu2016exact}. However, for the general case, where the cached content can be an arbitrary function of the files in the database, the exact characterization of the tradeoff remains open. In this case, the state of the art is an approximation within a factor of $4$  for peak rate \cite{ghasemi15} and $4.7$ for average rate under uniform file popularity~\cite{lim2016information}.

    In this paper, we improve the approximation on characterizing the  rate-memory tradeoff by proving new information-theoretic converse bounds, 
     and achieving an approximation 
     within a factor of $2.00884$, for both the \emph{peak rate} and the \emph{average rate} under uniform file popularity. These converse bounds hold for
        the general information theoretic framework, in the sense that there is no constraint on the caching or delivery process. In particular it is not limited to linear coding or uncoded prefetching. 
     This improved characterization is approximately a two-fold improvement with respect to the state of the art in  current literature~\cite{ghasemi15,lim2016information}.

          Furthermore,  for a practically important case where the number of files is large, we exactly characterize the rate-memory tradeoff for systems with no more than $5$ users. In this case, we also  characterize the rate-memory tradeoff within a factor of $2$ for networks with an arbitrary number of users, slightly improving our factor-of-$2.00884$ characterization in the general case.
     In prior works, despite various attempts, this tradeoff has only been exactly characterized in two instances: the single-user case~\cite{maddah-ali12a} and, more recently, the two-user case~\cite{tian2016symmetry}.

    To prove these results we develop two new converse bounds for cache networks. The first converse is developed based on the idea of enhancing the cutset bound, to effectively capture the isolation of cache contents of the users that belong to the same side of the cut. 
  This approach strictly improves the compound cutset bound, which was used in most of the prior works. Furthermore, using this converse, we are able to characterize both the peak rate and the average rate within factor of $2.00884$. To prove this result, we essentially demonstrate that our new converse is within a factor of $2.00884$ from the achievable scheme developed in [21] for all possible parameter values.
    
   Moreover, we develop a second converse bound, which is proved by carefully dividing the set of all user demands into certain subsets, and lower bounding the communication rate within each subset separately. Unlike the first converse, it exploits the scenarios where users may have common demands. This enables improvement upon the first converse, and allows exact characterization of the rate-memory tradeoff for systems with up to $5$ users.

    The rest of this paper is organized as follows.  In Section \ref{sec:sys}, we formally define the caching framework and the rate-memory tradeoff. Then in Section \ref{sec:main} we summarize our main results.  Section \ref{sec:ge} proves our first main result, which characterizes the peak rate-memory tradeoff within a constant factor of $2.00884$ for all possible parameter values, and characterizes this tradeoff within a factor of $2$ when the number of files is large. Section \ref{sec:ge} proves the converse bound that is needed to establish this characterization. For brevity, we prove the rest of the results in appendices.

	\section{System Model and Problem Formulation}\label{sec:sys}
	
In this section, we formally introduce the system model for the caching problem. Then we define the rate-memory tradeoff for both peak rate and average rate based on the introduced framework, and state the corresponding main problems studied in this paper.
\subsection{System Model}
	
	We consider a system with one server connected to $K$ users through a shared, error-free link (see Fig. \ref{fig:net}). The server has access to a database of $N$ files $1, . . . , N$, each of size $F$ bits. We assume that the contents of all files, denoted by $W_1,...,W_N$, are i.i.d. random variables, each of which is uniformly distributed on set $\{1,...,2^F\}$.
	Each user $k$ has an isolated cache memory of size $MF$ bits, where $M\in[0,N]$. For convenience, we define a parameter $r=\frac{KM}{N}$.
	
		\begin{figure}[htbp]
			\centering
			\includegraphics[width=0.5\textwidth]{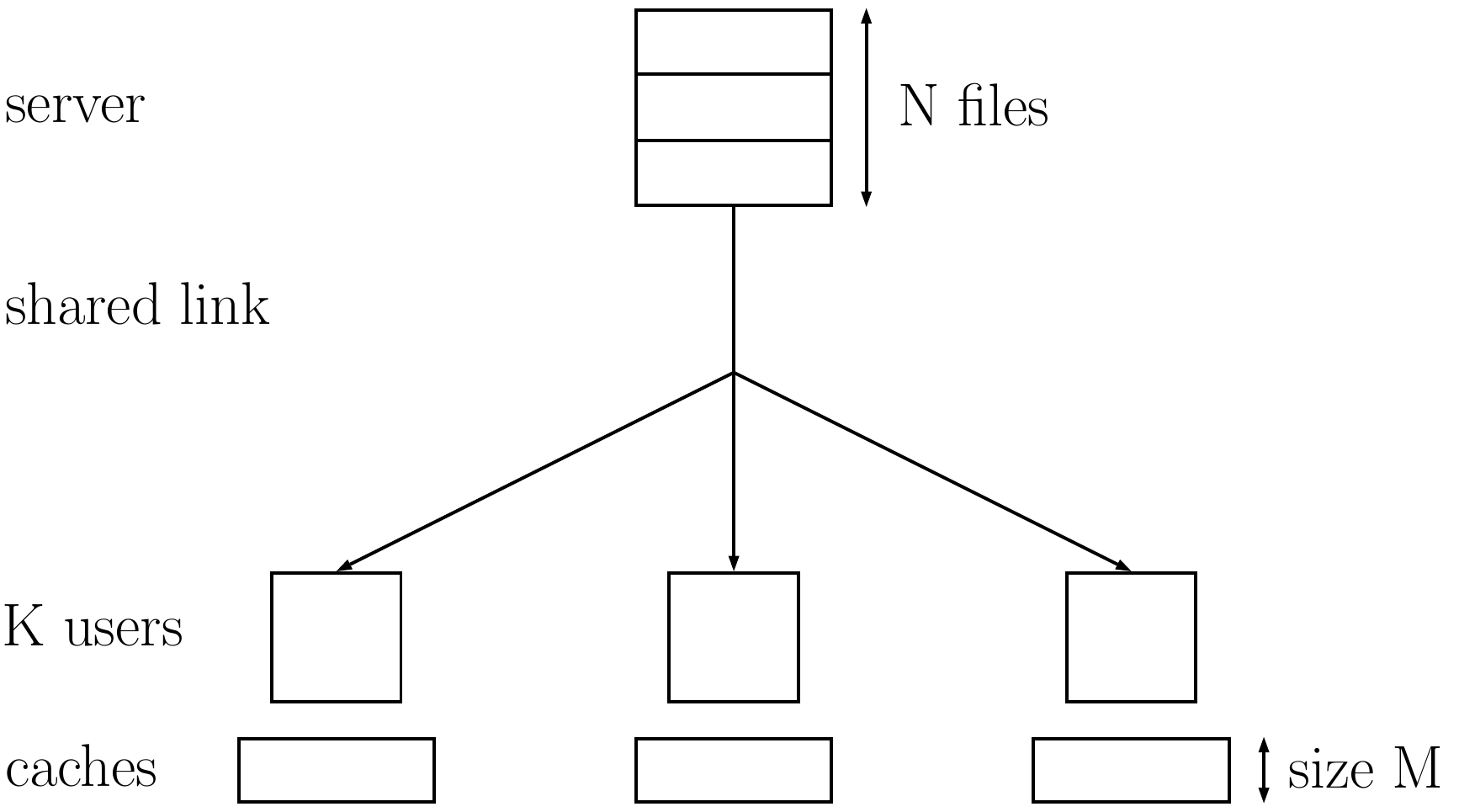}
			\caption{Caching system considered in this paper. The figure illustrates the case where $K=N=3$, and $M=1$.}
			\label{fig:net}
		\end{figure}
		
	The system operates in two phases: a placement phase and a delivery phase. 
	In the placement phase, the users are given access to the entire database. 
 Each user can fill the contents of their caches using the database without knowledge of their future demands.\footnote{This is due to the fact that in most caching systems the caching phase happens during off-peak hours, in order to improve performance during the peak hours when actual user demands are revealed.} We denote the cached content of each user $k$ by $Z_k$.
	Then in a following delivery phase,
	 only the server has access to the
	database of files, and each user requests one of the files 
	in the database. To characterize the requests from the users, we define \textit{demand} $\boldsymbol{d}=\left(d_1,...,d_K\right)$, where $d_k$ is the file requested by user $k$.  
		
	The server is informed of
	the demand and proceeds by generating a message of size $RF$ bits, denoted by $X_{\boldsymbol{d}}
	$, as a function of $W_1,...,W_N$, and sends the message over the shared link. $R$ is a fixed real number given the demand $\boldsymbol{d}$. 
	The quantities $RF$ and $R$ are referred to as the load and the rate of the shared
	link, respectively. Using the contents $Z_k$ of its cache and the message $X_{\boldsymbol{d}}$ received over the shared
	link, each user $k$ aims to reconstruct its requested file $W_{d_k}$.
	
	\subsection{Problem Definition}
	
	Based on the above framework, we define the rate-memory tradeoff using the following terminology.
	We characterize a prefetching scheme by its $K$ caching functions $\boldsymbol{\phi}=(\phi_1,...,\phi_K)$, each of which maps the file contents to the cache content of a specific user:
	\begin{align}
	Z_k=\phi_k(W_1,...,W_N) ~~~~~~~~ \forall k\in\{1,...,K\}.
	\end{align}
	Given a prefetching scheme $\boldsymbol{\phi}$, we say that a communication rate $R$ is \textit{$\epsilon$-achievable} if and only if, for every request $\boldsymbol{d}$, there exists a message $X_{\boldsymbol{d}}$ of length $RF$ that allows all users to recover their desired file $d_k$ with a probability of error of at most $\epsilon$. 
	Given parameters $N$, $K$, and $M$, we define the minimum peak rate, denoted by $R^*$, as the minimum rate that is $\epsilon$-achievable over all prefetching schemes for large $F$ and any $\epsilon>0$. Rigorously,
		\begin{align}
	R^*=\sup_{\epsilon>0} \limsup_{F\rightarrow\infty}&\min_{\boldsymbol{\phi}}\{R\ |\nonumber\\ R&\textup{ is $\epsilon$-achievable given prefetching } \boldsymbol{\phi} \} 
	\end{align}

		Similarly for the average rate, we say that a communication rate $R$ is \textit{$\epsilon$-achievable for demand} $\boldsymbol{d} $, given a prefetching scheme $\boldsymbol{\phi}$, if and only if we can create a message $X_{\boldsymbol{d}}$ of length $RF$ that allows all users to recover their desired file $d_k$ with a probability of error of at most $\epsilon$. 
	Given parameters $N$, $K$, and $M$, we define the minimum average rate, denoted by $R^*_{\textup{ave}}$, as the minimum rate over all prefetching schemes such that, we can find a function $R(\boldsymbol{d})$ that is 
	is $\epsilon$-achievable for any demand $\boldsymbol{d}$, satisfying $R^*_{\textup{ave}}=\mathbb{E}_{\boldsymbol{d}}[R(\boldsymbol{d})]$, where $\boldsymbol{d}$ is uniformly random in $\mathcal{D}=\{1,...,N\}^K$, for large $F$ and any $\epsilon>0$.

Finding the rate-memory tradeoff is essentially finding the values of $R^*$ and  $R^*_{\textup{ave}}$ as a function of $N$, $K$, and $M$. In this paper, we aim to find converse bounds that characterize  $R^*$ and $R^*_{\textup{ave}}$ within a constant factor. 
Moreover, we aim to  better characterize $R^*$ and $R^*_{\textup{ave}}$ for an important case where $N$ ls large, when $K$ and $\frac{M}{N}$ are fixed.

\subsection{Related Works}

Coded caching was originally proposed in \cite{maddah-ali12a}, where the peak rate vs. memory tradeoff was characterized within a factor of $12$. This result was later extended in \cite{niesen13}, where the minimum average rate under uniform file popularity was characterized within a factor of 72. Since then, various efforts has been made on improving these characterizations \cite{ghasemi15, lim2016information, sengupta15, DBLP:journals/corr/WangLG16}. The state of the art is an approximation within a factor of $4$  for peak rate \cite{ghasemi15} and $4.7$ for average rate~\cite{lim2016information}.

In this paper, we characterize both the peak rate and the average rate within a factor of $2.00884$, which is about a two-fold improvement upon the prior arts. This improvement is achieved by improving both the achievability scheme and the converse. Specifically, we use the achievability scheme we recently proposed in \cite{yu2016exact} to upper bound the communication rates. This upper bound strictly improves upon the communication rates achieved by \cite{maddah-ali12a} (and its relaxed version in \cite{maddah-ali13}), which was relied on by all the above works (i.e., \cite{niesen13, ghasemi15, lim2016information, sengupta15, DBLP:journals/corr/WangLG16}). It also achieves the exact optimum communication rates among all caching schemes with uncoded prefetching, for all possible values of $N$, $K$, and $M$. As a shorthand notation, we denote the peak and average rates achieved in \cite{yu2016exact} by $R_{\textup{u}}(N,K,r)$ and $R_{\textup{u,ave}}(N,K,r)$, respectively.\footnote{Recall that $r\triangleq\frac{KM}{N}$. The letter ``u'' in the subscript represents ``upper bound'', and ``uncoded prefetching''.} More precisely, we define these functions as follows.

				\begin{definition}\label{def}
						Given problem parameters $N$, $K$, $M$, and $r=\frac{KM}{N}$, we define
			\begin{align}
			    	R_{\textup{u}}(N,K,r)&=
				\frac{\binom{K}{r+1}-\binom{K-\min\{K,N\}}{r+1}}{\binom{K}{r}},\label{eq:rudef}\\
					R_{\textup{u,ave}}(N,K,r)&=\mathbb{E}_{\boldsymbol{d}}\left[
		\frac{\binom{K}{r+1}-\binom{K-N_{\textup{e}}(\boldsymbol{d})}{r+1}}{\binom{K}{r}}\right]\label{eq:arudef}
			\end{align}
				for $r\in \{0,...,K\}$,  where $\boldsymbol{d}$ is uniformly random in $\mathcal{D}=\{1,...,N\}^K$, and $N_{\textup{e}}(\boldsymbol{d})$ denotes the number of distinct requests in $\boldsymbol{d}$.\footnote{Here the letter ``e'' in the subscript represents ``effective'', given that the function $N_{\textup{e}}(\boldsymbol{d})$ can also be interpreted as the ``effective'' number of files for any demand $\boldsymbol{d}$. Specifically, for any demand $\boldsymbol{d}$, the needed communication rate stated in equation (\ref{eq:arudef}) is exactly the peak communication rate stated in equation (\ref{eq:rudef}) for a caching system with $N=N_{\textup{e}}(\boldsymbol{d})$ files.}   
				Furthermore, for general (non-integer) $r\in[0,K]$, $R_{\textup{u}}(N,K,r)$ and $R_{\textup{u,ave}}(N,K,r)$ are defined as the lower convex envelope of their values at $ r\in\{0,1,...,K\}$, respectively. Specifically, for any non-integer $r\in[0,K]$, we have\footnote{Rigorously, the fact that equations (\ref{eq:genrudef}) and (\ref{eq:genarudef}) define lower convex envelopes is due to the convexity of $R_{\textup{u}}(N,K,r)$ and $R_{\textup{u,ave}}(N,K,r)$ on $r\in\{0,1,...,K\}$. This convexity was observed in \cite{yu2016exact} and can be proved using elementary combinatorics. A short proof of the convexity of $R_{\textup{u}}(N,K,r)$ and $R_{\textup{u,ave}}(N,K,r)$ can be found in Appendix \ref{app:convexity}. }
					\begin{align}
			   	R_{\textup{u}}(N,K,r)&=\nonumber\\
			   	(r-\lfloor r\rfloor)& R_{\textup{u}}(N,K,\lceil r\rceil)+(\lceil r\rceil-r)  R_{\textup{u}}(N,K,\lfloor r\rfloor),\label{eq:genrudef}\\
					R_{\textup{u,ave}}(N,K,r)&=\nonumber\\(r-\lfloor r\rfloor)& R_{\textup{u,ave}}(N,K,\lceil r\rceil)+(\lceil r\rceil-r)  R_{\textup{u,ave}}(N,K,\lfloor r\rfloor).\label{eq:genarudef}
			\end{align}
		\end{definition}
		Given the above upper bounds, we develop improved converse bounds in this paper, which provides better characterizations for both the peak rate and the average rate.

	\section{Main Results}	\label{sec:main}

			We summarize our main results in the following theorems.

		\begin{theorem}\label{th:gconst}
			For a caching system with $K$ users, a database of $N$ files, and a local cache size of $M$ files at each user, we have
			\begin{align}
			\frac{R_{\textup{u}}(N,K,r)}{2.00884}&\leq  R^*\leq R_{\textup{u}}(N,K,r),\label{eq:ru}\\
			\frac{R_{\textup{u,ave}}(N,K,r)}{2.00884}&\leq  R^*_{\textup{ave}}\leq R_{\textup{u,ave}}(N,K,r).\label{eq:ruave}
			\end{align}
			where $R_{\textup{u}}(N,K,r)$ and $R_{\textup{u,ave}}(N,K,r)$ are defined in Definition \ref{def}.
		Furthermore, if $N$ is sufficiently large (specifically, $N\geq \frac{K(K+1)}{2}$), we have 
		\begin{align}
			\frac{R_{\textup{u}}(N,K,r)}{2}&\leq  R^*\leq R_{\textup{u}}(N,K,r),\label{eq:lru}\\
			\frac{R_{\textup{u,ave}}(N,K,r)}{2}&\leq  R^*_{\textup{ave}}\leq R_{\textup{u,ave}}(N,K,r).\label{eq:lruave}
			\end{align}
		\end{theorem}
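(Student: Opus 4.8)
The plan is to combine a known achievable scheme with a new information‑theoretic converse, and then reduce the approximation claim to an explicit comparison of two closed‑form expressions. The upper bounds $R^*\le R_{\textup{u}}(N,K,r)$ in \eqref{eq:ru},~\eqref{eq:lru} and $R^*_{\textup{ave}}\le R_{\textup{u,ave}}(N,K,r)$ in \eqref{eq:ruave},~\eqref{eq:lruave} require nothing new: by Definition~\ref{def} these are precisely the peak and average rates of the uncoded‑prefetching scheme of \cite{yu2016exact}, so they hold by construction. Everything below therefore concerns the matching lower bounds.

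For the lower bounds I would prove an \emph{enhanced cutset} converse, which is the content of Section~\ref{sec:ge}. Fix a set $\mathcal{S}$ of $s$ users and an integer $\ell\ge 1$, and present the system with $\ell$ successive demand vectors arranged so that the users in $\mathcal{S}$ collectively request $\min\{s\ell,N\}$ distinct files. From the cache contents $Z_{\mathcal{S}}$ and the $\ell$ corresponding transmissions those users must reconstruct all of those files with vanishing error, so Fano's inequality and the chain rule produce a cutset‑type inequality relating $\min\{s\ell,N\}$, $sM$, and $\ell R$. The strict improvement over the standard compound cutset bound comes from accounting more tightly for the information that the \emph{isolated} caches $Z_1,\dots,Z_s$ — each a separate, $MF$‑bit function of the database, chosen before any demand is revealed — can jointly supply across the $\ell$ rounds; making this quantitative through a Han‑type/symmetrization argument over the users of $\mathcal{S}$ and the rounds yields a strictly stronger family of inequalities indexed by $(s,\ell)$. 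Taking the best admissible $(s,\ell)$ gives a closed‑form bound $R_{\textup{lb}}(N,K,r)\le R^*$, and averaging the same construction over a uniformly random demand (after first symmetrizing the scheme over file permutations, so its rate function depends only on the demand type) gives $R^*_{\textup{ave}}\ge \mathbb{E}_{\boldsymbol{d}}[R_{\textup{lb}}(N_{\textup{e}}(\boldsymbol{d}),K,r)]$.

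The remaining — and I expect most laborious — step is purely analytic: showing $R_{\textup{u}}(N,K,r)\le 2.00884\,R_{\textup{lb}}(N,K,r)$ for every admissible $(N,K,r)$. Because $R_{\textup{u}}$ and $R_{\textup{lb}}$ are both lower convex envelopes of their values at integer $r$, it suffices to bound the ratio at $r\in\{0,1,\dots,K\}$; there $R_{\textup{u}}$ is an explicit ratio of binomial coefficients and $R_{\textup{lb}}$ an explicit maximum over $(s,\ell)$, so for each fixed $(N,K)$ the worst $r$ is a finite check, and the supremum over all $(N,K)$ is controlled by continuous/asymptotic relaxations when $N$ or $K$ is large, together with a finite residual computation. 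The supremum of the ratio equals $2.00884$, approached at a particular corner of the parameter region, and certifying that it is exactly this constant and not larger is the delicate bookkeeping. The average‑rate bound \eqref{eq:ruave} then follows from the identity $R_{\textup{u,ave}}(N,K,r)=\mathbb{E}_{\boldsymbol{d}}[R_{\textup{u}}(N_{\textup{e}}(\boldsymbol{d}),K,r)]$, the converse $R^*_{\textup{ave}}\ge \mathbb{E}_{\boldsymbol{d}}[R_{\textup{lb}}(N_{\textup{e}}(\boldsymbol{d}),K,r)]$ above, and linearity of expectation applied to the per‑demand ratio bound.

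Finally, the refinement to a factor of $2$ when $N\ge \frac{K(K+1)}{2}$ comes from the fact that in this regime the demand sequences underlying the converse can be realized with all the requested files distinct in every round that matters (no truncation by $N$ is ever needed), which sharpens $R_{\textup{lb}}$ enough that the same ratio computation now returns the constant $2$; \eqref{eq:lru} and, via the identical grouping‑and‑linearity argument, \eqref{eq:lruave} then follow. In summary, the genuinely new ingredient is the enhanced‑cutset inequality of Section~\ref{sec:ge}; once it is in hand, the rest of Theorem~\ref{th:gconst} is a somewhat involved optimization over the three parameters.
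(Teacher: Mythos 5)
Your outline gets the high‑level architecture right (achievability from \cite{yu2016exact}, a new converse, then a convexity/ratio argument at corner points, and a per‑demand‑type reduction for the average rate), but the central converse mechanism you describe is not the one the paper uses, and as described it would not deliver the factor $2.00884$. You set up $\ell$ successive demand vectors for a fixed set $\mathcal{S}$ of $s$ users and invoke a Han‑type inequality across the $\ell$ rounds — this is essentially the compound cutset bound with Han's inequality that appears in prior work (e.g.\ \cite{ghasemi15}), and that line of reasoning is precisely what was known to give a factor of $4$. The paper's Lemma~\ref{lemma:ga} is qualitatively different: it uses a \emph{single} demand $\boldsymbol d$ and a single transmission $X_{\boldsymbol d}$, and gains its strength from a recursive nesting of conditionings, telescoping $H(X_{\boldsymbol d}\,|\,Z_{\{1,\dots,k\}},W_{\{d_1,\dots,d_{k-1}\}})$ across users $k$. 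In particular, the parameters $(s,\alpha,\ell)$ in Theorem~\ref{th:general} are not ``$s$ users, $\ell$ demand rounds'': $s$ and $\alpha$ are chosen in the coefficient optimization of Lemma~\ref{lemma:stg}, and $\ell$ is the threshold index determined by condition~\eqref{eq:sml} at which the coefficients $a_x,b_x$ change regime. As a concrete failure mode, at $M=N/s$ the standard (and Han‑enhanced) compound cutset essentially degenerates, while the paper's bound still yields $R^*\ge (s-1)/2$; your proposed converse cannot reproduce this corner of $\mathcal{S}_{\textup{Lower}}$, and it is exactly these corners that the factor‑$2.00884$ comparison (Lemma~\ref{lemma:decconst}, via the surrogate $R_{\textup{dec}}$) is checked against.

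Two smaller points. First, your ``finite residual computation'' framing of the ratio step glosses over the paper's key trick of replacing $R_{\textup{u}}(N,K,r)$ by the closed‑form majorant $R_{\textup{dec}}(M)=\frac{N-M}{M}\bigl(1-(1-\tfrac{M}{N})^J\bigr)$, which is what makes the three‑case analysis (large $N/s$, small $N$, and the intermediate regime with the two numerically verified inequalities) tractable. Second, your intuition for the $N\ge \frac{K(K+1)}{2}$ refinement (``no truncation by $N$'') is heuristic but does not track the actual mechanism: in that regime condition~\eqref{eq:sml} is satisfiable with $\ell=1$, $\alpha=1$ for every $s\le K$, which collapses the converse to $R^*\ge s-\frac{s^2+s}{2}\cdot\frac{M}{N}$, and it is the case split on $\frac{KM}{N}\le 1$ (exact match at $s=K$) versus $\frac{KM}{N}>1$ (choose $s=\lfloor N/M\rfloor$ to get $R^*\ge \frac{N-M}{2M}$ against $R_{\textup{u}}\le \frac{N-M}{M}$) that produces the factor $2$.
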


			\begin{remark}
				The above theorem characterizes $R^*$ and $R^*_{\textup{ave}}$ within a constant factor of $2.00884$ for all possible values of parameters $K$, $N$, and $M$. To the best of our knowledge, this gives the best characterization to date. Prior to this work, the best proved constant factors were $4$ for peak rate \cite{ghasemi15}  and $4.7$ for average rate (under uniform file popularity) \cite{lim2016information}. Furthermore, Theorem \ref{th:gconst} characterizes $R^*$ and $R^*_{\textup{ave}}$ for large $N$ within a constant factor of $2$.
		        \end{remark}
		        
		        	\begin{remark}		
			The converse bound that we develop for proving Theorem \ref{th:gconst} also immediately results in better approximation of rate-memory  tradeoff in other scenarios, such as online caching \cite{pedarsani13}, caching with non-uniform demands \cite{niesen13}, and hierarchical caching \cite{ karamchandani14}. For example, in the case of online caching \cite{pedarsani13}, where the current approximation result is within a multiplicative factor of $24$, it can be easily shown that this factor can be reduced to  $4.01768$ using our proposed bounding techniques.
		\end{remark}
		
				\begin{remark}\label{rem:ruave}\label{rem:ru}
		 $R_\textup{u}(N,K,r)$ and $R_\textup{u,ave}(N,K,r)$, as defined in Definition \ref{def}, are the optimum peak rate and the optimum average rate that can be achieved using uncoded prefetching, as we proved in \cite{yu2016exact}. This indicates that for the coded caching problem, using uncoded prefetching schemes is within a factor of $2.00884$ optimal for both peak rate and average rate. More interestingly, we can show that even for the improved decentralized scheme we proposed in \cite{yu2016exact}, where each user fills their cache independently without coordination but the delivery scheme was designed to fully exploit the commonality of user demands, the optimum rate is still achieved  within a factor of $2.00884$ in general, and a factor of $2$ for large $N$. \footnote{This can be proved based on the fact that, in the proof of Theorem \ref{th:gconst}, we showed the communication rates of the  decentralized caching scheme we proposed in \cite{yu2016exact} (e.g., $R_{\textup{dec}}(M)$ for the peak rate) are within constant factor optimal as intermediate steps.}
		\end{remark}

		        	\begin{remark} \label{rem:twoave}
		Based on the proof idea of Theorem \ref{th:gconst}, we can completely characterize the rate-memory tradeoff for the two-user case, for any possible values of $N$ and $M$, for both peak rate and average rate.
		Prior to this work, the peak rate vs. memory tradeoff for the two-user case was characterized in \cite{maddah-ali12a} for $N\leq 2$, and is characterized in \cite{tian2016symmetry} for $N\geq 3$ very recently. However the average rate vs. memory tradeoff has never been completely characterized for any non-trivial case. In this paper, we prove that the exact optimal tradeoff for the average rate for two-user case can be achieved using the caching scheme we provided in \cite{yu2016exact} (see Appendix \ref{app:ave2}).
		\end{remark}

	    To prove the Theorem \ref{th:gconst}, we derive new converse bounds of  $R^*$ and $R^*_{\textup{ave}}$ for all possible values of $K$, $N$, and $M$. We highlight the converse bound of $R^*$ in the following theorem:
		
				\begin{theorem}
				\label{th:general}
			For a caching system with $K$ users, a database of $N$ files, and a local cache size of $M$ files at each user, $R^*$ is lower bounded by
			\begin{equation}
			R^*\geq s-1+\alpha-\frac{s(s-1)-\ell(\ell-1)+2\alpha s}{2(N-\ell+1)} M, \label{eq:th2}
			\end{equation}
			for any $s\in\{1,...,\min\{N,K\}\}$, $\alpha\in[0,1]$, where $\ell\in\{1,...,s\}$ is the minimum value such that\footnote{Such $\ell$ always exists, because when $\ell=s$, (\ref{eq:sml}) can be written as $\alpha s\leq (N-s+1)s$, which always holds true. }
			\begin{equation}\label{eq:sml}
			{\frac{s(s-1)-\ell(\ell-1)}{2}+\alpha s}\leq  (N-\ell+1)\ell.
			\end{equation}
		\end{theorem}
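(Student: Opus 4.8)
We sketch the plan. The plan is to fix an arbitrary prefetching scheme together with a rate $R$ that is $\epsilon$-achievable for every demand; it then suffices to establish~(\ref{eq:th2}) with $R^*$ replaced by such an $R$ up to terms that vanish as $\epsilon\to 0$ and $F\to\infty$, since taking the infimum over prefetching and the limits gives the claimed bound on $R^*$. Fix $s\in\{1,\dots,\min\{N,K\}\}$ and $\alpha\in[0,1]$, and let $\ell$ be the minimal element of $\{1,\dots,s\}$ satisfying~(\ref{eq:sml}); such an $\ell$ exists because $\ell=s$ always works. Since the right-hand side of~(\ref{eq:th2}) does not depend on which users are examined, we work with users $1,\dots,s$ and the cut separating them from the server. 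The argument is a genie-aided converse of \emph{enhanced cutset} type: a genie reveals the caches $Z_1,\dots,Z_s$ and a carefully chosen family of delivery messages, and the gain over the usual compound cutset bound comes from revealing the caches \emph{one at a time} and using that, being physically isolated, they cannot be replaced by a single block of $sMF$ bits, so the files that successive caches ``unlock'' can be arranged in a staircase.

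The elementary ingredient is a cache-unlocking estimate. If $\boldsymbol{d}^{(1)},\dots,\boldsymbol{d}^{(m)}$ are demands in which a fixed user $k$ requests $m$ pairwise distinct files, then (up to the decoding error) each of those files is a deterministic function of $(Z_k,X_{\boldsymbol{d}^{(i)}})$, so the whole $m$-tuple of files is a function of $Z_k$ and the corresponding $m$ delivery messages; by Fano's inequality the conditional entropy of those $m$ files, given those $m$ delivery messages, is at most $H(Z_k)+o(F)\le MF+o(F)$. This is the only place the reconstruction requirement enters, and it converts one unit of cache memory into coverage of several extra files.

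The delicate ingredient is the construction of the family of $N-\ell+1$ coupled demand vectors for the $s$ users. One arranges them so that (i) the requests of users $1,\dots,\ell$ sweep over the $N$ files as a sliding window — each such user is active on a contiguous block of rounds and jointly they request every file exactly once — and (ii) in each round the remaining $s-\ell$ users request files that have \emph{already} been recovered in the earlier rounds, with the total number of such auxiliary requests equal to $\binom{s}{2}-\binom{\ell}{2}$, together with a fractional block of $\alpha s$ further requests handled by time-sharing between two adjacent integer schedules; this last device is what produces the convex-envelope parameter $\alpha$ and hence the terms $s-1+\alpha$ and $2\alpha s$ in~(\ref{eq:th2}). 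Feasibility of this family is exactly inequality~(\ref{eq:sml}) together with the \emph{minimality} of $\ell$: (\ref{eq:sml}) is the statement that by every round enough files have already been decoded to serve the auxiliary requests, while minimality of $\ell$ keeps the coefficient of $M$ in~(\ref{eq:th2}) as small as possible. (Alternatively, one may first use monotonicity and convexity of $R^*$ in $M$ to reduce to establishing the bound at the relevant breakpoints, which simplifies the schedule.)

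Given this family, the genie argument starts from $H(W_1,\dots,W_N)=NF$ and discloses the caches in a prescribed order, each costing at most $MF$, interleaved with the delivery messages $X_{\boldsymbol{d}^{(j)}}$, each costing at most $RF$; using the cache-unlocking estimate and the sliding-window/staircase structure one checks that all $N$ files are recovered, and collecting the entropy contributions and solving the resulting linear inequality for $R$ yields exactly~(\ref{eq:th2}), with $\alpha=1$ recovering the purely integer case. The main obstacle is the construction of the preceding paragraph: one must design the staircase of requests and the order of cache revelations so that the isolation of the $s$ caches is genuinely converted into a saving — so that the bound \emph{strictly} improves the compound cutset bound for all parameter values — and so that the combinatorics of ``how many files are available by round $j$'' collapses to the single clean threshold~(\ref{eq:sml}); making this accounting tight, rather than merely valid, is where essentially all of the work lies.
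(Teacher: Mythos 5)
Your plan takes a genuinely different route from the paper's, and in its present form it is not a proof.

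The paper's proof of this theorem is built from two ingredients that your sketch does not contain. The first (Lemma~\ref{lemma:ga}) works with a \emph{single} demand $\boldsymbol d$ in which users $1,\dots,\min\{N,K\}$ request distinct files, and obtains
\[
RF\ \geq\ H(X_{\boldsymbol d}\mid Z_1)\ \geq\ \sum_{k=1}^{\min\{N,K\}} H\bigl(W_{d_k}\mid Z_{\{1,\dots,k\}},W_{\{d_1,\dots,d_{k-1}\}}\bigr)-\min\{N,K\}(1+\epsilon F)
\]
by telescoping Fano inequalities with \emph{increasing} side information, i.e.\ conditioning on one more cache \emph{and} one more recovered file at each step, all against one transmission $X_{\boldsymbol d}$. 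The second ingredient (after symmetrizing the conditional entropies over user/file permutations) is a linear-programming step (Lemma~\ref{lemma:stg}) that combines two elementary inequalities for each term with carefully designed coefficients $a_x,b_x$ so that the $H^*(Z_{\{1,\dots,x\}}\mid W_{\{1,\dots,x-1\}})$ terms telescope; the parameter $\alpha$ enters only as a fractional weight on the last summand, not as a time-shared schedule, and the value of $\ell$ is forced by requiring $a_x\in[0,1]$ and $b_\ell\geq\frac{\ell-1}{\ell}$, which is exactly condition~(\ref{eq:sml}). Your sketch has no analogue of either step.

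Instead you propose a family of $N-\ell+1$ demand vectors and a genie that reveals caches one at a time interleaved with those transmissions, using the ``cache-unlocking'' bound $H(W_{d_k^{(1)}},\dots,W_{d_k^{(m)}}\mid X_{\boldsymbol d^{(1)}},\dots,X_{\boldsymbol d^{(m)}})\le MF+o(F)$. Two concrete problems arise. First, that estimate pays one $RF$ per \emph{distinct transmission} it conditions on, whereas Lemma~\ref{lemma:ga} recovers $\min\{N,K\}$ files against a \emph{single} $RF$; this is precisely why the paper's bound strictly improves the family-of-demands style bounds of \cite{ghasemi15,sengupta15,DBLP:journals/corr/WangLG16}, and the paper's Remark after the theorem explicitly notes that those prior bounds do not imply, e.g., $R^*\geq 3$ at $K=4,N=10,M=1$. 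Your plan would need to show that the extra delivery costs are compensated, and you do not. (A family of demands $\boldsymbol d^{i,j}$ \emph{is} used in the paper, but to prove Theorem~\ref{th:2} via Lemma~\ref{lemma:3}, not Theorem~\ref{th:general}.) Second, the combinatorics of the proposed schedule do not close: with $N-\ell+1$ rounds, $\ell$ ``fresh'' users jointly making $N$ requests, and $s-\ell$ ``auxiliary'' users, the auxiliary users occupy $(s-\ell)(N-\ell+1)$ slots, which for large $N$ vastly exceeds the $\binom{s}{2}-\binom{\ell}{2}+\alpha s$ auxiliary requests you budget. You also acknowledge in your last paragraph that ``making this accounting tight\dots is where essentially all of the work lies,'' i.e.\ the decisive step is left undone. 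To turn this into a proof you would either need to reproduce the single-transmission telescoping and the $a_x,b_x$ coefficient design, or provide the demand family, the revelation order, and the entropy accounting explicitly and verify that they reproduce the corner points $\bigl(\frac{N-\ell+1}{s},\,\frac{s-1}{2}+\frac{\ell(\ell-1)}{2s}\bigr)$.
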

		
		\begin{remark}
		The above theorem improves the state of the art in various scenarios. For example, when $N$ is sufficiently large (i.e., $N\geq \frac{K(K+1)}{2}$), the above theorem gives tight converse bound for $\frac{KM}{N}\leq 1$, 
		as shown in (\ref{eq:46}).
		The above matching converse can not be proved directly using converse bounds provided in \cite{ghasemi15,   lim2016information, sengupta15, DBLP:journals/corr/WangLG16, tian2016symmetry, prem2015critical} (e.g., for $K=4$, $N=10$, and $M=1$, none of these bounds give $R^*\geq 3$).
		\end{remark}

	    \begin{remark}
		Although Theorem \ref{th:general} gives infinitely many linear converse bounds on $R^*$,
		the region of the memory-rate pair $(M,R^*)$ characterized by Theorem \ref{th:general} has a simple shape with finite corner points. Specifically, by applying the arguments used in the proof of Theorem \ref{th:gconst}, one can show that the exact bounded region given by Theorem \ref{th:general} is bounded by the lower convex envelop of points $\{(\frac{N-\ell+1}{s}, \frac{s-1}{2}+\frac{\ell(\ell-1)}{2s})\ |\ s\in\{1,...,J\}, \ell\in\{1,...,s\} \}\cup \{({0,J})\}$, where $J=\min\{N,K\}$. 
		\end{remark}

			For the case of large $N$, we can exactly characterize the values of $R^*$ and  $R^*_{\textup{ave}}$ for $K\leq 5$. We formally state this result in the following theorem:
		\begin{theorem}\label{th:exact5}
			For a caching system with $K$ users, a database of $N$ files, and a local cache size of $M$ files at each user, we have
			\begin{equation}
		R^*=R^*_{\textup{ave}}=R_{\textup{u}}(N,K,r)
			\end{equation}
			for large $N$ (i.e., $N\rightarrow+\infty$) when $K\leq 5$, where $R_\textup{u}(N,K,r)$ is defined in Definition \ref{def}.\footnote{Rigorously, we show that the maximum possible gap between 	$R^*$, $R^*_{\textup{ave}}$, and $R_{\textup{u}}(N,K,r)$ over $M\in[0,N]$ approaches $0$ as $N$ goes to infinity.}
		\end{theorem}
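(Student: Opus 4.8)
We already have $R^*_{\textup{ave}}\le R^*\le R_{\textup{u}}(N,K,r)$: the right inequality is Theorem~\ref{th:gconst}, and the left one because the average rate never exceeds the peak rate. Moreover, for fixed $K$ and $N\to\infty$ a uniform demand $\boldsymbol{d}\in\{1,\dots,N\}^{K}$ has $N_{\textup{e}}(\boldsymbol{d})=K$ with probability $\tfrac{N(N-1)\cdots(N-K+1)}{N^{K}}\to1$; on that event the summand in (\ref{eq:arudef}) equals $R_{\textup{u}}(N,K,r)$, and it always lies in $[0,R_{\textup{u}}(N,K,r)]$, so $R_{\textup{u,ave}}(N,K,r)\to R_{\textup{u}}(N,K,r)$ (note also that $R_{\textup{u}}(N,K,r)$ is itself $N$-independent once $N\ge K$, since then $\binom{K-\min\{K,N\}}{r+1}=0$). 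Hence the whole theorem reduces to the matching converse $R^*_{\textup{ave}}\ge R_{\textup{u}}(N,K,r)-o(1)$ as $N\to\infty$, which squeezes both $R^*$ and $R^*_{\textup{ave}}$ onto $R_{\textup{u}}(N,K,r)$.

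\textbf{Step 2: reduce to one linear bound per memory segment.}
As a function of $M$, $R_{\textup{u}}(N,K,r)$ is convex and piecewise linear with breakpoints at $M=\tfrac{rN}{K}$, $r\in\{0,\dots,K\}$, so it is the pointwise maximum of the $K$ affine functions extending its linear pieces. Consequently it is enough to prove, for each segment $r\in\{0,\dots,K-1\}$, a valid linear lower bound on $R^*_{\textup{ave}}$ that converges (as $N\to\infty$) to the line carrying the $r$-th piece of $R_{\textup{u}}$; taking the maximum over $r$ then recovers $R_{\textup{u}}$, and equality on each piece follows from $R^*_{\textup{ave}}\le R_{\textup{u}}$. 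The first segment $r=0$, i.e.\ $\tfrac{KM}{N}\le1$, is already delivered by Theorem~\ref{th:general}: taking $s=K$, $\alpha=1$, $\ell=1$ (admissible whenever $N\ge\tfrac{K(K+1)}{2}$, hence for large $N$) gives exactly $R^*\ge K-\tfrac{K(K+1)}{2N}M$, which passes through $(0,K)$ and $(\tfrac{N}{K},\tfrac{K-1}{2})$. So the real content is the middle segments $r=1,\dots,K-1$.

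\textbf{Step 3: a second converse, exploiting repeated demands, and the case check.}
For those segments the cut-set--style argument behind Theorem~\ref{th:general} is not tight (only within a factor $2$ for large $N$), so I would introduce a second converse that, unlike the first, takes advantage of demands in which several users request the same file. The idea: partition $\{1,\dots,N\}^{K}$ into classes according to the coincidence pattern of the $K$ requests; within each class, fix an ordering of its demand vectors and run a chained genie argument in which a fixed subset of caches together with a lengthening prefix of delivery signals must reconstruct a growing family of files — where now a single delivery signal decodes a repeated file for all users that asked for it, lowering the ``cache charge per new file'' below the cut-set value. Averaging the per-class lower bounds on $\sum_{\boldsymbol{d}}R(\boldsymbol{d})$ and dividing by $N^{K}$ yields a family of linear lower bounds on $R^*_{\textup{ave}}$ parameterized by the chosen subset of users, the class structure, and an auxiliary weight (the analogue of $\alpha$). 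For each $K\in\{2,3,4,5\}$ ($K=1$ being classical and $K=2$ being Remark~\ref{rem:twoave}) and each middle segment $r$, one then selects these parameters so that the resulting line matches the $r$-th piece of $R_{\textup{u}}$ in the limit.

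\textbf{Main obstacle.}
The hard part is designing the second converse so that it is \emph{simultaneously} tight on all middle segments — choosing, for every $(K,r)$, the right coincidence patterns and chaining order — and then verifying the combinatorial identities that make the averaged bound hit $\binom{K}{r+1}/\binom{K}{r}$ at the endpoints of each piece; this bookkeeping, not any single estimate, is where the difficulty lives. It is also exactly where the hypothesis $K\le5$ enters: for $K\ge6$ the best line this construction can produce on some middle segment provably lies strictly below $R_{\textup{u}}$, so one would need a sharper converse (or a better caching scheme), which is why the exact characterization is claimed only up to five users.
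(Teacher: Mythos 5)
Your Steps 1 and 2 are sound, and Step 1 is in fact a slightly cleaner entry point than the paper uses: by noting that $R^*_{\textup{ave}}\le R^*\le R_{\textup{u}}(N,K,r)$ and that $R_{\textup{u,ave}}\to R_{\textup{u}}$, you reduce the whole theorem to a single asymptotic converse on the average rate, whereas the paper proves the peak-rate and average-rate converses in two separate appendices. Step 2's observation that Theorem~\ref{th:general} with $s=K,\alpha=1,\ell=1$ handles the segment $r\in[0,1]$ is exactly what the paper does via~(\ref{eq:46}).

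The gap is in Step 3, and you name it yourself under ``Main obstacle'': you never actually construct the second converse bound. You describe what it would have to accomplish, but the chained genie argument, the choice of demand vectors, and the averaging that produces a \emph{valid} linear lower bound hitting each middle piece of $R_{\textup{u}}$ in the limit are precisely the content of Theorem~\ref{th:2}, Lemma~\ref{lemma:3}, and Appendices~\ref{app:proof_tm4} and~\ref{app:lm3}; the paper does not treat the cases $K=2,\dots,5$ one by one but proves a single parametric bound indexed by $n\in\{\max\{1,K-N+1\},\dots,K-1\}$, and then Lemma~\ref{lemma:2} shows it is tight for $r\ge\lceil(K-3)/2\rceil$, a range that covers $[1,K-1]$ exactly when $K\le5$. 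A sketch acknowledging that the key lemma remains to be found is not a proof of that lemma.

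There is also a tension between your Steps 1 and 3 that is worth flagging. Having reduced to bounding $R^*_{\textup{ave}}$, you are, for large $N$, effectively bounding the average rate within the all-distinct-requests type, which carries $1-o(1)$ of the probability mass. But your stated mechanism for beating the cut-set bound is that ``a single delivery signal decodes a repeated file for all users that asked for it'' --- intra-demand repetition, which never occurs in the all-distinct type. The paper's peak-rate converse (Appendix~\ref{app:lm3}) does build demands $\boldsymbol{d}^{i,j}$ with repeated requests, but its average-rate adaptation (Appendix~\ref{app:ave3}) deliberately replaces those repeats with distinct files $d^{i,j}_l=l$ so that the demands lie in the worst-case type; the gain there comes from combining delivery signals of \emph{several different} all-distinct demands that share requests across demand vectors, not from repetition within one demand. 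So even at the level of intuition, the mechanism you propose would not deliver the bound you need under your own Step 1 reduction.
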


			\begin{remark}
			As discussed in \cite{maddah-ali13}, the special case of large $N$ is important to handle asynchronous demands. More specifically, \cite{maddah-ali13} showed that asynchronous demands can be handled by splitting each file into many subfiles, and delivering concurrent subfile requests using the optimum caching schemes. In this case, we essentially need to solve the caching problem when the number of files (i.e., the subfiles) is large, but the fraction of files that can be stored at each user is fixed. 	In this paper, we completely characterize this tradeoff for systems with up to $5$ users,  for both peak rate and average rate, while in prior works, this tradeoff has only been exactly characterized in two instances: the single-user case~\cite{maddah-ali12a} and, more recently, the two-user case~\cite{tian2016symmetry}. 
			\end{remark}

		\begin{remark}
				Although Theorem \ref{th:exact5} only consider systems with up to $5$ users, the converse bounds used in its proof also tightly characterize the minimum communication rate in  many cases even for systems with more than $5$ users. For both peak rate and average rate, we can show that more than half of the convex envelope achieved by \cite{yu2016exact} are optimal for large $N$ (e.g., see  Lemma \ref{lemma:2} for peak rate).  
		\end{remark}

	    To prove Theorem \ref{th:exact5}, we state the following Theorem, which provides tighter converse bounds on $R^*$ for certain values of $N$, $K$, and $M$.

		\begin{theorem}\label{th:2}
		For a caching system with $K$ users, a database of $N$ files, and a local cache size of $M$ files at each user, $R^*$ is lower bounded by
		\begin{align}\label{eq:thm4}
  R^*\geq \begin{cases}
               \frac{2K-n+1}{n+1}-\frac{K(K+1)}{n(n+1)}\cdot \frac{M}{N}\ \ \ \ \ \  &\textup{if} \ \beta+\alpha\frac{K-2n-1}{2}\leq 0,\\
              \ \\
               \frac{2K-n+1}{n+1}-\frac{2K(K-n)}{n(n+1)}\cdot \frac{M}{N-\beta} &{\textup{otherwise}},
        \end{cases}
\end{align}
for any $n\in\{\max\{1,K-N+1\},...,K-1\}$, where $\alpha=\lfloor \frac{N-1}{K-n} \rfloor$ and $\beta=N-\alpha (K-n)$. 
		\end{theorem}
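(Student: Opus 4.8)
The plan is to derive \eqref{eq:thm4} from a genie-aided converse that refines the cut-set argument underlying Theorem~\ref{th:general}, now crucially exploiting demands in which several users request a common file. Fix $n$ in the stated range and put $u=K-n$; by the definitions of $\alpha$ and $\beta$ one has $N=\alpha u+\beta$ with $1\le\beta\le u$, and the constraint $n\ge\max\{1,K-N+1\}$ is precisely what guarantees $u\le N$ (so a block of $u$ users can be assigned pairwise-distinct files) while $n\le K-1$ keeps at least one user outside every such block. First I would specify a list of $n+1$ demand vectors, organized so that a distinguished block $\mathcal{B}$ of $u$ users is steered through the database (part of this traversal using repeated, common requests) while the $n$ users outside $\mathcal{B}$ keep asking for previously unseen files, engineered so that these $n+1$ transmissions together with a suitable subset of the caches jointly determine a set of $2K-n+1$ distinct files. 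Since $R^{*}\ge R(\boldsymbol{d})$ for every $\boldsymbol{d}$, I would sum the $n+1$ per-demand rate inequalities and apply the entropy chain rule to this genie ensemble, reaching $(n+1)R^{*}F+(\text{cache cost})\ge(2K-n+1)F-o(F)$.

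The heart of the proof, and where one improves on Theorem~\ref{th:general}, is the bound on the cache cost. I would combine two ingredients: the cache-isolation idea from the proof of Theorem~\ref{th:general} --- when the $u$ users of $\mathcal{B}$ share a demand, the joint conditional entropy of their caches given the already-determined files is strictly sub-additive in $u$, so a common-request round costs far less than $uMF$ --- and an averaging over which files are named, which introduces the $1/N$ (respectively $1/(N-\beta)$) scaling through submodularity/Han-type inequalities in the style of \cite{maddah-ali12a}. Telescoping the marginal cache costs over the recycling steps performed by $\mathcal{B}$ and adding the contributions of the $n$ outside users yields, after solving for $R^{*}$, exactly the coefficient $\frac{K(K+1)}{n(n+1)}$ in the first branch and $\frac{2K(K-n)}{n(n+1)}$ in the second. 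The split into two branches reflects the treatment of the leftover block of $\beta<u$ files: when $\beta+\alpha\frac{K-2n-1}{2}\le 0$ it is cheaper to keep all $N$ files in the accounting (so the $M$-coefficient carries $1/N$), whereas otherwise one discards those $\beta$ files and runs the argument on the clean sub-database of $N-\beta=\alpha u$ files (so the coefficient carries $1/(N-\beta)$); optimizing the single remaining degree of freedom --- how many fresh files $\mathcal{B}$ is allotted in its last, partial sweep --- one finds the optimum at an endpoint, which is what produces the two cases.

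The step I expect to be the real obstacle is the simultaneous bookkeeping inside the entropy chain: the transmissions and caches must be revealed in an order such that, at every step, the new information is at once lower-bounded by the genuinely new files it unlocks and upper-bounded by a cache term reflecting the isolation sub-additivity, and these two bounds are coupled through the choice of which users repeat which request, so making both tight at once requires a delicately balanced construction and a somewhat intricate case analysis. A minor technicality is the Fano slack: each decoding step contributes an error term of order $\epsilon F+1$, and one must check that these vanish after dividing by $F$ and letting $\epsilon\to 0$, uniformly in the (fixed) number $n+1$ of demand vectors. Once \eqref{eq:thm4} is in hand, Theorem~\ref{th:exact5} follows for each $K\le 5$ by checking that the lower convex envelope of the bounds of Theorems~\ref{th:general} and~\ref{th:2} over all admissible parameters coincides with $R_{\textup{u}}(N,K,r)$ in the limit $N\to\infty$.
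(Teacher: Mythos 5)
Your high-level picture --- construct a family of demand vectors that exploits common requests, sum per-demand Fano bounds, telescope the entropy terms, and split into two branches according to a sign condition --- is in the right spirit, but several of the concrete structural choices you propose are reversed or miscounted relative to what actually makes the argument close, and those differences are not cosmetic.

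First, the roles of ``repeated'' and ``distinct'' requests are the opposite of what you describe. The paper's key lemma (Lemma~\ref{lemma:3}) constructs demands $\boldsymbol{d}^{i,j}$, $i\in\{1,\dots,n\}$, $j\in\{1,\dots,\alpha\}$, in which a sliding block of $u=K-n$ users (positions $i+1,\dots,i+K-n$) request \emph{pairwise-distinct} files from $\{\beta+1,\dots,N\}$, while the $n$ users \emph{outside} that block, including user~1, all request the \emph{same} file (file~1). Your proposal has the block $\mathcal{B}$ doing the repeated common requests and the $n$ outside users asking for fresh files, which is backwards; the whole point of the construction is that user~1 is always decoding file~1, so $H^*(W_1|Z_1)$ appears as a common term across every demand, and the symmetrization $H^*(X_{\boldsymbol{d}^{i,j}}|W_1,Z_1)=H^*(X_{\boldsymbol{d}^{i,j}}|W_1,Z_k)$ for all $k\le i$ is what introduces the $\frac{2}{n(n+1)\alpha}$ averaging. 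Second, the count is off: the argument uses $n\alpha$ demand vectors (not $n+1$), and the weight $\frac{2}{n(n+1)\alpha}$ comes from averaging over the $\frac{n(n+1)}{2}\alpha$ admissible triples $(k,i,j)$ with $k\le i$. The number $2K-n+1$ in the bound is not a count of ``jointly determined files'' from $n+1$ transmissions; it emerges algebraically once the telescoped sum is simplified using $\beta=N-\alpha(K-n)$.

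Third, the cache-cost bound is not a Han-type submodularity argument. The paper lower-bounds $\sum_{k=1}^n H^*(Z_k,X_{\mathcal{D}_k^+}\mid W_{\{1,\dots,\beta\}})$ by a direct telescoping use of Fano's inequality on the blocks $\mathcal{S}_i$ of $\alpha$ files decodable by user $i+k$ from $X_{\mathcal{D}_k}$, together with the monotonicity $\mathcal{D}_{k}^+\subseteq\mathcal{D}_{k-1}^+$. There is no appeal to submodularity or ``isolation sub-additivity'' here (that idea is used in the proof of Theorem~\ref{th:general}, not Theorem~\ref{th:2}). Fourth, the two branches do not arise from optimizing ``how many fresh files $\mathcal{B}$ is allotted in its last partial sweep.'' After the lemma one is left with a linear expression in $H^*(W_1|Z_1)$ whose coefficient is $1-\frac{2\theta}{n(n+1)\alpha}$ with $\theta=K\beta+\frac{(K-n)(K-n-1)}{2}\alpha$; a short computation shows this coefficient is nonnegative exactly when $\beta+\alpha\frac{K-2n-1}{2}\le 0$. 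When it is nonnegative one substitutes the \emph{lower} bound $H^*(W_1|Z_1)\ge F-\frac{H^*(Z_1)}{N}$, which brings in the $1/N$ factor; when it is negative one substitutes the trivial \emph{upper} bound $H^*(W_1|Z_1)\le F$, and the remaining $H^*(Z_1)$ terms aggregate with denominator $N-\beta$. So the split is driven by the sign of a coefficient, not by a choice of how to handle the residual $\beta$ files. In short, you have the correct flavour of argument and you correctly flag the entropy bookkeeping as the hard step, but the demand construction, the weighting, the mechanism of the cache-term bound, and the origin of the two cases all need to be redone to make the calculation actually produce the stated coefficients.
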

		
				\begin{remark}
		The above theorem improves Theorem~\ref{th:general} and the state of the art in many cases. 
		For example, when $r\in\left[\left\lceil K-1-\frac{N-1}{\left\lceil\frac{2N}{K+1}\right\rceil} \right\rceil,K-1\right)$, the converse bound (\ref{eq:thm4}) given by $n=\lfloor r+1\rfloor$
		is tight and we have $R^*=R_{\textup{u}}(N,K,r)$. This result can not be proved in general using the converse bounds provided in \cite{ghasemi15,   lim2016information, sengupta15, DBLP:journals/corr/WangLG16, tian2016symmetry, prem2015critical}
		(e.g., for $K=4$, $N=10$, and $M=4$, none of these bounds give $R^*\geq 1$).
		\end{remark}

			\begin{remark}
			We numerically compare our two converse bounds (i.e., Theorem \ref{th:general} and Theorem \ref{th:2}), benchmarked against the upper bound
			$R_{\textup{u}}(N,K,r)$ we achieved in \cite{yu2016exact} under three different settings (see Fig. \ref{fig:compare}). In all these cases, the two converse bounds together provide a tight characterization: Theorem \ref{th:general} is tight for $r\leq 1$ and $r\geq K-1$, and Theorem \ref{th:2} is tight for $1\leq r\leq K-1$. The same holds true in the proof of Theorem
			\ref{th:exact5}, where the number of users is no more than 5 but the number of files is large.
		\end{remark}

 	\begin{figure}[htbp]
\centering
\begin{subfigure}{.45\textwidth}
  \centering
  \captionsetup{justification=centering}
  \includegraphics[width=0.95\linewidth]{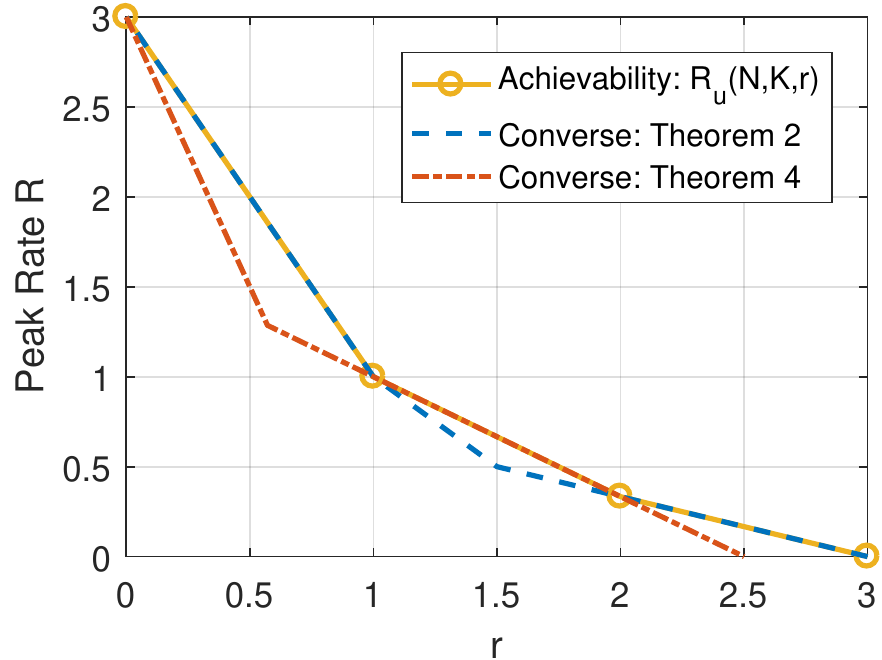}
  \caption{Rate-memory tradeoff for $K=3$, $N=6$.}
\end{subfigure}
\vspace{5mm}

\begin{subfigure}{.45\textwidth}
  \centering
  \captionsetup{justification=centering}
  \includegraphics[width=0.95\linewidth]{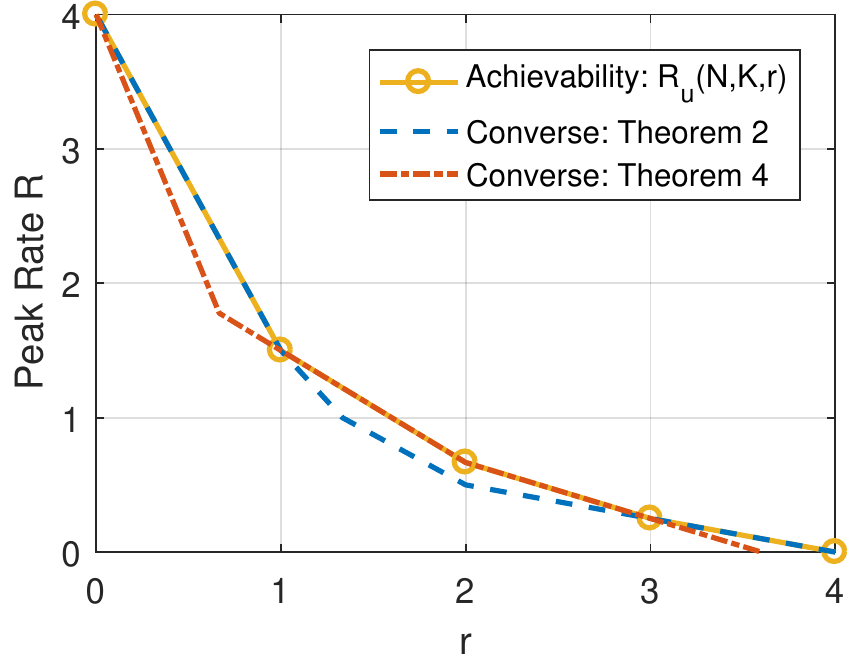}
  \caption{Rate-memory tradeoff for $K=4$, $N=10$. }
\end{subfigure}
\vspace{5mm}

\begin{subfigure}{.45\textwidth}
  \centering
  \captionsetup{justification=centering}
  \includegraphics[width=0.95\linewidth]{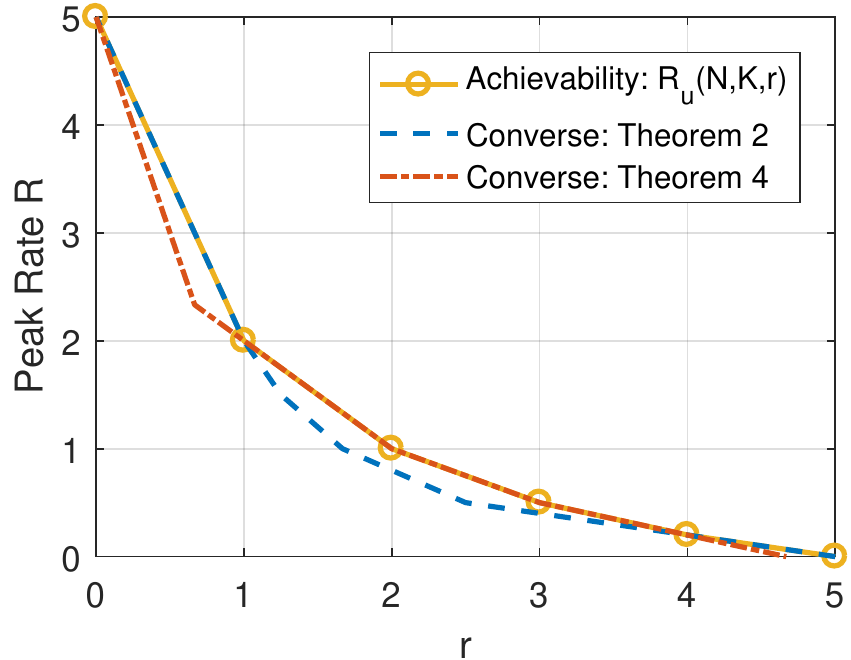}
  \caption{Rate-memory tradeoff for $K=5$, $N\rightarrow +\infty$.  }
\end{subfigure}

\vspace{3mm}
\caption{Numerical comparison among the two converse bounds presented in Theorem \ref{th:general} and Theorem \ref{th:2}, and the upper bound achieved in \cite{yu2016exact}. Our converse bounds tightly characterize the  peak rate-memory tradeoff in all three presented scenarios.}
\label{fig:compare}
\end{figure}

	In the rest of this paper, we prove Theorem \ref{th:gconst} for the peak rate in Section \ref{sec:ge}, and we prove Theorem \ref{th:general} in Section \ref{sec:gconv}.
	For brevity, we prove the rest of the results in the appendices. Specifically,  Appendix \ref{app:proof_tm3p} proves Theorem \ref{th:exact5} for the  peak rate, Appendix \ref{app:proof_tm4} proves Theorem \ref{th:2}, Appendix \ref{app:ave1} proves Theorem \ref{th:gconst} for the average rate, and 
	Appendix \ref{app:ave3} proves Theorem \ref{th:exact5} for the average rate.

	\section{Proof of Theorem \ref{th:gconst} for peak rate} \label{sec:ge}

		In this section, 
	we prove Theorem \ref{th:gconst} assuming the correctness of Theorem \ref{th:general}. The proof of Theorem \ref{th:general} can be found in Section \ref{sec:gconv}. For brevity, we only prove Theorem \ref{th:gconst} for the peak rate (i.e., inequalities (\ref{eq:ru}) and (\ref{eq:lru})) within this section. The proof for the average rate (i.e., inequalities (\ref{eq:ruave}) and (\ref{eq:lruave})) can be found in Appendix \ref{app:ave1}. 	
	
	We start by proving the general factor-of-$2.00884$ characterization for inequality (\ref{eq:ru}). Then we focus on the special case of $N\geq\frac{K(K+1)}{2}$ and prove inequality (\ref{eq:lru}). 
		As mentioned in Remark \ref{rem:ru}, the upper bounds of $R^*$ stated in Theorem \ref{th:gconst} can be proved using the caching scheme provided in \cite{yu2016exact}. Hence, it suffices to prove the lower bounds of (\ref{eq:ru}) and (\ref{eq:lru}).

	\subsection{Proof of inequality (\ref{eq:ru})}
	
	The proof of  inequality (\ref{eq:ru}) consists of $2$ steps. In Step $1$, 
		we first prove, assuming the correctness of Theorem \ref{th:general}, that the memory-rate pair $(M,R^*)$ is lower bounded by the lower convex envelope of a set of points in $\mathcal{S}_{\textup{Lower}}\cup \{({0,J})\}$, where
			\begin{align}\label{def:sl}
		\mathcal{S}_{\textup{Lower}}=&\left\{(M,R)=\left(\frac{N-\ell+1}{s}, \frac{s-1}{2}+\frac{\ell(\ell-1)}{2s}\right)\ \right| \nonumber\\ &s\in\{1,...,J\}, \ell\in\{1,...,s\} \left.\vphantom{(M,R)=\left(\frac{N-\ell+1}{s}, \frac{s-1}{2}+\frac{\ell(\ell-1)}{2s}\right)}\right\}
		\end{align}
		where  $J=\min\{N,K\}$, given parameters  $N$ and $K$.
		Then in Step $2$, we exploit the convexity of the upper bound  $R_{\textup{u}}(N,K,r)$, and prove that it is within a factor of $2.00884$ from the above converse by checking all the corner points of the envelope.
		
		For Step $1$,	we first prove that $R^*$ is lower bounded by the convex envelope.	To prove this statement, it is sufficient to show that any linear function that lower bounds all points in $\mathcal{S}_{\textup{Lower}}\cup \{({0,J})\}$, also lower bounds the point $(M,R^*)$. We prove this for any such linear function, denoted by $A+BM$, by first finding a converse bound of $R^*$ using Theorem \ref{th:general} with certain parameters $s$ and $\alpha$, and then proving that this converse bound is lower bounded by the linear function. 
		We consider the following $2$ possible cases:
		
		If $A\geq 0$, note that $(0,J)$ should be lower bounded by the linear function, so we have $A\leq J$. Thus, we can choose $s=\lceil A\rceil$, $\alpha=A-s+1$, and let $\ell$
		be the minimum value in $\{1,...,s\}$ such that 
		(\ref{eq:sml}) holds. 
		Because $\left(\frac{N-\ell+1}{s},\frac{s-1}{2}+\frac{\ell(\ell-1)}{2s}\right)\in\mathcal{S}_{\textup{Lower}}$, we have 
		\begin{align}
		    A+B\frac{N-\ell+1}{s}\leq \frac{s-1}{2}+\frac{\ell(\ell-1)}{2s}.
		\end{align}
		By the definition of $\alpha$, we have $A=s-1+\alpha$. Consequently, the slope $B$ can be upper bounded as follows:
		\begin{align}
		    B&\leq \frac{s(s-1)+\ell(\ell-1)-2As}{2(N-\ell+1)}\nonumber\\
		    &=-\frac{s(s-1)-\ell(\ell-1)+2\alpha s}{2(N-\ell+1)}.
		\end{align}
		Thus, for any $M\geq0$, we have
		\begin{align}
		    A+BM\leq s-1+\alpha-\frac{s(s-1)-\ell(\ell-1)+2\alpha s}{2(N-\ell+1)}M.
		\end{align}
		Note that the RHS of the above inequality is exactly the lower bound provided in Theorem \ref{th:general}. Hence, $A+BM\leq R^*$.

		If $A<0$, let $s=\ell=1$, we have $(N,0)\in\mathcal{S}_{\textup{Lower}}$ from (\ref{def:sl}). Hence, $A+BN\leq 0$, and for any $M\in[0,N]$ we have 
		\begin{align}
		    A+BM &=\frac{A(N-M)+(A+BN)M}{N} \leq 0. 
		\end{align}
		Obviously $R^*\geq 0$, hence we have $A+BM\leq R^*$.
		
		Combining the above two cases, we have proved that the memory-rate pair $(M,R^*)$ is lower bounded by the lower convex envelope of $\mathcal{S}_{\textup{Lower}}\cup \{({0,J})\}$. This completes the proof of Step $1$.
		
		For Step $2$, we only need to prove that the ratio of $R_{\textup{u}}(N,K,r)$ to the lower convex envelope of $\mathcal{S}_{\textup{Lower}}\cup \{({0,J})\}$ is at most $2.00884$. 
		As mentioned at the beginning of this proof, given that the upper bound $R_{\textup{u}}(N,K,r)$ is convex,\footnote{A short proof can be found in Appendix \ref{app:convexity}} this ratio can only be maximized at the corner points of the envelope, which is a subset of $\mathcal{S}_{\textup{Lower}}\cup \{({0,J})\}$.  Hence, we only need to check that  $R_{\textup{u}}(N,K,r)\leq 2.00884 R$ holds for any $(M,R)\in \mathcal{S}_{\textup{Lower}}\cup \{({0,J})\}$.
		
		To further simplify the problem, we upper bound $R_{\textup{u}}(N,K,r)$ using the following inequality, which can be easily proved using the results of \cite{yu2016exact}:\footnote{Here the upper bound $R_{\textup{dec}}(M)$ is the exact minimum communication rate needed for decentralized caching with uncoded prefetching, as proved in  \cite{yu2016exact}. When $M=0$, $R_{\textup{dec}}(M)\triangleq J$.}
			\begin{equation}\label{eq:decdef}
		    R_\textup{u} (N,K,r)\leq R_{\textup{dec}}(M)\triangleq\frac{N-M}{M} (1-(1-\frac{M}{N})^J). 
		\end{equation}
		Consequently, to prove inequality (\ref{eq:ru}), it suffices to prove the following lemma.
\begin{lemma}\label{lemma:decconst}
For any $(M,R)\in \mathcal{S}_{\textup{Lower}}\cup \{({0,J})\}$, we have ${R_{\textup{dec}}(M)} \leq  {2.00884}R$.
\end{lemma}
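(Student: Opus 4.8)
The plan is to bound the ratio $R_{\textup{dec}}(M)/R$ uniformly over all points $(M,R)\in\mathcal{S}_{\textup{Lower}}\cup\{(0,J)\}$. For the isolated point $(0,J)$, note $R_{\textup{dec}}(0)=J=R$ by definition, so the ratio is exactly $1\le 2.00884$. Hence I would focus on the generic family: fix $s\in\{1,\dots,J\}$ and $\ell\in\{1,\dots,s\}$, set $M=\frac{N-\ell+1}{s}$ and $R=\frac{s-1}{2}+\frac{\ell(\ell-1)}{2s}=\frac{s(s-1)+\ell(\ell-1)}{2s}$, and show $R_{\textup{dec}}(M)\le 2.00884\,R$. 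Substituting, $R_{\textup{dec}}(M)=\frac{N-M}{M}\bigl(1-(1-\tfrac{M}{N})^J\bigr)$ with $\frac{M}{N}=\frac{N-\ell+1}{sN}$, so $1-\frac{M}{N}=\frac{(s-1)N+\ell-1}{sN}$ and $\frac{N-M}{M}=\frac{(s-1)N+\ell-1}{N-\ell+1}$.

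The first key step is to eliminate the dependence on $N$ by taking a worst case. Since the claimed bound must hold for all $N$ (with $J=\min\{N,K\}$ depending on $N$), I would argue that the ratio is monotone in $N$ in a way that lets me pass to the limit $N\to\infty$ — in that regime $J=K$ is fixed, $\frac{M}{N}\to\frac{1}{s}$ is independent of $N$, but $\frac{N-M}{M}$ still grows, so I actually expect the binding case to be more subtle: as $N\to\infty$ with $M/N$ bounded away from $0$, $(1-M/N)^J$ stays bounded away from $1$ and $R_{\textup{dec}}(M)$ grows without bound while $R$ is fixed — so the constraint $J=\min\{N,K\}$ is essential and the worst case is \emph{not} large $N$. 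Instead I would treat $N$, $s$, $\ell$ as the free parameters with $1\le\ell\le s\le J=\min\{N,K\}$ (so in particular $s\le N$), reduce to showing
\begin{equation}
\frac{(s-1)N+\ell-1}{N-\ell+1}\cdot\frac{1-\bigl(\tfrac{(s-1)N+\ell-1}{sN}\bigr)^{J}}{1}\;\le\; 2.00884\cdot\frac{s(s-1)+\ell(\ell-1)}{2s},
\end{equation}
and then bound the left side crudely: drop the $(\cdot)^J$ term (replacing $1-(\cdot)^J$ by $1$) to get the cleaner sufficient inequality $\frac{(s-1)N+\ell-1}{N-\ell+1}\le 2.00884\cdot\frac{s(s-1)+\ell(\ell-1)}{2s}$. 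This crude bound will be adequate when $s$ is not too small; for small $s$ (especially $s=1,2,3$) I would keep the $(1-M/N)^J$ factor, where it is substantial, and handle those cases by hand.

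The second step is the case analysis on $s$. For $s=1$ we must have $\ell=1$, $M=N$, and $R_{\textup{dec}}(N)=0=R$, trivial. For $s=2$, $\ell\in\{1,2\}$, $R\in\{\tfrac12,1\}$, and I would compute $R_{\textup{dec}}$ directly at $M=\frac{N-\ell+1}{2}$, using $J\ge s=2$, and verify the ratio numerically-but-rigorously (the worst subcase over finitely many $\ell$ and over the monotone behavior in $N$). The genuinely delicate part — and what I expect to be the main obstacle — is identifying exactly which $(N,s,\ell,J)$ maximizes the ratio, since the constant $2.00884$ is clearly chosen to be barely tight at one specific configuration. I would guess the extremum occurs at $\ell=s$ (making $M=\frac{N-s+1}{s}$ and $R=\frac{s-1+ (s-1)}{2}\cdot\frac{s}{s}$... actually $R=(s-1)$ when $\ell=s$) or at $\ell=1$; after fixing which of these, the remaining optimization is over the single continuous-ish parameter $N/s$ (or $N$ with $s$ fixed), and calculus or a convexity argument pins down the maximizer. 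The final numeric check that the maximum equals $2.00884$ (and not something larger) is then a finite verification. So the skeleton is: (i) dispose of $(0,J)$ and $s=1$; (ii) reduce to the $N$-free sufficient inequality after handling small $s$ separately; (iii) reduce the two-variable family to a one-parameter extremal case via monotonicity/convexity in $\ell$; (iv) optimize that one parameter and confirm the bound $2.00884$ is attained. The main obstacle is step (iii)–(iv): proving the extremal configuration rigorously rather than by numerical inspection.
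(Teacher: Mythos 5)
Your overall plan is in the right spirit, but two things go wrong. First, the intermediate claim you rely on is false: with $s,\ell$ fixed and $M=\frac{N-\ell+1}{s}$, one has $\frac{N-M}{M}=\frac{sN}{N-\ell+1}-1\to s-1$ as $N\to\infty$, so $R_{\textup{dec}}(M)$ does \emph{not} grow without bound (it converges to $(s-1)\bigl(1-(1-1/s)^K\bigr)$). You also have the role of $s$ backwards: the factor $(1-M/N)^J$ is substantial when the base $1-M/N\approx 1-1/s$ is close to $1$, i.e.\ when $s$ is \emph{large} (and comparable to $N$ and $J$), not when $s$ is small; dropping that factor is therefore unsafe exactly in the regime where you propose to drop it. Second---and this is the bigger gap---you explicitly defer the decisive step (identifying the extremal configuration and controlling the exponential factor where it matters) and give no mechanism that actually produces the constant $2.00884$.

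The paper closes exactly this gap with a concrete three-way split. If $N\ge 9s$, the exponential may indeed be dropped; bounding $\frac{s(\ell-1)}{N-\ell+1}\le\frac{\ell-1}{8}$, using $\frac{\ell-1}{\ell}\le\frac{s-1}{s}$, and applying AM--GM yields $R_{\textup{dec}}(M)\le\bigl(2+\tfrac1{128}\bigr)R<2.00884R$. If $N<9s$ and $N\le 81$, there are finitely many triples $(N,s,\ell)$, checked exhaustively, giving $R_{\textup{dec}}(M)\le 2R$. If $N<9s$ and $N>81$, one observes $M=\frac{N-\ell+1}{s}<9$, isolates the linear-in-$N$ parts, uses explicit Taylor-type estimates for $\ln(1-M/N)$ and $e^{-x}$, and reduces the claim to two one-parameter functional inequalities in $M\in[0,9)$ (comparing $\frac{1-e^{-M}}{M}$ and a companion expression against multiples of $\sqrt{1+M^2}-M$), which are verified numerically; their nonnegative combination gives $R_{\textup{dec}}(M)\le 2.00884\,R$. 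No closed-form extremal $(N,s,\ell,J)$ is ever identified---the constant comes from a bounded one-dimensional numerical verification. To turn your sketch into a proof you would need to correct the $N\to\infty$ analysis, find a threshold such as $N\ge 9s$ governing when the exponential may be dropped, and actually carry out the $M\in[0,9)$ estimate; as written, the proposal stops short of all three.
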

The proof of Lemma \ref{lemma:decconst} can be found in Appendix \ref{app:proofdeclemma}. Assuming its correctness, we have $R_{\textup{u}}(N,K,r)\leq 2.00884 R^*$ for all possible parameter values of $N$, $K$, and $M$. This completes the proof of inequality (\ref{eq:ru}).

		\subsection{Proof of inequality (\ref{eq:lru})}				
	   Now we prove that $R^*\geq \frac{R_{\textup{u}}(N,K,r)}{2}$ holds for any $N\geq \frac{K(K+1)}{2}$.
	   In this case, we can verify that inequality (\ref{eq:sml}) holds for any $s\in\{1,...,K\}$, $\alpha=1$, and $\ell=1$. Consequently, from Theorem $\ref{th:general}$, $R^*$ can be bounded as follows:
		 	\begin{align}
		 	R^*&\geq s-1+1-\frac{s(s-1)+2s}{2(N-1+1)}M\nonumber\\
		 	&=s-\frac{s^2+s}{2}\cdot \frac{M}{N}.\label{eq:largeN}
		 	\end{align}
				Then we prove $R^*\geq \frac{R_{\textup{u}}(N,K,r)}{2}$ by considering the following $2$ possible cases:
		    	If $\frac{KM}{N}\leq1$, we have
		    	\begin{align}
		    	    R_{\textup{u}}(N,K,r)=K-\frac{K^2+K}{2}\cdot \frac{M}{N}
		    	\end{align}
		    	 as defined in Definition \ref{def}. Let $s=K$, we have the following bounds from (\ref{eq:largeN}) which tightly characterizes $R_{\textup{u}}(N,K,r)$:
		    		\begin{equation}
		    		R^*\geq K-\frac{K^2+K}{2}\cdot \frac{M}{N}=R_{\textup{u}}(N,K,r)\geq  \frac{R_{\textup{u}}(N,K,r)}{2}.\label{eq:46}
		    		\end{equation}
		    	If $\frac{KM}{N}> 1$,
		    	let $s=\lfloor \frac{N}{M} \rfloor$, we have $\frac{M}{N}\in[\frac{1}{s+1},\frac{1}{s}]$. Consequently,    	
		    	we can derive the following lower bound on $R^*$:
		    		\begin{align}
		    	R^*&\geq s-\frac{s^2+s}{2}\cdot \frac{M}{N}\nonumber\\
		    	&=\frac{N-M}{2M}+ \frac{s^2+s}{2}\cdot\frac{N}{M}\cdot \left(\frac{M}{N}-\frac{1}{s+1}\right)\cdot\left(\frac{1}{s}-\frac{M}{N}\right)\nonumber\\
		    	&\geq \frac{N-M}{2M}.
		    	\end{align}
		    	As mentioned earlier in this section, the following upper bound can be easily proved using the results of \cite{yu2016exact}:
		    	\begin{align}
		    	R_{\textup{u}}(N,K,r)&\leq \frac{N-M}{M} (1-(1-\frac{M}{N})^K).
		    	\end{align}
		    Consequently, we have $R_{\textup{u}}(N,K,r)\leq \frac{N-M}{M}\leq 2R^* $.
		    	To conclude, we have proved $R^*\geq \frac{R_{\textup{u}}(N,K,r)}{2}$ for both cases. Hence, inequality (\ref{eq:lru}) holds for large $N$ for any possible values of $K$ and $M$.

	\section{Proof of Theorem \ref{th:general}}\label{sec:gconv}
	    Before proving the converse bound stated in Theorem \ref{th:general}, we first present the following key lemma, which gives a lower bound on any $\epsilon$-achievable rate given any prefetching scheme.
			\begin{lemma}\label{lemma:ga}
				Consider a coded caching problem with parameters $N$ and $K$. Given a certain prefetching scheme, for any demand $\boldsymbol{d}$, any $\epsilon$-achievable rate $R$ is lower bounded by \footnote{By an abuse of notation, we denote a sub-array by using a set of indices as the subscript. Besides, we define $\{d_1,...,d_{k-1}\}= \emptyset$ for $k=1$. Similar convention will be used throughout this paper.} 
				\begin{align}
				R\geq &\frac{1}{F}\left(\sum_{k=1}^{\min\{N,K\}} H(W_{d_k}|Z_{\{1,...,k\}}, W_{\{d_1,...,d_{k-1}\}})\right)\nonumber\\&- \min\{N,K\}(\frac{1}{F}+\epsilon).
				\end{align}	
			\end{lemma}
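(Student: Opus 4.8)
The plan is to use a standard cut-set-style argument, but to carefully chain the decoding requirements across a nested family of caches. Fix the demand $\boldsymbol{d}$ and an $\epsilon$-achievable rate $R$, so that there is a message $X_{\boldsymbol{d}}$ of length $RF$ bits such that, for each user $k$, the pair $(Z_k,X_{\boldsymbol{d}})$ determines $W_{d_k}$ with probability of error at most $\epsilon$. First I would observe the trivial bound $RF \geq H(X_{\boldsymbol{d}})$, and then try to lower bound $H(X_{\boldsymbol{d}})$ (or, more precisely, $H(X_{\boldsymbol{d}}, Z_1, \ldots, Z_{\min\{N,K\}})$ minus the entropies of the $Z_k$'s, but it is cleaner to condition) by the sum of conditional entropies appearing in the statement.

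The key step is the following telescoping identity. Let $J = \min\{N,K\}$. Consider the quantity $H(X_{\boldsymbol{d}}, W_{d_1}, \ldots, W_{d_J} \mid Z_{\{1,\ldots,J\}})$. On one hand, since $X_{\boldsymbol{d}}$ is a function of $W_1,\ldots,W_N$ and (by Fano's inequality applied successively to users $1,\ldots,J$) each $W_{d_k}$ is determined by $(Z_k, X_{\boldsymbol{d}})$ up to a vanishing error term, this quantity is at most $H(X_{\boldsymbol{d}}) + J(1 + \epsilon F)$, roughly; more carefully, I would peel off $W_{d_1},\ldots,W_{d_J}$ one at a time using Fano, each contributing at most $1 + \epsilon F$ bits conditioned on $(Z_{\{1,\ldots,J\}}, X_{\boldsymbol{d}})$ and the previously peeled files, which leaves $H(X_{\boldsymbol{d}} \mid Z_{\{1,\ldots,J\}}) \leq H(X_{\boldsymbol{d}}) \leq RF$. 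On the other hand, expanding the same joint entropy in the order $W_{d_1}, W_{d_2}, \ldots, W_{d_J}, X_{\boldsymbol{d}}$ via the chain rule and dropping the last (nonnegative) term $H(X_{\boldsymbol{d}} \mid \cdots)$ gives
\begin{align}
H(X_{\boldsymbol{d}}, W_{d_1},\ldots,W_{d_J} \mid Z_{\{1,\ldots,J\}}) \geq \sum_{k=1}^{J} H(W_{d_k} \mid Z_{\{1,\ldots,J\}}, W_{\{d_1,\ldots,d_{k-1}\}}).
\end{align}
Finally I would replace $Z_{\{1,\ldots,J\}}$ by $Z_{\{1,\ldots,k\}}$ in the $k$-th term: since conditioning on the extra caches $Z_{\{k+1,\ldots,J\}}$ only decreases entropy, we have $H(W_{d_k} \mid Z_{\{1,\ldots,J\}}, W_{\{d_1,\ldots,d_{k-1}\}}) \leq H(W_{d_k} \mid Z_{\{1,\ldots,k\}}, W_{\{d_1,\ldots,d_{k-1}\}})$ — wait, that is the wrong direction, so instead I would keep the nested structure from the outset by running the chain rule differently: decode $W_{d_1}$ from $(Z_1, X_{\boldsymbol{d}})$, then $W_{d_2}$ from $(Z_{\{1,2\}}, X_{\boldsymbol{d}}, W_{d_1})$ — using that $Z_1$ is available alongside $Z_2$ — and so on, so that at step $k$ only $Z_{\{1,\ldots,k\}}$ and the already-recovered $W_{\{d_1,\ldots,d_{k-1}\}}$ are used. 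This produces exactly $\sum_{k=1}^{J} H(W_{d_k} \mid Z_{\{1,\ldots,k\}}, W_{\{d_1,\ldots,d_{k-1}\}})$ on the correct side.

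Concretely, the clean route is: start from $RF \geq H(X_{\boldsymbol{d}})$; add and subtract to write $H(X_{\boldsymbol{d}}) \geq I(X_{\boldsymbol{d}}; W_{d_1},\ldots,W_{d_J} \mid Z_{\{1,\ldots,J\}})$ is not quite it either, so the cleanest is to bound $RF + J(1+\epsilon F) \geq H(W_{d_1},\ldots,W_{d_J} \mid Z_{\{1,\ldots,J\}})$ — because $(Z_{\{1,\ldots,J\}}, X_{\boldsymbol{d}})$ decode all $J$ files with total error penalty at most $J(1+\epsilon F)$ by the chain rule for Fano — and then lower bound the right-hand side by $\sum_{k=1}^{J} H(W_{d_k} \mid Z_{\{1,\ldots,k\}}, W_{\{d_1,\ldots,d_{k-1}\}})$ using the chain rule together with the fact that dropping caches $Z_{\{k+1,\ldots,J\}}$ can only increase conditional entropy, i.e. $H(W_{d_k}\mid Z_{\{1,\ldots,J\}}, W_{\{d_1,\ldots,d_{k-1}\}}) \geq$ nothing useful; so actually the nested indices in the lemma are what we get for free by keeping all of $Z_{\{1,\ldots,J\}}$ and the statement's $Z_{\{1,\ldots,k\}}$ must arise because $W_{d_k}$ is decodable from just $(Z_k, X_{\boldsymbol{d}})$. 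Dividing through by $F$ and absorbing the $J(1+\epsilon F)/F = J(1/F + \epsilon)$ term yields the claimed bound. The main obstacle I anticipate is precisely this bookkeeping: getting the cache indices to come out as the increasing sets $\{1,\ldots,k\}$ rather than the full set $\{1,\ldots,J\}$, which forces the chaining to be done in the order "decode user $1$, then use its cache and file to help decode user $2$, etc." and requires care in applying Fano's inequality at each stage so that the accumulated error terms sum to exactly $J(1/F+\epsilon)$.
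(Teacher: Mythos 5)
Your proposal correctly identifies the building blocks (Fano, chain rule, cutset-style reasoning), but it never resolves the obstacle you yourself name at the end, and as written it does not establish the lemma. Your ``clean route'' is to bound the single joint entropy $H(W_{d_1},\ldots,W_{d_J}\mid Z_{\{1,\ldots,J\}})$ from above by $RF + J(1+\epsilon F)$ and from below via the chain rule. That chain-rule expansion necessarily conditions the $k$-th term on \emph{all} of $Z_{\{1,\ldots,J\}}$, giving $\sum_k H(W_{d_k}\mid Z_{\{1,\ldots,J\}}, W_{\{d_1,\ldots,d_{k-1}\}})$, and since extra conditioning can only \emph{decrease} entropy, this is a weaker lower bound than the one in the lemma --- it recovers essentially the compound cutset bound, which is precisely what this lemma is meant to improve upon. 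You notice this (``wait, that is the wrong direction'') and sketch the right idea --- chain with the growing cache sets $\{1,\ldots,k\}$ --- but you never write down a valid inequality that realizes it; the final paragraph circles back to the $Z_{\{1,\ldots,J\}}$ version and asserts, incorrectly, that ``the nested indices in the lemma are what we get for free by keeping all of $Z_{\{1,\ldots,J\}}$.''

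The missing idea is to telescope on $H(X_{\boldsymbol{d}}\mid \cdot)$ rather than on a fixed joint entropy. For each $k$, Fano (with extra conditioning, which only helps) gives $H(W_{d_k}\mid Z_{\{1,\ldots,k\}}, X_{\boldsymbol{d}}, W_{\{d_1,\ldots,d_{k-1}\}})\le 1+\epsilon F$, which rearranges to
\begin{align}
H(X_{\boldsymbol{d}}\mid Z_{\{1,\ldots,k\}}, W_{\{d_1,\ldots,d_{k-1}\}}) \ge\; & H(W_{d_k}\mid Z_{\{1,\ldots,k\}}, W_{\{d_1,\ldots,d_{k-1}\}}) \nonumber\\
& + H(X_{\boldsymbol{d}}\mid Z_{\{1,\ldots,k\}}, W_{\{d_1,\ldots,d_{k}\}}) - (1+\epsilon F).
\end{align}
The residual term $H(X_{\boldsymbol{d}}\mid Z_{\{1,\ldots,k\}}, W_{\{d_1,\ldots,d_{k}\}})$ is then bounded below by $H(X_{\boldsymbol{d}}\mid Z_{\{1,\ldots,k+1\}}, W_{\{d_1,\ldots,d_{k}\}})$ (adding one more cache to the conditioning), which is exactly the left-hand side of the $(k{+}1)$-st inequality. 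Summing over $k=1,\ldots,J$, dropping the final nonnegative residual, and using $RF\ge H(X_{\boldsymbol{d}})\ge H(X_{\boldsymbol{d}}\mid Z_1)$ gives the lemma. It is this recursion on $H(X_{\boldsymbol{d}}\mid\cdot)$ --- interpreting the residual as the load on an enhanced system where $Z_{\{1,\ldots,k\}}$ and $W_{\{d_1,\ldots,d_k\}}$ are public, and applying a fresh single-user cut each time --- that delivers the nested index sets $\{1,\ldots,k\}$ and the strict improvement over the compound cutset bound. Your proposal as written stops short of this step.
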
 
	The above lemma is developed based on the idea of enhancing the cutset bound, which is further explained in the proof of this lemma in Appendix \ref{app:lemmaga}. One can show that this approach strictly improves the compound cutset bound, which was used in most of the prior works.  We now continue to prove Theorem \ref{th:general} assuming the correctness of Lemma \ref{lemma:ga}.
	
	The rest of the proof consists of two steps. In Step $1$, we exploit the homogeneity of the problem, and derive a symmetrized version of the converse presented in Lemma \ref{lemma:ga}. Then in Step $2$, we derive the converse bound in Theorem \ref{th:general}, which is independent of the prefetching scheme, by essentially minimize the symmetrized converse over all possible designs.
	
		For Step $1$, we observe that the caching problem proposed in this paper assumes that all users has the same cache size, and all files are of the same size. To fully utilize this homogeneity, we define the following useful notations. 		
			For any positive integer $i$, we denote the set of all permutations of $\{1,...,i\}$ by $\mathcal{P}_i$. For any set $\mathcal{S}\subseteq \{1,...,i\}$ and any permutation $p\in \mathcal{P}_i$, we define 
			   $ p\mathcal{S}=\{p(s)\ |\ s\in{S}\}$. For any subsets $\mathcal{A}\subseteq \{1,...,N\}$ and $\mathcal{B}\subseteq \{1,...,K\}$, we define
			\begin{align}
			    H^*(W_{\mathcal{A}},Z_{\mathcal{B}})\triangleq \frac{1}{N!K!} \sum_{p\in \mathcal{P}_N, q\in \mathcal{P}_K} H(W_{p\mathcal{A}},Z_{q\mathcal{B}}).
			\end{align}
			Similarly, we define the same notation for conditional entropy in the same way. We can verify that the functions defined above satisfies all Shannon's inequalities. I.e., for any sets of random variables $\mathcal{A}$, $\mathcal{B}$ and $\mathcal{C}$, we have
			\begin{align}
			    H^*(\mathcal{A}|\mathcal{B})\geq H^*(\mathcal{A}|\mathcal{B}, \mathcal{C}).
			\end{align}
			
			Note that from the homogeneity of the problem, for any $\epsilon$-achievable rate $R$, Lemma \ref{lemma:ga} holds for any demands, under any possible relabeling of the users. Thus, by considering the class of demands where at least $\min\{N,K\}$ files are requested, we have
			\begin{align}
				R\geq &\frac{1}{F}(\sum_{k=1}^{\min\{N,K\}} H(W_{q(k)}|Z_{\{p(1),...,p(k)\}}, W_{\{q(1),...,q(k-1)\}}))\nonumber\\&- \min\{N,K\}(\frac{1}{F}+\epsilon)
				\end{align}	
			 for any $p\in \mathcal{P}_K$ and $q\in \mathcal{P}_N$. Averaging the above bound over all possible $p$ and $q$, we have
				\begin{align}
				R\geq &\frac{1}{F}(\sum_{k=1}^{\min\{N,K\}} H^*(W_{k}|Z_{\{1,...,k\}}, W_{\{1,...,k-1\}}))\nonumber\\&- \min\{N,K\}(\frac{1}{F}+\epsilon).
				\end{align}
		Recall that $R^*$ is defined to be the minimum $\epsilon$-achievable rate over all prefetching scheme $\phi$ for large $F$ for any $\epsilon>0$, we have
				\begin{align}
				R^*\geq&\sup_{\epsilon>0} \limsup_{F\rightarrow\infty}\min_{\boldsymbol{\phi}}\nonumber\\&\{ \frac{1}{F}(\sum_{k=1}^{\min\{N,K\}} H^*(W_{k}|Z_{\{1,...,k\}}, W_{\{1,...,{k-1}\}}))\nonumber\\&- \min\{N,K\}(\frac{1}{F}+\epsilon)\}\nonumber\\
				=&\sup_{\epsilon>0} \limsup_{F\rightarrow\infty}\min_{\boldsymbol{\phi}}\nonumber\\&\{ \frac{1}{F}(\sum_{k=1}^{\min\{N,K\}} H^*(W_{k}|Z_{\{1,...,k\}}, W_{\{1,...,{k-1}\}})\}\nonumber\\
				\geq& \inf_{F\in\mathbb{N_+}}\min_{\boldsymbol{\phi}}\nonumber\\&\{ \frac{1}{F}(\sum_{k=1}^{\min\{N,K\}} H^*(W_{k}|Z_{\{1,...,k\}}, W_{\{1,...,{k-1}\}})\}. 
				\end{align}
				Now we have derived a symmetrized version of the converse bound. To simplify the discussion, we define $R_{\textup{A}}(F,\boldsymbol{\phi})=\frac{1}{F}\sum\limits_{k=1}^{\min\{N,K\}} H^*(W_{k}|Z_{\{1,...,k\}}, W_{\{1,...,{k-1}\}})$. Consequently,
				\begin{align}
				R^*&\geq \inf_{F\in\mathbb{N_+}}\min_{\boldsymbol{\phi}} R_{\textup{A}}(F,\boldsymbol{\phi}). 
				\end{align}
				
				For Step $2$, as mentioned previously in this proof, to derive the converse bound presented in  Theorem \ref{th:general}, we aim to minimize the symmetrized converse $R_{\textup{A}}(F,\boldsymbol{\phi})$ over all prefetching scheme $\boldsymbol{\phi}$. Moreover, we need to prove that it is no less than the RHS of (\ref{eq:th2})  for any parameters $s$ and $\alpha$. We present the following lemma, which essentially solves this problem.

\begin{lemma}\label{lemma:stg}
For any parameters  $s\in\{1,...,\min\{N,K\}\}$, $\alpha\in[0,1]$, and any prefetching scheme $\boldsymbol{\phi}$, we have
\begin{align}\label{ineq:p2l}
     R_{\textup{A}}(F,\boldsymbol{\phi})\geq s-1+\alpha-\frac{s(s-1)-\ell(\ell-1)+2\alpha s}{2(N-\ell+1)} M,
\end{align}
 where $\ell\in\{1,...,s\}$ is the minimum value such that 
			\begin{equation}
			{\frac{s(s-1)-\ell(\ell-1)}{2}+\alpha s}\leq  (N-\ell+1)\ell.
			\end{equation}
\end{lemma}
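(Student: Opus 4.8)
The plan is to lower bound $R_{\textup{A}}(F,\boldsymbol{\phi}) = \frac{1}{F}\sum_{k=1}^{\min\{N,K\}} H^*(W_k \mid Z_{\{1,\dots,k\}}, W_{\{1,\dots,k-1\}})$ by keeping only the first $s$ terms of the sum (dropping the rest, which are nonnegative), and then carefully bounding each retained term $H^*(W_k \mid Z_{\{1,\dots,k\}}, W_{\{1,\dots,k-1\}})$ using two complementary estimates. The first estimate is the trivial one: since $W_k$ is uniform on $\{1,\dots,2^F\}$ and conditioning only reduces entropy beyond the part that is ``paid for'' by the caches, we get $H^*(W_k \mid \dots) \geq F - (\text{contribution of } Z_1,\dots,Z_k \text{ to } W_k)$; summed over $k$ this would give roughly $sF$ minus a term controlled by the total cache budget $kMF$ at stage $k$. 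The second estimate, which kicks in once $k$ is large enough that the caches can ``cover'' the new file, replaces $F$ by the weaker bound coming from the parameter $\alpha$ and the fact that there are only $N$ files total (so the incremental entropy $H^*(W_k \mid W_{\{1,\dots,k-1\}})$ is limited once we have already conditioned on many files and will be further reduced by the caches). The threshold between the two regimes is exactly the index $\ell$ defined by the minimality condition in (\ref{eq:sml}): for $k \leq \ell$ one uses the first estimate, and for $\ell < k \leq s$ one uses the second; the fractional parameter $\alpha$ is distributed so that we actually bound $s-1+\alpha$ terms' worth rather than an integer number.

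First I would set up the symmetrized entropy inequalities: using that $H^*$ satisfies all Shannon inequalities, I would write, for each $k$,
\begin{align}
H^*(W_k \mid Z_{\{1,\dots,k\}}, W_{\{1,\dots,k-1\}}) \geq H^*(W_{\{1,\dots,k\}} \mid Z_{\{1,\dots,k\}}) - H^*(W_{\{1,\dots,k-1\}} \mid Z_{\{1,\dots,k-1\}}),
\end{align}
so that the sum telescopes partially, and simultaneously I would track the ``submodularity'' bound $H^*(W_k \mid Z_{\{1,\dots,k\}}, W_{\{1,\dots,k-1\}}) \leq H^*(W_k \mid Z_{\{1,\dots,k-1\}}, W_{\{1,\dots,k-1\}})$, i.e. dropping the last cache. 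The key combinatorial input is that after accounting for the caches $Z_1,\dots,Z_k$ (total size $kMF$ bits), and after symmetrizing, the ``uncovered'' portion of $W_1,\dots,W_k$ must still be recoverable, and symmetry forces each file to be covered equally; this yields, term by term, an affine-in-$M$ lower bound whose coefficient of $M$ is a weighted sum of the $\frac{k}{N-\ell+1}$-type ratios. I would then sum these, using the arithmetic identity $\sum_{k=1}^{s} (k-1) = \frac{s(s-1)}{2}$ and its truncated version $\sum_{k=\ell}^{s}(k-1) = \frac{s(s-1)-\ell(\ell-1)}{2}$, together with the $\alpha$-weighting on the boundary term, to assemble exactly the right-hand side $s-1+\alpha - \frac{s(s-1)-\ell(\ell-1)+2\alpha s}{2(N-\ell+1)} M$.

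The hard part will be choosing the correct ``covering'' inequality that simultaneously (i) is valid for an arbitrary, possibly coded, prefetching scheme $\boldsymbol{\phi}$, (ii) survives the symmetrization cleanly (so that the bound depends only on $M$, $N$, $s$, $\ell$, $\alpha$ and not on the fine structure of $\boldsymbol{\phi}$), and (iii) gives precisely the claimed affine coefficient rather than something weaker. Concretely, the delicate estimate is to show that $\sum_{k=1}^{t} H^*(W_k \mid Z_{\{1,\dots,k\}}, W_{\{1,\dots,k-1\}})$ — or a suitable variant — loses at most $\frac{(\text{something})}{N-\ell+1} \cdot tMF$ compared to $tF$, where the denominator $N-\ell+1$ (rather than $N$) is what makes the bound tight in the relevant regime; getting this denominator right requires exploiting that once we condition on $W_{\{1,\dots,\ell-1\}}$, only $N-\ell+1$ files remain ``active,'' and the caches' useful content about the active files, after symmetrization over all $N!$ file-relabelings, is spread over those $N-\ell+1$ files. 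I would verify the choice of $\ell$ via (\ref{eq:sml}) is exactly the break-even point where switching from the ``$F$ per term'' estimate to the ``$\alpha$-capped'' estimate becomes advantageous, and then check that the two pieces glue to give the stated closed form. Finally I would confirm the $F \to \infty$ (equivalently, the $1/F$ error-term) cleanup is harmless, since all the $\frac{1}{F}$ and $\epsilon$ terms were already absorbed in the passage from Lemma \ref{lemma:ga} to $R_{\textup{A}}(F,\boldsymbol{\phi})$.
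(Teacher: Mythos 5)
Your plan has the right skeleton --- retain the first $s$ terms (with the $s$-th one weighted by $\alpha$), split the argument at a threshold index $\ell$, and use the identity $\sum_{k=\ell}^{s}(k-1)=\frac{s(s-1)-\ell(\ell-1)}{2}$ to assemble the numerator --- but the core mechanism is missing, and you flag this yourself (``the hard part will be choosing the correct covering inequality''). The telescoping estimate you actually write down, $H^*(W_k\mid Z_{\{1,\dots,k\}},W_{\{1,\dots,k-1\}}) \geq H^*(W_{\{1,\dots,k\}}\mid Z_{\{1,\dots,k\}}) - H^*(W_{\{1,\dots,k-1\}}\mid Z_{\{1,\dots,k-1\}})$, collapses upon summation to $H^*(W_{\{1,\dots,s\}}\mid Z_{\{1,\dots,s\}}) \geq sF - sMF$, i.e.\ the cut-set bound, which is exactly what the lemma is meant to improve upon. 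Nothing in your sketch actually produces the $N-\ell+1$ denominator; the claim that ``symmetry forces each file to be covered equally'' is heuristic, not an inequality you can invoke for arbitrary coded $\boldsymbol{\phi}$.

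The missing chain in the paper's proof is the following. For each $k$, rewrite the term via the exact identity $H^*(W_k\mid Z_{\{1,\dots,k\}},W_{\{1,\dots,k-1\}}) = F - H^*(Z_{\{1,\dots,k\}}\mid W_{\{1,\dots,k-1\}}) + H^*(Z_{\{1,\dots,k\}}\mid W_{\{1,\dots,k\}})$, and bound the last piece by a Han-type inequality on the \emph{symmetrized cache entropies}, $H^*(Z_{\{1,\dots,k\}}\mid W_{\{1,\dots,k\}}) \geq \frac{k}{k+1}H^*(Z_{\{1,\dots,k+1\}}\mid W_{\{1,\dots,k\}})$, which links stage $k$ to stage $k+1$. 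For $\ell\le k\le s-1$ one takes a convex combination of this with the averaging bound $H^*(W_k\mid\cdot)\geq F - \frac{H^*(Z_{\{1,\dots,k\}}\mid W_{\{1,\dots,k-1\}})}{N-k+1}$, using mixing coefficients $a_x,b_x$ (defined explicitly from $s,\alpha,N$) chosen so that the coefficients of $H^*(Z_{\{1,\dots,k\}}\mid W_{\{1,\dots,k-1\}})$ telescope across $k$, leaving only a residual $\bigl(\tfrac{\ell-1}{\ell}-b_\ell\bigr)H^*(Z_{\{1,\dots,\ell\}}\mid W_{\{1,\dots,\ell-1\}})$; the \emph{minimality} of $\ell$ is exactly what makes this coefficient non-positive, so the residual can be absorbed via $H^*(Z_{\{1,\dots,\ell\}}\mid\cdot)\leq \ell MF$. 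Without this combination --- the entropy identity, Han's inequality on the $Z$'s, and the tuned telescoping coefficients --- you cannot reach the stated coefficient of $M$, so as written the proposal has a genuine gap at the decisive step.
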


The proof of Lemma \ref{lemma:stg} can be found in Appendix \ref{app:plstg}. Note that the lower bound in the above lemma is identical to the converse in Theorem \ref{th:general}. Assuming its correctness, then given any $s$ and $\alpha$, we can bound $R^*$ as follows:
		 	\begin{align}
				R^*&\geq \inf_{F\in\mathbb{N_+}}\min_{\boldsymbol{\phi}}R^*_{\textup{A}}(F,\phi) \nonumber\\
				&\geq (s-1+\alpha)- \frac{s(s-1)-\ell(\ell-1)+2\alpha s}{2(N-\ell+1)}M.
				\end{align}
			This completes the proof of Theorem \ref{th:general}.

\section{conclusion}

In this paper, we developed novel converse bounding techniques for caching networks, and characterized the rate-memory tradeoff of the basic bottleneck caching network within a factor of $2.00884$ for both the peak rate and the average rate. This is approximately a two-fold improvement  with  respect  to  the  state  of  the  art.  We also provided  tight  characterization of rate-memory tradeoff for systems with no more than $5$ users,  when the number of files is large.
		The results of this paper can also be used to improve the approximation of rate-memory tradeoff in several other settings, such as online caching, caching with non-uniform demands, and hierarchical caching.		
		
		\appendices

	\section{Proof of Theorem \ref{th:exact5}  for peak rate}\label{app:proof_tm3p}
					
					In this section, we prove Theorem \ref{th:exact5} assuming the correctness of Theorem \ref{th:2}. 
		The proof of Theorem \ref{th:2} can be found in Appendix \ref{app:proof_tm4}. For brevity, we only prove Theorem \ref{th:exact5} for the peak rate (i.e., $R^*=R_{\textup{u}}(N,K,r)$ for large $N$) within this section. The proof for the average rate (i.e., $R^*_{\textup{ave}}=R_{\textup{u}}{(N,K,r)}$ for large $N$) can be found in Appendix \ref{app:ave1}. 	
	
	As mentioned  previously, the rate $R_{\textup{u}}(N,K,r)$ can be exactly achieved using the caching scheme proposed in \cite{yu2016exact}. Hence, to prove Theorem \ref{th:exact5}, it is sufficient to show that $R^*\geq R_{\textup{u}}(N,K,r)$ for large $N$ (i.e., $N\rightarrow+\infty$) when $K\leq 5$. This statement can be easily proved using the following lemma:
	\begin{lemma}\label{lemma:2}
	For a caching problem with parameters $K$, $N$, and $M$, we have $R^*\geq R_{\textup{u}}(N,K,r)$ for large $N$, if $r\leq 1$ or $r\geq \lceil\frac{K-3}{2}\rceil$.
	\end{lemma}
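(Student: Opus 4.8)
The plan is to use Theorem~\ref{th:general} to cover the regime $r\leq 1$ and Theorem~\ref{th:2} to cover the regime $r\geq\lceil\frac{K-3}{2}\rceil$, and in each case match the upper bound $R_{\textup{u}}(N,K,r)$ in the limit $N\to\infty$. Recall that for large $N$ we have $\min\{N,K\}=K$, so $J=K$, and $R_{\textup{u}}(N,K,r)$ is the lower convex envelope of the points $\big(r, \frac{\binom{K}{r+1}}{\binom{K}{r}}\big)=\big(r,\frac{K-r}{r+1}\big)$ for $r\in\{0,\dots,K\}$ (using $N_{\textup{e}}(\boldsymbol d)=K$ a.s. and $\binom{K-N}{r+1}=0$ for large $N$). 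Since the rate $R_{\textup{u}}$ is achievable by \cite{yu2016exact}, it suffices to prove a matching converse $R^*\geq R_{\textup{u}}(N,K,r)$ as $N\to\infty$.

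For the regime $r\leq 1$: I would plug $s=K$, $\alpha=1$, $\ell=1$ into Theorem~\ref{th:general}. As $N\to\infty$, inequality (\ref{eq:sml}) with $\ell=1$ reads $\frac{K(K-1)}{2}+K\leq N K$, which holds for $N$ large, so $\ell=1$ is indeed the minimizing value; then (\ref{eq:th2}) gives $R^*\geq K-\frac{K^2+K}{2}\cdot\frac{M}{N}=K-\frac{K(K+1)}{2}r\cdot\frac{1}{K}$. Comparing with $R_{\textup{u}}(N,K,r)$, which for $r\in[0,1]$ is the line segment from $(0,K)$ to $(1,\frac{K-1}{2})$, i.e.\ $R_{\textup{u}}=K-\frac{K+1}{2}r=K-\frac{K(K+1)}{2}\cdot\frac{M}{N}$, we see the converse matches the upper bound exactly in the limit. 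Hence $R^*=R_{\textup{u}}(N,K,r)$ for $r\leq 1$ as $N\to\infty$.

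For the regime $r\geq\lceil\frac{K-3}{2}\rceil$: here I would use Theorem~\ref{th:2}. For each integer $n$ with $\lceil\frac{K-3}{2}\rceil\leq n\leq K-1$, as $N\to\infty$ we get $\alpha=\lfloor\frac{N-1}{K-n}\rfloor\to\infty$ and $\beta=N-\alpha(K-n)$ stays bounded; in particular the condition $\beta+\alpha\frac{K-2n-1}{2}\leq 0$ is equivalent, for large $N$, to $K-2n-1<0$, i.e.\ $n\geq\lceil\frac{K-1}{2}\rceil$. I would take the converse bound of Theorem~\ref{th:2} at the relevant value of $n$ (with $n=\lfloor r+1\rfloor$, as suggested in the remark following Theorem~\ref{th:2}), and check that as $N\to\infty$ it converges to the line through the two consecutive corner points $\big(n-1,\frac{K-n+1}{n}\big)$ and $\big(n,\frac{K-n}{n+1}\big)$ of the envelope $R_{\textup{u}}(N,K,r)$, using $M/N\to r/K$. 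The algebra: for $n\geq\lceil\frac{K-1}{2}\rceil$ the first branch gives $R^*\geq\frac{2K-n+1}{n+1}-\frac{K(K+1)}{n(n+1)}\cdot\frac{M}{N}$; one verifies that this line passes through the two corner points above. For the remaining values $\lceil\frac{K-3}{2}\rceil\leq n<\lceil\frac{K-1}{2}\rceil$ (which is at most one extra value of $n$, and the condition $r\geq\lceil\frac{K-3}{2}\rceil$ is precisely calibrated so that only this range is needed), I would use the second branch and the bounded-ness of $\beta$ to take the limit. Finally, since $R^*$ is bounded below by every such line and $R_{\textup{u}}(N,K,r)$ is by definition the lower convex envelope of the corner points, the pointwise supremum of these lines equals $R_{\textup{u}}(N,K,r)$ on the whole interval $r\in[\lceil\frac{K-3}{2}\rceil,K]$, giving $R^*\geq R_{\textup{u}}(N,K,r)$ there.

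The main obstacle I anticipate is the second regime: verifying that the two branches of Theorem~\ref{th:2}, after taking $N\to\infty$ with $\alpha\to\infty$ and $\beta$ bounded, interpolate exactly the consecutive corner points of the convex envelope, and confirming that the index range $n\geq\lceil\frac{K-3}{2}\rceil$ is exactly what is needed so that these bounds (together with the $r\le 1$ bound) tile the full parameter range $[0,K]$ without a gap. This requires carefully tracking the floor/ceiling in $\alpha,\beta$ and the sign of $K-2n-1$, and identifying $\lim_{N\to\infty}\frac{M}{N-\beta}=\lim_{N\to\infty}\frac{M}{N}=\frac{r}{K}$; the rest is a finite check over $n$, which for $K\leq 5$ is a handful of cases but which we carry out for general $K$ to get the stated "more than half the envelope" conclusion.
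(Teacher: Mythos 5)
Your overall plan matches the paper's proof: Theorem~\ref{th:general} with $s=K$, $\alpha=1$, $\ell=1$ for $r\leq 1$, then Theorem~\ref{th:2} with $n=\lfloor r+1\rfloor$ for the middle range, splitting on the sign of $K-2n-1$ and using the boundedness of $\beta$ to make the second branch converge to the envelope. The algebra you outline (the first-branch line passing through the two consecutive corner points, and $\lim_{N\to\infty}M/(N-\beta)=r/K$) is exactly what the paper checks. However, there are two concrete gaps you should patch.

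First, the regime $r\in[K-1,K]$ is not covered by your Theorem~\ref{th:2} argument. With $n=\lfloor r+1\rfloor$ you would get $n=K$, which is outside the admissible range $n\in\{1,\dots,K-1\}$; and even taking $n=K-1$, the resulting line has the wrong slope on $(K-1,K]$ (it passes through $(K-1,\tfrac{1}{K})$ but drops faster than the envelope $1-\tfrac{M}{N}$), so it does not match $R_{\textup{u}}$ there. The paper handles $r\geq K-1$ with a separate, short appeal to Theorem~\ref{th:general} at $s=1$, $\alpha=1$, which gives $R^*\geq 1-\tfrac{M}{N}$ exactly. You need to add that case.

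Second, your threshold for the first branch is off by one when $K$ is odd. You state that $\beta+\alpha\tfrac{K-2n-1}{2}\leq 0$ holds for large $N$ iff $n\geq\lceil\tfrac{K-1}{2}\rceil$; but when $K$ is odd and $n=\tfrac{K-1}{2}=\lceil\tfrac{K-1}{2}\rceil$ we get $K-2n-1=0$, so the condition collapses to $\beta\leq 0$, which fails in general. That value of $n$ must therefore go to the second branch (as the paper does), and there you should additionally verify that with $K-2n-1=0$ the second-branch coefficient $\tfrac{2K(K-n)}{n(n+1)}$ equals $\tfrac{K(K+1)}{n(n+1)}$, so the bound converges to the correct envelope line as $N\to\infty$. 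Your own "at most one extra value of $n$" remark points at this case but places it on the wrong side of the boundary; with the sign fixed and the $r\geq K-1$ case added, the proof is complete and identical in substance to the paper's.
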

		Assuming the correctness of Lemma \ref{lemma:2}, and noting that the condition in Lemma \ref{lemma:2} (i.e.,  $r\leq 1$ or $r\geq \lceil\frac{K-3}{2}\rceil$) always holds true for $K\leq 5$, we have $R^*\geq R_{\textup{u}}(N,K,r)$ for large $N$ and for all possible values of $M$, in any caching system with no more than $5$ users.
	Hence, to prove Theorem \ref{th:exact5}, it suffices to prove Lemma \ref{lemma:2}. We prove this lemma as follows, using Theorem \ref{th:general} and Theorem \ref{th:2}.

	\begin{proof}[Proof of Lemma \ref{lemma:2}]

  We start by focusing on two easier cases, $r\leq 1$ and $r\geq K-1$. 
  When $r\leq 1$, the inequality $R^*\geq R_{\textup{u}}(N,K,r)$ is already proved in Section \ref{sec:ge} and given by (\ref{eq:46}), 
  for $N\geq \frac{K(K+1)}{2}$.  When $r\geq K-1$, 
  we have $R^*\geq 1-\frac{M}{N}=R_{\textup{u}}(N,K,r)$, which can be proved by choosing $s=1$ and $\alpha=1$ for Theorem \ref{th:general}.
  Hence, we only need to focus on the case where $r\in \left[\max\{\lceil\frac{K-3}{2}\rceil,1\}, K-1\right)$, and show that for large $N$, the maximum possible gap between $R^*$ and $R_{\textup{u}}(N,K,r)$ approaches $0$.
  
  We prove this result using Theorem \ref{th:2}. Essentially, we need to find parameter $n\in\{1,...,K-1\}$ for Theorem \ref{th:2}, such that the corresponding converse bound approaches $R_{\textup{u}}(N,K,r)$ for large $N$.
  
  Let $n=\lfloor r+1\rfloor$, we have 
  \begin{align}
      R_{\textup{u}}(N,K,r)=\frac{2K-n+1}{n+1}-\frac{K(K+1)}{n(n+1)}\cdot \frac{M}{N}\label{eq:a36}
  \end{align}
  by definition, for sufficiently large $N$ (more specifically, $N\geq K-n+1$). Under the same condition for large $N$, we have  $n\in\{\max\{1,K-N+1\},...,K-1\}$ given $r\in[1,K-1)$. Hence, 
  we can use $n$ as the parameter of Theorem \ref{th:2}. Now we prove the tightness of this converse bound by considering the following two possible cases:
  
  If $n>\frac{K-1}{2}$, we have $K-2n-1<0$. Recall that 
				$\alpha=\lfloor \frac{N-1}{K-n} \rfloor$ and $\beta=N-\alpha (K-n)$. We can prove that 
  when $N$ is sufficiently large 
  (i.e. $N\geq\frac{2(K-n)^2}{2n+1-K}+1$), the condition  $\beta+\alpha\frac{K-2n-1}{2}\leq 0$ is always satisfied. Consequently, 
		\begin{align}
		    R^*\geq \frac{2K-n+1}{n+1}-\frac{K(K+1)}{n(n+1)}\cdot \frac{M}{N}=R_{\textup{u}}(N,K,r).
		\end{align}
    If $n\leq \frac{K-1}{2} $, because we are considering the case where $r\geq \lceil\frac{K-3}{2}\rceil$, we have $n=\frac{K-1}{2}$. Hence, we can verify that $\beta+\alpha\frac{K-2n-1}{2}\leq 0$ does not hold for any $N$. Consequently,
    \begin{align}
		    R^*&\geq \frac{2K-n+1}{n+1}-\frac{2K(K-n)}{n(n+1)}\cdot \frac{M}{N-\beta}\nonumber\\
		    &=\frac{2K-n+1}{n+1}-\frac{K(K+1)}{n(n+1)}\cdot \frac{M}{N-\beta}.\label{eq:a38}
		\end{align}
		As $N$ approaches infinity, $\beta$ is upper bounded by a constant. Hence, we have $ 
		\lim\limits_{N\rightarrow+\infty}\frac{N}{N-\beta}= 1$. Therefore, from (\ref{eq:a36}) and (\ref{eq:a38}), we have
    \begin{align}
		   \lim_{N\rightarrow+\infty} (R^*-R_{\textup{u}}&(N,K,r))\geq\nonumber\\& \lim_{N\rightarrow+\infty}  
		   \frac{K(K+1)}{n(n+1)}\cdot \left(\frac{M}{N}-\frac{M}{N-\beta}\right)\nonumber\\
		   =& \lim_{N\rightarrow+\infty}  \frac{r(K+1)}{n(n+1)}\cdot \left(1-\frac{N}{N-\beta}\right)\nonumber\\
		    =&0.
		\end{align}
			\end{proof}

	\section{Proof of Theorem \ref{th:2}}\label{app:proof_tm4}
         Before proving the converse bounds stated in Theorem \ref{th:2}, we first present the following key lemma, which gives a lower bound on any $\epsilon$-achievable rate given any prefetching scheme.
         	\begin{lemma}\label{lemma:3}
						Consider a coded caching problem with parameters $N$ and $K$. Given a certain prefetching scheme, any $\epsilon$-achievable rate $R$ is lower bounded by \footnote{Here we adopt the notation of $H^*(W_\mathcal{A}, Z_{\mathcal{B}})$ which is defined in the proof of Theorem \ref{th:general}.}
					\begin{align}\label{eq:lemma3}
							    RF\geq& H^*(W_1| Z_1)\nonumber\\&+\frac{2}{n(n+1)\alpha} \left(\vphantom{\sum_{i=0}^{K-n-1} H^*(Z_{1}|  W_{\{1,...,\beta+i\alpha\}})}\alpha n(K-n)F
					  - nH^*(Z_1|W_{\{1,...,\beta\}})\right.\nonumber\\&\left.-\sum_{i=0}^{K-n-1} H^*(Z_{1}|  W_{\{1,...,\beta+i\alpha\}})\right)\nonumber\\
					  &~ -\frac{2K-n+1}{n+1}(1+\epsilon F)
					\end{align}
					 for any integer $n\in\{\max\{1,K-N+1\},...,K-1\}$, 
					where $\alpha=\lfloor \frac{N-1}{K-n} \rfloor$ and $\beta=N-\alpha (K-n)$.
					\end{lemma}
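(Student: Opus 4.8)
The plan is to generalize the cutset-enhancement idea behind Lemma~\ref{lemma:ga}, but now tailored to extract a bound that is strong in the regime where the normalized cache $r$ is close to $K-1$, by cutting off $K-n$ users at a time and iterating over a carefully chosen nested family of subsets of files. First I would fix a demand $\boldsymbol{d}$ in which a suitable number of distinct files are requested, and consider the genie-aided argument: giving user subsets of size $n$ together with the broadcast message $X_{\boldsymbol{d}}$, one can decode the requested files, so $RF \geq H(W_{d_k}\mid Z_{\mathcal{B}},\dots)$-type inequalities hold. The key new ingredient, compared to Lemma~\ref{lemma:ga}, is to combine \emph{several} such inequalities, one for each of the $K-n$ ``blocks'' of files of sizes $\beta,\alpha,\alpha,\dots,\alpha$ (which is exactly why $\alpha=\lfloor\frac{N-1}{K-n}\rfloor$ and $\beta=N-\alpha(K-n)$ appear — they partition the $N$ files into $K-n$ groups as evenly as possible subject to the first block absorbing the remainder), and weight them so that the cache terms $Z_1$ reassemble into the conditional entropies $H^*(Z_1\mid W_{\{1,\dots,\beta+i\alpha\}})$ appearing in \eqref{eq:lemma3}.

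The key steps, in order, would be: (1) write the basic genie bound $RF \geq H^*(W_{\text{requested}}\mid Z_{\text{cut}}, W_{\text{known}}) - (\text{small slack})$ in symmetrized form, using the $H^*(\cdot)$ operator already introduced in the proof of Theorem~\ref{th:general} and its Shannon-inequality compatibility; (2) identify the right telescoping identity: for a nested chain of file sets $\emptyset=\mathcal{A}_0\subset\mathcal{A}_1\subset\dots\subset\mathcal{A}_{K-n}=\{1,\dots,N\}$ with $|\mathcal{A}_i|=\beta+i\alpha$ (taking $\mathcal{A}_0$ to be empty and $|\mathcal{A}_1|=\beta$), rewrite $\alpha n(K-n)F = \sum_i (\text{file increments})$ and use $H(W_{\mathcal{A}_i}) = H(W_{\mathcal{A}_i}\mid Z_1) + I(W_{\mathcal{A}_i};Z_1)$ together with $I(W_{\mathcal{A}_i};Z_1) = H(Z_1) - H(Z_1\mid W_{\mathcal{A}_i})$ to convert file entropies into the $H^*(Z_1\mid W_{\mathcal{A}_i})$ terms; (3) bound $H^*(W_1\mid Z_1)$ — the ``first'' conditional file entropy, which survives as the leading term — from below by what the genie argument on the very first block gives, and collect the remaining increments; (4) track the combinatorial coefficient: the $\frac{2}{n(n+1)\alpha}$ prefactor and the $\frac{2K-n+1}{n+1}$ slack should fall out once one counts, across all relabelings $p\in\mathcal{P}_K,q\in\mathcal{P}_N$, how many times each term $W_k$ or $Z_k$ is ``hit'' — this is the analogue of the $\sum_{k=1}^{\min\{N,K\}}$ counting in Lemma~\ref{lemma:ga}, but with the users split into a cut of size $n$ and a decoded side of size $K-n$, giving the arithmetic-series coefficient $1+2+\dots$ that produces $n(n+1)/2$ and $(2K-n+1)/2$; and (5) absorb the $O(1/F)$ and $\epsilon$ losses from the Fano-type step into the additive $-\frac{2K-n+1}{n+1}(1+\epsilon F)$ term.

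The main obstacle I anticipate is step (2)–(4): choosing the demands and the weighting of the many genie inequalities so that the symmetrized cache terms combine \emph{exactly} into $nH^*(Z_1\mid W_{\{1,\dots,\beta\}}) + \sum_{i=0}^{K-n-1}H^*(Z_1\mid W_{\{1,\dots,\beta+i\alpha\}})$ with the stated rational coefficient $\frac{2}{n(n+1)\alpha}$, rather than some looser combination. Getting the constant right requires care in how one counts the contribution of each user-index to the cut-side versus the decode-side across all permutations, and in verifying that the uneven first block of size $\beta$ contributes with multiplicity $n+1$ (hence the extra isolated $nH^*(Z_1\mid W_{\{1,\dots,\beta\}})$ term beyond the sum) — this is exactly where the nonuniform split $\beta \neq \alpha$ makes the bookkeeping delicate. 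Once the identity is pinned down, converting it into \eqref{eq:lemma3} and then optimizing/instantiating to obtain the two-case bound \eqref{eq:thm4} of Theorem~\ref{th:2} (by lower-bounding each $H^*(Z_1\mid W_{\{1,\dots,\beta+i\alpha\}})$ by $\max\{0, MF - (\text{something})\}$, i.e.\ by $0$ versus a linear function, which is precisely what produces the two branches) should be routine.
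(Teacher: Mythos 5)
Your high-level picture — cutting the users into a group of size $n$ and a group of size $K-n$, partitioning the files into a first block of size $\beta$ plus blocks of size $\alpha$, telescoping over a nested chain of file sets, and normalizing by the count $\alpha n(n+1)/2$ to get the $\tfrac{2}{n(n+1)\alpha}$ prefactor — correctly reproduces the combinatorial skeleton of the paper's proof. But the mechanism you propose in step (2) is not the right one, and the gap is substantive.

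Your plan begins by fixing a demand in which ``a suitable number of distinct files are requested,'' and then converts file entropies into the $H^*(Z_1\mid W_{\mathcal{A}_i})$ terms via the purely cache/file identity $I(W_{\mathcal{A}_i};Z_1)=H(Z_1)-H(Z_1\mid W_{\mathcal{A}_i})$. That identity does not involve the broadcast message $X_{\boldsymbol d}$ at all, so by itself it cannot produce a bound on the communication rate $R$. The paper's proof instead constructs an explicit family of $n\alpha$ demand vectors $\boldsymbol d^{i,j}$, $i\in\{1,\dots,n\}$, $j\in\{1,\dots,\alpha\}$, each with \emph{deliberately repeated} requests: a sliding window of $K-n$ users requests the $j$-th block of files, and \emph{all} remaining $n$ users request file $1$. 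This repetition is the point of the whole lemma (it is what makes Theorem~\ref{th:2} different from Theorem~\ref{th:general}, as the introduction notes: it ``exploits the scenarios where users may have common demands''); your choice to take a demand with many distinct requests throws away precisely the structure that the bound needs. From these demands one writes $RF\ge H(W_1\mid Z_1)+H(X_{\boldsymbol d^{i,j}}\mid W_1,Z_1)-(1+\epsilon F)$ via Fano, uses the symmetry identity $H^*(X_{\boldsymbol d^{i,j}}\mid W_1,Z_1)=H^*(X_{\boldsymbol d^{i,j}}\mid W_1,Z_k)$ for every $k\le i$ (valid because users $1,\dots,i$ all request the same file), and averages over the triple $(k,i,j)$. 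The remaining work is a \emph{second} telescoping argument, on the joint conditional entropies $H^*(X_{\mathcal D_k^+},Z_{i+k}\mid W_{\mathcal S_{i-1}^-})$ obtained by repeatedly applying Fano block-by-block, where $\mathcal D_k^+=\bigcup_{i\ge k}\{\boldsymbol d^{i,j}\}_j$; it is only at the very end of this chain, after the $X$-terms cancel and only cache terms survive, that the quantities $H^*(Z_1\mid W_{\{1,\dots,\beta+i\alpha\}})$ appear. Your proposal has no analogue of this message-level telescope and no analogue of the $\mathcal D_k^+$ grouping, and without them the stated coefficients $nH^*(Z_1\mid W_{\{1,\dots,\beta\}})+\sum_i H^*(Z_1\mid W_{\{1,\dots,\beta+i\alpha\}})$ will not emerge. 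Finally, a small indexing slip: with $|\mathcal A_i|=\beta+i\alpha$ you get $|\mathcal A_1|=\beta+\alpha$, not $\beta$; the paper's sets $\mathcal S_i^-$ satisfy $|\mathcal S_i^-|=\beta+i\alpha$ starting from $|\mathcal S_0^-|=\beta$.
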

				We postpone the proof of the above lemma to Appendix \ref{app:lm3}, and continue to prove Theorem $\ref{th:2}$ assuming its correctness.	
				To simplify the discussion, we define 
										\begin{align} \label{eq:rbdef}
				R_{\textup{B}}(F,\boldsymbol{\phi})=&\frac{1}{F}\left( H^*(W_1| Z_1)+\frac{2}{n(n+1)\alpha} \left(\vphantom{\sum_{i=0}^{K-n-1}}\alpha n(K-n)F\right.\right.\nonumber\\&
					  - nH^*(Z_1|W_{\{1,...,\beta\}})\nonumber\\&\left.\left.-\sum_{i=0}^{K-n-1} H^*(Z_{1}|  W_{\{1,...,\beta+i\alpha\}})\right)\right).
					\end{align}
				Using Lemma \ref{lemma:3}, we have
					\begin{align}
							    R&\geq R_{\textup{B}}(F,\boldsymbol{\phi})
					   -\frac{2K-n+1}{n+1}(\frac{1}{F}+\epsilon )
					\end{align}
					if $R$ is $\epsilon$-achievable. Recall that $R^*$ is defined to be the minimum $\epsilon$-achievable rate over all prefetching scheme $\phi$ for large $F$ for any $\epsilon>0$, we have the following lower bound on $R^*$:
				\begin{align}\label{eq:63}
				R^*&\geq\sup_{\epsilon>0} \limsup_{F\rightarrow\infty}\min_{\boldsymbol{\phi}}\{ R_{\textup{B}}(F,\boldsymbol{\phi})
					   -\frac{2K-n+1}{n+1}(\frac{1}{F}+\epsilon )\}\nonumber\\
				&=\sup_{\epsilon>0} \limsup_{F\rightarrow\infty}\min_{\boldsymbol{\phi}} R_{\textup{B}}(F,\boldsymbol{\phi})
					   \nonumber\\
				&\geq \inf_{F\in\mathbb{N_+}}\min_{\boldsymbol{\phi}} R_{\textup{B}}(F,\boldsymbol{\phi})	 . 
				\end{align}
					Hence, to prove Theorem $\ref{th:2}$, we only need to prove that for any prefetching scheme $\boldsymbol{\phi}$,  $R_{\textup{B}}(F,\boldsymbol{\phi})$ is lower bounded by the converse bounds given in Theorem $\ref{th:2}$ for any valid parameter $n$.
					
					Now consider any  $n\in\{\max\{1,K-N+1\},...,K-1\}$.
					For brevity, 
					we define 
					\begin{align}\label{eq:thetadef}
					\theta=\left(K\beta+\frac{(K-n)(K-n-1)}{2}\alpha \right).
					\end{align}
				Equivalently, we have 
					\begin{align}
					    \theta=n\beta+\sum\limits_{i=0}^{K-n-1} (\beta+i\alpha).
					\end{align} 
					Hence,
					
					\begin{samepage}
				\begin{align}\label{eq:65}
				    \theta H^*(W_1| Z_1)\geq& nH^*(W_{\{1,...,\beta\}}| Z_1) \nonumber\\&+\sum_{i=0}^{K-n-1} H^*(W_{\{1,...,\beta+i\alpha\}}| Z_1)\nonumber\\
				    =&\theta F+
					   nH^*(Z_1|W_{\{1,...,\beta\}}) \nonumber\\&+\sum_{i=0}^{K-n-1} H^*(Z_{1}|  W_{\{1,...,\beta+i\alpha\}})-K H^*(Z_1).
				\end{align}
				\end{samepage}
				From (\ref{eq:rbdef}) and (\ref{eq:65}), we have
					\begin{align} 
				R_{\textup{B}}(F,\boldsymbol{\phi})F\geq& \left(1- \frac{2\theta}{n(n+1)\alpha}\right)H^*(W_1| Z_1)\nonumber\\&+\frac{2}{n(n+1)\alpha} (\theta F
					   -K H^*(Z_1)\nonumber\\&+\alpha n(K-n)F
					  ).
					\end{align}
				Depending on the value of $\theta$, we bound $H^*(W_1| Z_1)$ in $2$ different ways:
				
				When $	1\geq \frac{2\theta}{n(n+1)\alpha}$, this is exactly the case where  $\beta+\alpha\frac{K-2n-1}{2}\leq 0$ holds.		We use the following bound:
					\begin{align}
				     H^*(W_1| Z_1)\geq F-\frac{H^*(Z_1)}{N}.
				\end{align}
				Consequently, 
					\begin{align} R_{\textup{B}}(F,\boldsymbol{\phi})F\geq& \left(1- \frac{2\theta}{n(n+1)\alpha}\right)\left(F-\frac{H^*(Z_1)}{N}\right)\nonumber\\&+\frac{2}{n(n+1)\alpha} (\theta F
					   -K H^*(Z_1)\nonumber\\&+\alpha n(K-n)F).
					   \end{align}
					   Given $\theta$ defined in (\ref{eq:thetadef}), and $\beta=N-\alpha(K-n)$ as defined in Lemma \ref{lemma:3}, we have
					   \begin{align}
					R_{\textup{B}}(F,\boldsymbol{\phi})F&= \frac{2K-n+1}{n+1}F-\frac{K(K+1)}{n(n+1)}\cdot \frac{H^*(Z_1)}{N}\nonumber\\
					&\geq  \frac{2K-n+1}{n+1}F-\frac{K(K+1)}{n(n+1)}\cdot \frac{M}{N}F.
					\end{align}
					Hence we have the follows from (\ref{eq:63}): 
		 	\begin{align}
				R^*
				&\geq \frac{2K-n+1}{n+1}-\frac{K(K+1)}{n(n+1)}\cdot \frac{M}{N}.
				\end{align}
	
					On the other hand, when $1< \frac{2\theta}{n(n+1)\alpha}$, this is exactly the case where  $\beta+\alpha\frac{K-2n-1}{2}\leq 0$ does not hold.
					We use $H^*(W_1| Z_1)\leq F$.
					Similarly,  
					\begin{align} 
					R_{\textup{B}}(F,\boldsymbol{\phi})F\geq& \left(1- \frac{2\theta}{n(n+1)\alpha}\right)F\nonumber\\&+\frac{2}{n(n+1)\alpha} (\theta F
					   -K H^*(Z_1)\nonumber\\&+\alpha n(K-n)F)\nonumber\\
					  =& \frac{2K-n+1}{n+1}F-\frac{2K(K-n)}{n(n+1)}\cdot \frac{H^*(Z_1)}{N-\beta}\nonumber\\
					\geq&  \frac{2K-n+1}{n+1}F-\frac{2K(K-n)}{n(n+1)}\cdot \frac{M}{N-\beta}F.
					\end{align}
							Hence, 
		 	\begin{align}
				R^*&\geq \inf_{F\in\mathbb{N_+}}\min_{\boldsymbol{\phi}}R_{\textup{B}}(F,\phi) \nonumber\\
				&\geq \frac{2K-n+1}{n+1}-\frac{2K(K-n)}{n(n+1)}\cdot \frac{M}{N-\beta}.
				\end{align}
				To conclude, we have proved that the converse bound given in Theorem \ref{th:2} holds for any valid parameter $n$.
	
\section{Proof of Lemma \ref{lemma:decconst}}\label{app:proofdeclemma}

In this appendix, we aim to prove that for any $(M,R)\in \mathcal{S}_{\textup{Lower}}\cup \{({0,J})\}$,  ${R_{\textup{dec}}(M)} \leq  {2.00884}R$. Note that if $(M,R)=(0,J)$, we have $R_{\textup{dec}}(M)=J\leq2.00884R$. Hence, it suffices to consider the case where $(M,R)\in \mathcal{S}_{\textup{Lower}}$.

In this case, we can find $s\in\{1,...,J\}$ and $\ell\in\{1,...,s\}$ such that 
					\begin{align}\label{eq:corner}
					 (M,R)=\left(\frac{N-\ell+1}{s}, \frac{s-1}{2}+\frac{\ell(\ell-1)}{2s}\right).
					\end{align}
Based on the parameter values, we prove $R_{\textup{dec}}(M)\leq 2.00884 R$ by considering the following $3$ possible scenarios:
		
		\noindent a). If $N\geq 9s$, we first have the follows given (\ref{eq:decdef}):
		\begin{align}
		    R_{\textup{dec}}(M)&\leq \frac{N-M}{M}.
		    \end{align}
		    Due to (\ref{eq:corner}), the above inequality is equivalent to
		    \begin{align}   
		    R_{\textup{dec}}(M)&\leq s-1+\frac{s(\ell-1)}{N-\ell+1}.
		    \end{align}
		    Recall that $s\geq \ell$ and $N\geq 9s$, we have
		    \begin{align}
		    R_{\textup{dec}}(M)&\leq s-1+\frac{s(\ell-1)}{N-s}\nonumber\\
		    &\leq s-1+\frac{\ell-1}{8}.
		\end{align}
		Since $s\geq \ell$, we have $\frac{\ell-1}{\ell}\leq \frac{s-1}{s}$. Consequently, 
		\begin{align}
		    R_{\textup{dec}}(M)&\leq s-1+\frac{\sqrt{\ell-1}}{8} \cdot\sqrt{\frac{(s-1)}{s}\cdot \ell}\nonumber\\
		    &= s-1+2\cdot \sqrt{\frac{{s-1}}{256}}\cdot \sqrt{\frac{\ell(\ell-1)}{s}}.
		    \end{align}
		    Applying the AM-GM inequality to the second term of the RHS, we have
		    \begin{align}
		     R_{\textup{dec}}(M)&\leq  s-1+\frac{s-1}{256}+\frac{\ell(\ell-1)}{s}. 
		     \end{align}
		     Because $\ell\geq 1$, we can thus upper bound $R_{\textup{dec}}(M)$ as a function of $R$, which is given in (\ref{eq:corner}):
		     \begin{align}
		    R_{\textup{dec}}(M)&\leq (2+\frac{1}{128})(\frac{s-1}{2}+\frac{\ell(\ell-1)}{2s})\nonumber\\
		    &\leq 2.00884R.
		\end{align}
		
	\noindent	b). If $N<9s$ and $N\leq 81$, we upper bound $R_{\textup{dec}}(M)$ as follows:
		\begin{align}
		    R_{\textup{dec}}(M)&\leq \frac{N-M}{M}(1-(1-\frac{M}{N})^N).
		\end{align}
		Note that both the above bound and $R$ are functions of $N$, $s$ and $\ell$, which can only take values from $\{1,...,81\}$. Through a brute-force search, we can show that $R_{\textup{dec}}(M)\leq 2.000R\leq 2.00884R$.
		
	\noindent c).	If $N<9s$ and $N>81$, recall that $M=\frac{N-\ell+1}{s}$ from (\ref{eq:corner}), we have
		\begin{align}\label{eq:mbound}
		    M
		    &\leq \frac{N}{s} 
		    < 9.
		\end{align}
		Similarly, $R$ can be lower bounded as follows given (\ref{eq:corner}):
		\begin{align}
		    R&=\frac{s-1}{2}+\frac{(N-sM)(N-sM+1)}{2s}\nonumber\\
		    &=\frac{(1+M^2)s}{2}+\frac{N(N+1)}{2s}-(N+\frac{1}{2})M-\frac{1}{2}.
		    \end{align}
		    Applying the AM-GM inequality to the first two terms of the RHS, we have
		    \begin{align}
		    R &\geq \sqrt{(1+M^2)N(N+1)}-(N+\frac{1}{2})M-\frac{1}{2}.
		\end{align}
		From (\ref{eq:mbound}), $N>81>M^2$, we have $\sqrt{N(N+1)}\geq \sqrt{M^2(M^2+1)}+N-M^2$. Consequently,
		\begin{align}
		    R\geq& \sqrt{1+M^2}(\sqrt{M^2(M^2+1)}+N-M^2)\nonumber\\&-(N+\frac{1}{2})M-\frac{1}{2}\nonumber\\
		    =&(N-81)(\sqrt{1+M^2}-M)\nonumber\\&+(81-M^2)(\sqrt{1+M^2}-M)+\frac{M-1}{2}.
		\end{align}
		On the other hand, we upper bound $R_{\textup{dec}}(M)$ as follows:
			\begin{align}
		    R_{\textup{dec}}(M)&\leq \frac{N-M}{M}(1-(1-\frac{M}{N})^ N)\nonumber\\
		    &= \frac{N-M}{M}(1-e^{\ln (1-\frac{M}{N})N}).
		\end{align}
		From (\ref{eq:mbound}), $\frac{M}{N}<\frac{9}{81}=\frac{1}{9}$, it is easy to show that $\ln (1-\frac{M}{N})\geq -\frac{M}{N}-\frac{9}{16}\left(\frac{M}{N}\right)^2$. Hence,					
			\begin{align}
		    R_{\textup{dec}}(M)\leq& \frac{N-M}{M}(1-e^{-M-\frac{9}{16}\frac{M^2}{N}})\nonumber\\
		    \leq& \frac{N-M}{M}\left(1-e^{-M} (1-\frac{9}{16}\frac{M^2}{N})\right)\nonumber\\
		    \leq& \frac{N-M}{M}(1-e^{-M})+\frac{N}{M} e^{-M}\frac{9}{16}\frac{M^2}{N}\nonumber\\
		    =&(N-81) \frac{1-e^{-M}}{M} +\frac{81-M}{M}({1-e^{-M}})\nonumber\\&+\frac{9}{16}M e^{-M}.
		\end{align}				
		Numerically, we can verify that the following inequalities hold for $M\in [0,9)$:
		\begin{align}
		     \frac{1-e^{-M}}{M}&\leq 2.00884 (\sqrt{1+M^2}-M) ,\label{num:1}\\
	            \frac{81-M}{M}(&{1-e^{-M}})+\frac{9}{16}M e^{-M}  \leq\nonumber\\
	            2.00884&\left((81-M^2)(\sqrt{1+M^2}-M)+\frac{M-1}{2}\right).\label{num:2}
		\end{align}	
		Hence when $N>81$, by computing $(N-81)\times(\ref{num:1})+(\ref{num:2})$, we have $R_{\textup{dec}}(M)\leq 2.00884 R$.

		To conclude, $R_{\textup{dec}}(M)\leq 2.00884 R$ holds for any $(M,R)\in \mathcal{S}_{\textup{Lower}}$ for all three cases. This completes the proof of Lemma \ref{lemma:decconst}.

  \section{Proof of Lemma \ref{lemma:ga}}\label{app:lemmaga}
			 	If $R$ is $\epsilon$-achievable, we can find message $X_{\boldsymbol{d}}$ such that for each user $k$, $W_{d_k}$ can be decoded from $Z_k$ and $X_{\boldsymbol{d}}$ with probability of error of at most $\epsilon$.		 	
			 	Using Fano's inequality, the following bound holds:
			 	\begin{align}\label{eq:fano}
			 	H(W_{d_k}|Z_k,X_{\boldsymbol{d}})\leq 1+\epsilon F\ \ \ \ \ \ \ \forall k\in\{1,...,K\}.
			 	\end{align}
			 	Equivalently, 
			 	\begin{align}
			 	H(X_{\boldsymbol{d}}|Z_k)\geq H(W_{d_k}&|Z_k)+H(X_{\boldsymbol{d}}|W_{d_k},Z_k) \nonumber\\&-(1+\epsilon F)\ \ \ \ \  \forall k\in\{1,...,K\}.
			 	\end{align}
			 	Note that the LHS of the above inequality lower bounds the communication load. If we lower bound the term $H(X_{\boldsymbol{d}}|W_{d_k},Z_k)$ on the RHS by $0$, we obtain the single user cutset bound. However, we enhance this cutset bound by bounding $H(X_{\boldsymbol{d}}|W_{d_k},Z_k)$ with non-negative functions. On a high level, we 			 	
			 	view $H(X_{\boldsymbol{d}}|W_{d_k},Z_k)$ as the communication load on an enhanced caching system, where $W_{d_k}$ and $Z_k$ are known by all the users. Using similar approach, we can lower bound $H(X_{\boldsymbol{d}}|W_{d_k},Z_k)$ by the sum of a single cutset bound on this enhanced system, and another entropy function that can be interpreted as the communication load on a further enhanced system. We can recursively apply this bounding technique until all user demands are publicly known.

			 	From (\ref{eq:fano}), we have
			 	\begin{align}
			 	H(W_{d_k}|Z_{\{1,...,k\}},X_{\boldsymbol{d}}, W_{\{d_1,...,d_{k-1}\}})&\leq 1+\epsilon F\nonumber\\&\ \forall k\in\{1,...,K\}.
			 	\end{align}
			 	Equivalently,
			 	\begin{align}
			 	H(X_{\boldsymbol{d}}|&Z_{\{1,...,k\}}, W_{\{d_1,...,d_{k-1}\}})\geq&\nonumber\\ &  H(W_{d_k}|Z_{\{1,...,k\}}, W_{\{d_1,...,d_{k-1}\}})\nonumber\\&+H(X_{\boldsymbol{d}}|Z_{\{1,...,k\}}, W_{\{d_1,...,d_{k}\}}) \nonumber\\   &-(1+\epsilon F)\ \ \ \ \ \ \ \ \ \ \ \ \ \ \ \ \ \ \ \  \forall k\in\{1,...,K\}.
			 	\end{align}
			 	Adding the above inequality for $k\in\{1,...,\min\{N,K\}\}$, we have
			 	\begin{align}
			 	H(X_{\boldsymbol{d}}&|Z_{\{1\}})
			 	\geq\nonumber\\ &\sum_{k=1}^{\min\{N,K\}} \left( H(W_{d_k}| Z_{\{1,...,k\}},W_{\{d_1,...,d_{k-1}\}})-(1+\epsilon F) \vphantom{H(W_{d_k}| Z_{\{1,...,k\}},W_{\{d_1,...,d_{k-1}\}})}\right)\nonumber\\&+H(X_{\boldsymbol{d}}|Z_{\{1,...,\min\{N,K\}\}}, W_{\{d_1,...,d_{\min\{N,K\}}\}})\nonumber\\
			 	\geq& \sum_{k=1}^{\min\{N,K\}} H(W_{d_k}| Z_{\{1,...,k\}},W_{\{d_1,...,d_{k-1}\}})\nonumber\\&-\min\{N,K\}(1+\epsilon F) .
			 	\end{align}
			 	Thus, $R$ is bounded by
			 		 	\begin{align}
			 		 	R\geq& \frac{1}{F} H(X_{\boldsymbol{d}}|Z_{\{1\}})\nonumber\\
			 		 	\geq& \frac{1}{F}(\sum_{k=1}^{\min\{N,K\}} H(W_{d_k}|Z_{\{1,...,k\}}, W_{\{d_1,...,d_{k-1}\}}))\nonumber\\&- \min\{N,K\}(\frac{1}{F}+\epsilon).
			 		 	\end{align}
  			 	One can show that this approach strictly improves the compound cutset bound, which was used in most of the prior works.

\section{Proof of Lemma \ref{lemma:stg}}\label{app:plstg}

In this appendix, we prove that for any prefetching scheme $\boldsymbol{\phi}$, the rate $R_{\textup{A}}(F,\boldsymbol{\phi})$ is lower bounded by the RHS of (\ref{ineq:p2l}), for any parameters $s$ and $\alpha$. Now we consider any such $s\in\{1,...,\min\{N,K\}\}$ and $\alpha \in [0,1]$. 
		 From the definition of $R_{\textup{A}}(F,\boldsymbol{\phi})$ and the non-negativity of entropy functions, we have
		 \begin{align}
		 R^*_{\textup{A}}(F,\boldsymbol{\phi})F\geq& \left
		 (\sum_{k=1}^{s-1} H^*(W_{k}|Z_{\{1,...,k\}}, W_{\{1,...,{k-1}\}})\right
		 )\nonumber\\&+\alpha H^*(W_{s}|Z_{\{1,...,s\}}, W_{\{1,...,{s-1}\}}). \label{bound:simple}
		 \end{align}
		 Each term in the above lower bound can be bounded in the following $2$ ways: \footnote{Rigorously, (\ref{bound:2}) requires $k<K$. However, we will only apply this bound for $k<s$, which satisfies this condition. }
		 \begin{align}
		H^*(W_{k}|Z_{\{1,...,k\}}&, W_{\{1,...,k-1\}}) \nonumber\\ \geq & \frac{H^*(W_{\{k,...,N\}}|Z_{\{1,...,k\}}, W_{\{1,...,{k-1}\}})}{N-k+1}\nonumber\\
		\geq &F-\frac{H^*(Z_{\{1,...,k\}}| W_{\{1,...,{k-1}\}})}{N-k+1}\label{bound:1}\\
		 H^*(W_{k}|Z_{\{1,...,k\}}&, W_{\{1,...,k-1\}})\nonumber\\= & F-H^*(Z_{\{1,...,k\}}| W_{\{1,...,{k-1}\}})\nonumber\\&+H^*(Z_{\{1,...,k\}}| W_{\{1,...,{k}\}})\nonumber\\
		 \geq &  F-H^*(Z_{\{1,...,k\}}| W_{\{1,...,{k-1}\}})\nonumber\\&+\frac{k}{k+1}H^*(Z_{\{1,...,k+1\}}| W_{\{1,...,{k}\}})\label{bound:2}
		 \end{align}
		 We aim to use linear combinations of the above two bounds in (\ref{bound:simple}), such that the coefficient of each $H^*(Z_{\{1,...,k\}}| W_{\{1,...,{k-1}\}})$ in the resulting lower bound is $0$ for all but one $k$.
		 To do so, we construct the following sequences:
		 	\begin{align}
		    a_x&=\frac{2\alpha s+s(s-1)-(x+1)x}{2x(N-x)},\\
		    b_x&=\frac{2\alpha s +s(s-1)-x(x-1)}{ 2x(N-x+1)}.
		\end{align} 
		We can verify that these sequences satisfy the following equations:
		 	\begin{align}
		    \frac{1-a_x}{N-x+1}+a_x=b_x,\\
		    \frac{x}{x+1} a_x=b_{x+1}.
		\end{align}
		Let $\ell\in\{1,...,s\}$ be the minimum value such that (\ref{eq:sml}) holds, we can prove that $a_x\in[0,1]$ for $x\in\{\ell,...,s-1\}$. Because $\ell$ is the minimum of such values, we can also prove that $b_l\geq \frac{\ell-1}{\ell}$. Using the above properties of sequences $\boldsymbol{a}$ and $\boldsymbol{b}$, we lower bound $R_{\textup{A}}(F,\boldsymbol{\phi})$ as follows:
		
	    For each $x\in\{\ell,...,s-1\}$, by computing $(1-a_x)\times(\ref{bound:1})+a_x\times(\ref{bound:2})$, 
		    we have
		    
	    	 \begin{align}
		H^*(W_{x}|&Z_{\{1,...,x\}}, W_{\{1,...,x-1\}})
		\nonumber\\\geq& (1-a_x)\left(F-\frac{H^*(Z_{\{1,...,k\}}| W_{\{1,...,{k-1}\}})}{N-k+1}\right)\nonumber\\&+a_x(F-H^*(Z_{\{1,...,k\}}| W_{\{1,...,{k-1}\}})\nonumber\\&+\frac{k}{k+1}H^*(Z_{\{1,...,k+1\}}| W_{\{1,...,{k}\}}))\nonumber\\
		=&F- (\frac{1-a_x}{N-x+1}+a_x) H^*(Z_{\{1,...,x\}}| W_{\{1,...,{x-1}\}})\nonumber\\&+a_x\frac{x}{x+1}H^*(Z_{\{1,...,x+1\}}| W_{\{1,...,{x}\}} )\nonumber\\
		=&F- b_x H^*(Z_{\{1,...,x\}}| W_{\{1,...,{x-1}\}})\nonumber\\&+b_{x+1}H^*(Z_{\{1,...,x+1\}}| W_{\{1,...,{x}\}} ).
		\end{align}
		Moreover, we have the follows from (\ref{bound:1}):
		\begin{align}
		\alpha H^*(W_{s}|&Z_{\{1,...,s\}}, W_{\{1,...,s-1\}})
	\nonumber\\&\geq \alpha\left( F-\frac{H^*(Z_{\{1,...,s\}}| W_{\{1,...,{s-1}\}})}{N-s+1}\right)\nonumber\\
		&=\alpha F -b_s H^*(Z_{\{1,...,s\}}| W_{\{1,...,{s-1}\}})  .
		 \end{align}
		Consequently,
		\begin{align}
		\sum_{k=\ell}^{s-1}  H^*(&W_{k}|Z_{\{1,...,k\}}, W_{\{1,...,{k-1}\}})\nonumber\\+\alpha& H^*(W_{s}|Z_{\{1,...,s\}}, W_{\{1,...,s-1\}}) \geq \nonumber\\& (s-\ell+\alpha)F-b_\ell H^*(Z_{\{1,...,\ell\}}| W_{\{1,...,{\ell-1}\}}).\label{eq:ltos}
		\end{align}
		On the other hand, 
		\begin{align}
		\sum_{k=1}^{\ell-1}  H^*(W_{k}|&Z_{\{1,...,k\}}, W_{\{1,...,{k-1}\}})\nonumber\\\geq& \sum_{k=1}^{\ell-1} ( F-H^*(Z_{\{1,...,k\}}| W_{\{1,...,{k-1}\}})\nonumber\\&+\frac{k}{k+1}H^*(Z_{\{1,...,k+1\}}| W_{\{1,...,{k}\}}))\nonumber\\
		=&\sum_{k=1}^{\ell-1} ( F-\frac{1}{k}H^*(Z_{\{1,...,k\}}| W_{\{1,...,{k-1}\}}))\nonumber\\&+\frac{\ell-1}{\ell}H^*(Z_{\{1,...,\ell\}}| W_{\{1,...,{\ell-1}\}})\nonumber\\
		\geq& (\ell-1) F-(\ell-1)MF \nonumber\\&+\frac{\ell-1}{\ell}H^*(Z_{\{1,...,\ell\}}| W_{\{1,...,{\ell-1}\}}).\label{eq:1tol}
		\end{align}
		Combining (\ref{bound:simple}), (\ref{eq:ltos}), and (\ref{eq:1tol}), we have
			\begin{align}
				R_{\textup{A}}(F,\boldsymbol{\phi})F
				\geq&  (\ell-1) F-(\ell-1)MF \nonumber\\&+\frac{\ell-1}{\ell}H^*(Z_{\{1,...,\ell\}}| W_{\{1,...,{\ell-1}\}})\nonumber\\&+(s-\ell+\alpha)F-b_\ell H^*(Z_{\{1,...,\ell\}}| W_{\{1,...,{\ell-1}\}})\nonumber\\
				=& (s-1+\alpha)F-\left(\ell-1\right)MF \nonumber\\&+\left(\frac{\ell-1}{\ell}-b_l\right)H^*(Z_{\{1,...,\ell\}}| W_{\{1,...,{\ell-1}\}}) .
				\end{align}
				Recall that $b_\ell\geq \frac{\ell-1}{\ell}$, we have
						\begin{align}
							R_{\textup{A}}(F,\boldsymbol{\phi})F
						\geq& (s-1+\alpha)F- (\ell-1)MF\nonumber\\&- (b_\ell-\frac{\ell-1}{\ell})\ell MF\nonumber\\
						=&(s-1+\alpha)F\nonumber\\&- \frac{s(s-1)-\ell(\ell-1)+2\alpha s}{2(N-\ell+1)}MF.
						\end{align}
						This completes the proof of Lemma \ref{lemma:stg}.

	\section{Proof of Lemma \ref{lemma:3}}\label{app:lm3}

         		To simplify the discussion, we adopt the notation of $H^*(W_\mathcal{A}, Z_{\mathcal{B}})$ which is defined in the proof of Theorem \ref{th:general}. Moreover, we generalize this notation to include the variables for the messages $X_{\boldsymbol{d}}$. For any permutations $p\in\mathcal{P}_N$, $q\in\mathcal{P}_K$ and for any demand $\boldsymbol{d}\in \{1,...,N\}^K$, we define $\boldsymbol{d}(p,q)$ be a demand where for each $k\in\{1,...,K\}$, user $q(k)$ requests file $p(d_k)$.
			    Then for any subset for demands $\mathcal{D}\subseteq \{1,...,N\}^K$, we define $\mathcal{D}(p,q)=\{\boldsymbol{d}{(p,q)}|\boldsymbol{d}\in \mathcal{D}\}$.
			    Now for any subsets	$\mathcal{A}\subseteq \{1,...,N\}$, $\mathcal{B}\subseteq \{1,...,K\}$, and $\mathcal{D}\subseteq \{1,...,N\}^K$, we define
				\begin{align}
			    H^*(X_{\mathcal{D}}, &W_{\mathcal{A}},Z_{\mathcal{B}})\nonumber\\&\triangleq \frac{1}{N!K!} \sum_{p\in \mathcal{P}_N, q\in \mathcal{P}_K} H(X_{\mathcal{D}(p,q)}, W_{p\mathcal{A}},Z_{q\mathcal{B}}).\label{eq:star}
			\end{align}
					
					For any $i\in\{1,..,n\}$ and $j\in\{1,...,\alpha\}$ let $\boldsymbol{d}^{i,j}$ be a demand satisfying
					\begin{align}
					    d^{i,j}_l=\begin{cases}
					     l-i+ (j-1)(K-n) +\beta \\\ \ \ \ \ \ \ \ \ \ \ \ \ \ \ \ \ \ \ \ \ \ \ \ \ \ \ \ \text{if } i+1\leq l\leq i+K-n,\\
					     1 \ \ \ \ \ \ \ \ \ \ \ \ \ \ \ \ \ \ \ \ \ \ \ \ \ \ \text{otherwise}.
					    \end{cases}
					\end{align}
					Note that for all demands $\boldsymbol{d}^{i,j}$, user $1$ requests file $1$, hence we have
					\begin{align}
					   H(W_1|X_{\boldsymbol{d}^{i,j}}, Z_1)\leq 1+\epsilon F
					\end{align}
					using Fano's inequality.
					Consequently,
					\begin{align}
					   RF&\geq H(X_{\boldsymbol{d}^{i,j}}) \nonumber\\
					  &\geq H(X_{\boldsymbol{d}^{i,j}}|Z_1)+H(W_1|X_{\boldsymbol{d}^{i,j}}, Z_1)-(1+\epsilon F)\nonumber\\
					   &=H(W_1| Z_1)+ H(X_{\boldsymbol{d}^{i,j}} |W_1 ,Z_1)-(1+\epsilon F).
					\end{align}
					Due to the homogeneity of the problem, we have
						\begin{align}
					   RF&\geq H^*(W_1| Z_1)+ H^*(X_{\boldsymbol{d}^{i,j}}|W_1 ,Z_{1})-(1+\epsilon F).
					\end{align}
					For each $i\in\{1,...,n\}$, $j\in\{1,...,\alpha\}$, and $k\in\{1,...,i\}$, we have the following identity:
						\begin{align}
					 H^*(X_{\boldsymbol{d}^{i,j}}|W_1 ,Z_1)= H^*(X_{\boldsymbol{d}^{i,j}}|W_1 ,Z_k).
					\end{align}
					Hence, we have
							\begin{align}
					   RF\geq &H^*(W_1| Z_1)\nonumber\\&+\frac{2}{n(n+1)\alpha}  \sum_{k=1}^{n} \sum_{i=k}^n \sum_{j=1}^\alpha H^*(X_{\boldsymbol{d}^{i,j}}|W_1 ,Z_k)\nonumber\\& -(1+\epsilon F).
					\end{align}
					
					For $k\in\{1,...,n\}$, let $\mathcal{D}_{k}$ and $\mathcal{D}_{k}^+$ denote the following set of demands:
					\begin{align}
					    \mathcal{D}_k&=\{\boldsymbol{d}^{k,j}|  j\in\{1,...,\alpha\}
					    \},\\
					    \mathcal{D}_k^+&=\bigcup_{i=k}^{n}\mathcal{D}_{i},
					\end{align}
				we have
						\begin{align}
					   RF\geq& H^*(W_1| Z_1)+\frac{2}{n(n+1)\alpha}  \sum_{k=1}^{n}  H^*(X_{\mathcal{D}_k^+}| W_1 ,Z_k)\nonumber\\&-(1+\epsilon F)\nonumber\\
					   \geq& H^*(W_1| Z_1)+\frac{2}{n(n+1)\alpha}  \sum_{k=1}^{n}  H^*(X_{\mathcal{D}_k^+}| W_{\{1,...,\beta\}} ,Z_k)\nonumber\\&-(1+\epsilon F)\nonumber\\
				\geq&  H^*(W_1| Z_1)\nonumber\\&+\frac{2}{n(n+1)\alpha} \sum_{k=1}^n\left( H^*( Z_k, X_{\mathcal{D}_k^+}|W_{\{1,...,\beta\}} )\right.\nonumber\\&\left.\vphantom{H^*( Z_k, X_{\mathcal{D}_k^+}|W_{\{1,...,\beta\}})}-H^*(Z_k|W_{\{1,...,\beta\}})\right)-(1+\epsilon F).\label{eq:111}
					\end{align}
					To further bound $R$, we only need a lower bound for $\sum\limits_{k=1}^n H^*( Z_k, X_{\mathcal{D}_k^+}|W_{\{1,...,\beta\}} )$, which is derived as follows:
					
					For each $i\in \{1,...,K-n\}$, let $\mathcal{S}_i$ be subset of files defined as follows:
					\begin{align}
					    \mathcal{S}_i&=\{i+(j-1)(K-n)+\beta\ |\  j\in\{1,...,\alpha\}\}.
					\end{align}
				 From the decodability constraint, for any $k\in\{1,...,n\}$, each file in $\mathcal{S}_i$ can be decoded by user $i+k$ given $X_{\mathcal{D}_{k}}$. Using Fano's inequality, we have
						\begin{align}
					   H^*(W_{\mathcal{S}_i}|X_{\mathcal{D}_{k}}, Z_{i+k})\leq \alpha(1+\epsilon F).
					\end{align}
						Let $\mathcal{S}_i^-$ be subset of files defined as follows
						\begin{align}
						    \mathcal{S}_i^-=  \left(\bigcup_{j=1}^{i} \mathcal{S}_j\right )\  \bigcup \ \{1,...,\beta\}.
						\end{align}
						We have
					\begin{align}
					   0\geq& H^*(W_{\mathcal{S}_i}|X_{\mathcal{D}_{k}^+}, Z_{i+k}, W_{\mathcal{S}_{i-1}^-})-\alpha(1+\epsilon F)\nonumber\\
					   =&H^*(X_{\mathcal{D}_{k}^+}, Z_{i+k}|  W_{\mathcal{S}_{i}}, W_{\mathcal{S}_{i-1}^-})+H^*( W_{\mathcal{S}_i}|  W_{\mathcal{S}_{i-1}^-})\nonumber\\&-H^*(X_{\mathcal{D}_{k}^+}, Z_{i+k}|  W_{\mathcal{S}_{i-1}^-})-\alpha(1+\epsilon F)\nonumber\\
					   =&H^*(X_{\mathcal{D}_{k}^+}, Z_{i+k}|   W_{\mathcal{S}_{i}^-})+\alpha F-H^*(X_{\mathcal{D}_{k}^+}, Z_{i+k}|  W_{\mathcal{S}_{i-1}^-})\nonumber\\&-\alpha(1+\epsilon F) .
					\end{align}
					Consequently, 
					\begin{align}
					   0  \geq&\sum_{k=1}^{n} \sum_{i=1}^{K-n} \left(H^*(X_{\mathcal{D}_{k}^+}, Z_{i+k}|   W_{\mathcal{S}_{i}^-})+\alpha F\right.\nonumber\\&\left.-H^*(X_{\mathcal{D}_{k}^+}, Z_{i+k}|  W_{\mathcal{S}_{i-1}^-}) -\alpha (1+\epsilon F)\right)\nonumber\\
					   =&\sum_{k=1}^{n} \left(\sum_{i=1}^{K-n} \left(H^*(X_{\mathcal{D}_{k}^+}, Z_{i+k-1}|   W_{\mathcal{S}_{i-1}^-})\right.\right.\nonumber\\&\left.-H^*(X_{\mathcal{D}_{k}^+}, Z_{i+k}|  W_{\mathcal{S}_{i-1}^-}) \right)
					     +H^*(X_{\mathcal{D}_{k}^+}, Z_{K-n+k}|   W_{\mathcal{S}_{n}^-}) \nonumber\\&\left.-H^*(X_{\mathcal{D}_{k}^+}, Z_{k}|   W_{\mathcal{S}_{0}^-})\vphantom{\sum_{i=1}^{K-n}}\right)\nonumber\\&\ +\alpha n(K-n)(F-1-\epsilon F)\nonumber\\
					    \geq& \sum_{k=1}^{n} \left(\sum_{i=1}^{K-n} \left(H^*(X_{\mathcal{D}_{k}^+}, Z_{i+k-1}|   W_{\mathcal{S}_{i-1}^-})\right.\right.\nonumber\\&\left.\left.-H^*(X_{\mathcal{D}_{k}^+}, Z_{i+k}|  W_{\mathcal{S}_{i-1}^-}) \right)
					     -H^*(X_{\mathcal{D}_{k}^+}, Z_{k}|   W_{\mathcal{S}_{0}^-})\vphantom{\sum_{i=1}^{K-n}}\right)\nonumber\\&+\alpha n(K-n)(F-1-\epsilon F).
					\end{align}
					Hence, we obtain the following lower bound:
						\begin{align}
					 \sum_{k=1}^{n} H^*(&X_{\mathcal{D}_{k}^+}, Z_{k}|   W_{\mathcal{S}_{0}^-}))\nonumber\\
					 \geq&  \sum_{k=1}^{n} \sum_{i=1}^{K-n} \left(H^*(X_{\mathcal{D}_{k}^+}, Z_{i+k-1}|   W_{\mathcal{S}_{i-1}^-})\right.\nonumber\\&\left.-H^*(X_{\mathcal{D}_{k}^+}, Z_{i+k}|  W_{\mathcal{S}_{i-1}^-}) \right)\nonumber\\&+\alpha n(K-n)(F-1-\epsilon F)\nonumber\\
					 =& \sum_{i=1}^{K-n}\sum_{k=1}^{n} \left(H^*(X_{\mathcal{D}_{k}^+}, Z_{i+k-1}|   W_{\mathcal{S}_{i-1}^-})\right.\nonumber\\&\left.-H^*(X_{\mathcal{D}_{k}^+}, Z_{i+k}|  W_{\mathcal{S}_{i-1}^-}) \right)\nonumber\\&+\alpha n(K-n)(F-1-\epsilon F)\nonumber\\
					 =& \sum_{i=1}^{K-n}\sum_{k=1}^{n} \left(H^*( Z_{i+k-1}|X_{\mathcal{D}_{k}^+},   W_{\mathcal{S}_{i-1}^-})\right.\nonumber\\&\left.-H^*( Z_{i+k}|X_{\mathcal{D}_{k}^+},  W_{\mathcal{S}_{i-1}^-}) \right)\nonumber\\&+\alpha n(K-n)(F-1-\epsilon F)
					 \end{align}
					 Note that $\mathcal{D}_{k}^+\subseteq \mathcal{D}_{k-1}^+$, we have $H^*( Z_{i+k}|X_{\mathcal{D}_{k+1}^+},  W_{\mathcal{S}_{i-1}^-}) \geq H^*( Z_{i+k}|X_{\mathcal{D}_{k}^+},  W_{\mathcal{S}_{i-1}^-})$. Consequently,
						\begin{align}
					 \sum_{k=1}^{n} H&^*(X_{\mathcal{D}_{k}^+}, Z_{k}|   W_{\mathcal{S}_{0}^-}))\nonumber\\
					 \geq& \sum_{i=1}^{K-n} \left(H^*( Z_{i}|X_{\mathcal{D}_{1}^+},   W_{\mathcal{S}_{i-1}^-})-H^*( Z_{i+n}|X_{\mathcal{D}_{n}^+},  W_{\mathcal{S}_{i-1}^-}) \right)\nonumber\\&+\alpha n(K-n)(F-1-\epsilon F)\nonumber\\
					 \geq& -\sum_{i=1}^{K-n} H^*( Z_{i+n}|  W_{\mathcal{S}_{i-1}^-} )+\alpha n(K-n)(F-1-\epsilon F)\nonumber\\
					 =&-\sum_{i=1}^{K-n} H^*( Z_{1}|  W_{\{1,...,\beta+i\alpha\}} )\nonumber\\&+\alpha n(K-n)(F-1-\epsilon F).\label{eq:130}
					 \end{align}
    				Applying (\ref{eq:130}) to  (\ref{eq:111}), we have
					\begin{align}
					    RF\geq& H^*(W_1| Z_1)+\frac{2}{n(n+1)\alpha} \left(\vphantom{\sum_{i=0}^{K-n-1}}\alpha n(K-n)(F-1-\epsilon F)\right.
					  \nonumber\\&-\sum_{k=1}^{n} H^*(Z_k|W_{\{1,...,\beta\}})\nonumber\\&\left.-\sum_{i=0}^{K-n-1} H^*(Z_{1}|  W_{\{1,...,\beta+i\alpha\}})\right)\nonumber\\
					  &-(1+\epsilon F)\nonumber\\
					  =& H^*(W_1| Z_1)+\frac{2}{n(n+1)\alpha} \left(\vphantom{\sum_{i=0}^{K-n-1}}\alpha n(K-n)F\right.
					  \nonumber\\&- nH^*(Z_1|W_{\{1,...,\beta\}})\nonumber\\&\left.-\sum_{i=0}^{K-n-1} H^*(Z_{1}|  W_{\{1,...,\beta+i\alpha\}})\right)\nonumber\\
					  & -\frac{2K-n+1}{n+1}(1+\epsilon F).
					\end{align}

		\section{Proof of Theorem \ref{th:gconst} for average rate} \label{app:ave1}
		
		Here we prove Theorem \ref{th:gconst} for the average rate (i.e. inequalities (\ref{eq:ruave}) and (\ref{eq:lruave})). The upper bounds of $R^*_{\textup{ave}}$ in these inequalities can be achieved using the caching scheme provided in \cite{yu2016exact}, hence we only need to prove their lower bounds.
	To do so, we define the following terminology:
	
	We divide the set of all demands, denoted by $\mathcal{D}$, into smaller subsets, and refer them to as \emph{types}. We use the same definition in \cite{yu2016exact}, which are stated as follows:
		 Given an arbitrary demand $\boldsymbol{d}$, we define its \textit{statistics}, denoted by  $\boldsymbol{s}(\boldsymbol{d})$, as a sorted array of length $N$, such that $s_i(\boldsymbol{d})$ equals the number of users that request the $i$th most requested file. We denote the set of all possible statistics by $\mathcal{S}$. 
	 Grouping by the same statistics, the set of all demands $\mathcal{D}$ can be broken into many subsets.	For any statistics $\boldsymbol{s}\in\mathcal{S}$, we define type $\mathcal{D}_{\boldsymbol{s}}$ as the set of queries with statistics $\boldsymbol{s}$. 
		 Note that for each demand  $\boldsymbol{d}$, the value $N_{\textup{e}}(\boldsymbol{d})$ only depends on its statistics $\boldsymbol{s}({\boldsymbol{d}})$, and thus the value is identical across all demands in $\mathcal{D}_{\boldsymbol{s}}$. For convenience, we denote that value by $N_{\textup{e}}(\boldsymbol{s})$.
		 
		 Given a prefetching scheme $\boldsymbol{\phi}$ and a type $\mathcal{D}_{\boldsymbol{s}}$, we say a rate $R$ is \textit{$\epsilon$-{achievable for type}} $\mathcal{D}_{\boldsymbol{s}}$ if
		 we can find a function $R(\boldsymbol{d})$ that  
	is $\epsilon$-achievable for any demand $\boldsymbol{d}$ in $\mathcal{D}_{\boldsymbol{s}}$, satisfying $R=\mathbb{E}_{\boldsymbol{d}}[R(\boldsymbol{d})]$, where $\boldsymbol{d}$ is uniformly random in $\mathcal{D}_{\boldsymbol{s}}$. Hence, to characterize $R^*_{\textup{ave}}$, it is sufficient to lower bound the {$\epsilon$-{achievable}} rates for each type individually, and show that for each type, the caching scheme provided in \cite{yu2016exact} is within the given constant factors optimal for large $F$ and small $\epsilon$. 
	
	We first lower bound any $\epsilon$-achievable rate for each type as follows 
	: Within a type $\mathcal{D}_{\boldsymbol{s}}$, we can find a demand $\boldsymbol{d}$, such that users in $\{1,...,N_{\textup{e}}(\boldsymbol{s})\}$ requests different files. We can easily generalize Lemma \ref{lemma:ga} to this demand, and any $\epsilon$ achievable rate of this demand, denoted by $R_{\boldsymbol{d}}$, is lower bounded by the following inequality: 
				\begin{align}
				R_{\boldsymbol{d}}\geq& \frac{1}{F}\left(\sum_{k=1}^{N_{\textup{e}}(\boldsymbol{s})} H(W_{d_k}|Z_{\{1,...,k\}}, W_{\{d_1,...,d_{k-1}\}})\right)\nonumber\\&- N_{\textup{e}}(\boldsymbol{s})(\frac{1}{F}+\epsilon). 
				\end{align}	
	Applying the same bounding technique to all demands in type $\mathcal{D}_{\boldsymbol{s}}$. We can prove that any rate that is $\epsilon$-achievable for $\mathcal{D}_{\boldsymbol{s}}$, denoted by $R_{\boldsymbol{s}}$, is bounded by the follows:
	\begin{align}
				R_{\boldsymbol{s}}\geq& \frac{1}{F}\left(\sum_{k=1}^{N_{\textup{e}}(\boldsymbol{s})} H^*(W_{k}|Z_{\{1,...,k\}}, W_{\{1,...,{k-1}\}})\right)\nonumber\\&- N_{\textup{e}}(\boldsymbol{s})(\frac{1}{F}+\epsilon),\label{eq:93}
				\end{align}	
				where function $H^*(\cdot)$ is defined in the proof of Theorem \ref{th:general}.
				
	Following the same steps in the proof of Theorem  \ref{th:general}, we can prove that 
			\begin{equation}
			R_{\boldsymbol{s}}\geq s-1+\alpha-\frac{s(s-1)-\ell(\ell-1)+ 2\alpha s}{2(N-\ell+1)} M - N_{\textup{e}}(\boldsymbol{s})(\frac{1}{F}+\epsilon), \label{eq:94}
			\end{equation}
			for arbitrary $s\in\{1,...,N_{\textup{e}}(\boldsymbol{s})\}$, $\alpha\in[0,1]$, where $\ell\in\{1,...,s\}$ is the minimum value such that 
			\begin{equation}
			\frac{s(s-1)-\ell (\ell-1)}{2}+\alpha s\leq  (N-\ell+1)\ell.
			\end{equation}

			On the other hand, the caching scheme provided in \cite{yu2016exact} achieves an average rate of $\textup{Conv}\left(
		\frac{\binom{K}{r+1}-\binom{K-N_{\textup{e}}(\boldsymbol{s})}{r+1}}{\binom{K}{r}}\right)$ within each type $\mathcal{D}_{\boldsymbol{s}}$. Using the results in \cite{yu2016exact}, we can easily prove that this average rate can be upper bounded by $R_{\textup{dec}}(M,\boldsymbol{s})$, defined as
		\begin{align}
		    R_{\textup{dec}}(M,\boldsymbol{s})\triangleq\frac{N-M}{M}(1-(1-\frac{M}{N})^{N_{\textup{e}}(\boldsymbol{s})}).\label{eq:dectype}
		\end{align}
		Hence, in order to prove (\ref{eq:ruave}) and (\ref{eq:lruave}), it suffices to prove that for large $F$ and small $\epsilon$, any $\epsilon$-achievable rate $R_{\boldsymbol{s}}$ for any type $\mathcal{D}_{\boldsymbol{s}}$ satisfies $R_{\boldsymbol{s}}\geq R_{\textup{dec}}(M,\boldsymbol{s})/2.00884 $ in the general case, and  $R_{\boldsymbol{s}}\geq R_{\textup{dec}}(M,\boldsymbol{s})/2 $ when $N\geq \frac{K(K+1)}{2}$.
		
		Note that the above characterization of $R_{\boldsymbol{s}}$ exactly matches a characterization of $R^*$ for a caching system with $N$ files and $N_{\textup{e}}(\boldsymbol{s})$ users. Specifically, the lower bound of $R_{\boldsymbol{s}}$ given by (\ref{eq:94}) exactly matches Theorem \ref{th:general}, and the upper bound $R_{\textup{dec}}(M,\boldsymbol{s})$ defined in (\ref{eq:dectype}) exactly matches the upper bound $R_
		{\textup{dec}}(M)$ defined in (\ref{eq:decdef}). Thus, by reusing the same arguments in the proof of Theorem \ref{th:gconst} for the peak rate, we can easily prove that $R_{\boldsymbol{s}}\geq R_{\textup{dec}}(M,\boldsymbol{s})/2.00884 $ holds for the general case, and $R_{\boldsymbol{s}}\geq R_{\textup{dec}}(M,\boldsymbol{s})/2 $ holds for sufficiently large $N$ when $\frac{N_{\textup{e}}(\boldsymbol{s})M}{N}>1$. Hence, to prove Theorem \ref{th:gconst} for the average rate, we only need $R_{\boldsymbol{s}}\geq R_{\textup{dec}}(M,\boldsymbol{s})/2$ for sufficiently large $N$ to also hold when $\frac{N_{\textup{e}}(\boldsymbol{s})M}{N}\leq 1$, which can be easily proved as follows:
		
		Using the same arguments in the proof of Theorem \ref{th:gconst} for the peak rate, the following inequality can be derived from (\ref{eq:94}) for large $N$, large $F$ and small $\epsilon$:
			\begin{equation}
			R_{\boldsymbol{s}}\geq N_{\textup{e}}(\boldsymbol{s})-\frac{N_{\textup{e}}(\boldsymbol{s})(N_{\textup{e}}(\boldsymbol{s})+1)}{2} \cdot \frac{M}{N},
			\end{equation}
		which is a linear function of $M$. Furthermore, since $R_{\textup{dec}}(M,\boldsymbol{s})$ is convex, we only need to check that
				\begin{align}
		    \frac{R_{\textup{dec}}(M,\boldsymbol{s})}{2}\leq N_{\textup{e}}(\boldsymbol{s})-\frac{N_{\textup{e}}(\boldsymbol{s})(N_{\textup{e}}(\boldsymbol{s})+1)}{2} \cdot \frac{M}{N}
		\end{align}
		holds at $\frac{N_{\textup{e}}(\boldsymbol{s})M}{N}\in \{0, 1\}$.
		
		\noindent For $\frac{N_{\textup{e}}(\boldsymbol{s})M}{N}=0$, we have
			\begin{align}
		    \frac{R_{\textup{dec}}(M,\boldsymbol{s})}{2}=& \frac{N_{\textup{e}}(\boldsymbol{s})}{2}\leq  N_{\textup{e}}(\boldsymbol{s})= \nonumber\\&N_{\textup{e}}(\boldsymbol{s})-\frac{N_{\textup{e}}(\boldsymbol{s})(N_{\textup{e}}(\boldsymbol{s})+1)}{2} \cdot \frac{M}{N}.
		\end{align}
		For $\frac{N_{\textup{e}}(\boldsymbol{s})M}{N}=1$, we have
		\begin{align}
		    \frac{R_{\textup{dec}}(M,\boldsymbol{s})}{2}=& \frac{N_{\textup{e}}(\boldsymbol{s})-1}{2}\left(1-\left(1-\frac{1}{N_{\textup{e}}(\boldsymbol{s})}\right)^{N_{\textup{e}}(\boldsymbol{s})}\right)\nonumber\\\leq&  \frac{N_{\textup{e}}(\boldsymbol{s})-1}{2}\nonumber\\=& N_{\textup{e}}(\boldsymbol{s})-\frac{N_{\textup{e}}(\boldsymbol{s})(N_{\textup{e}}(\boldsymbol{s})+1)}{2} \cdot \frac{M}{N}.
		\end{align}
			This completes the proof of Theorem \ref{th:gconst}.

 	\section{The exact rate-memory tradeoff for two-user case} \label{app:ave2}
 	As mentioned in Remark \ref{rem:twoave}, we can completely characterize the rate-memory tradeoff for average rate for the two-user case, for any possible values of $N$ and $M$. We formally state this result in the following corollary:
			\begin{corollary}\label{cor:twoave}
			For a caching system with $2$ users, a database of $N$ files, and a local cache size of $M$ files at each user, we have
			\begin{equation}
		 R^*_{\textup{ave}}=R_{\textup{u,ave}}(N,K,r),
			\end{equation}
			where  $R_\textup{u,ave}(N,K,r)$ is defined in Definition  \ref{def}.
	\end{corollary}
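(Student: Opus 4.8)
To establish Corollary~\ref{cor:twoave} it suffices, as always, to prove the converse $R^*_{\textup{ave}}\geq R_{\textup{u,ave}}(N,2,r)$, since the matching upper bound is attained by the scheme of \cite{yu2016exact}. I would reuse the type decomposition of Appendix~\ref{app:ave1}, but combine the per-type bounds \emph{before} minimizing over prefetching. With $K=2$ there are only two demand types: $N_{\textup{e}}=1$ (both users request the same file, fraction $1/N$ of $\mathcal D$) and $N_{\textup{e}}=2$ (fraction $(N-1)/N$). For a symmetrized prefetching $\boldsymbol{\phi}$, let $\bar R_1(\boldsymbol{\phi})$ and $\bar R_2(\boldsymbol{\phi})$ denote the average $\epsilon$-achievable rates within these two types, so that in the relevant $F\to\infty$, $\epsilon\to0$ limit $R^*_{\textup{ave}}$ is lower bounded by $\min_{\boldsymbol{\phi}}\big(\tfrac1N\bar R_1+\tfrac{N-1}{N}\bar R_2\big)$. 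Applying Lemma~\ref{lemma:ga} together with the symmetrization of Section~\ref{sec:gconv} gives, up to $o(F)$,
\[
\bar R_1 F\ \geq\ H^*(W_1|Z_1),\qquad \bar R_2 F\ \geq\ H^*(W_1|Z_1)+H^*(W_2|Z_1,Z_2,W_1).
\]

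The point is that both bounds contain the term $H^*(W_1|Z_1)$, so their weighted sum is
\[
R^*_{\textup{ave}}F\ \geq\ H^*(W_1|Z_1)+\tfrac{N-1}{N}\,H^*(W_2|Z_1,Z_2,W_1)-o(F),
\]
with coefficient $1$ (not $\tfrac{N-1}{N}$) on $H^*(W_1|Z_1)$. This coupling is essential: minimizing $\bar R_1$ and $\bar R_2$ separately over $\boldsymbol{\phi}$ is strictly loose, because coded prefetching can beat the uncoded optimum inside the $N_{\textup{e}}=2$ type at the expense of the $N_{\textup{e}}=1$ type, and only the joint bound captures this trade-off.

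To finish, I would bound the two conditional entropies in opposite directions and then use two elementary facts about caches of i.i.d.\ symmetric files. Writing $H^*(W_1|Z_1)=F-H^*(Z_1)+H^*(Z_1|W_1)$ and, from $H^*(W_{\{2,\dots,N\}}|Z_1,Z_2,W_1)\geq(N-1)F-H^*(Z_1,Z_2|W_1)\geq(N-1)F-2H^*(Z_1|W_1)$, the estimate $H^*(W_2|Z_1,Z_2,W_1)\geq F-\tfrac{2}{N-1}H^*(Z_1|W_1)$, the cross terms in $H^*(Z_1|W_1)$ combine to the nonnegative coefficient $\tfrac{N-2}{N}$, yielding
\[
R^*_{\textup{ave}}F\ \geq\ \tfrac{2N-1}{N}F-H^*(Z_1)+\tfrac{N-2}{N}H^*(Z_1|W_1)-o(F).
\]
Then $H^*(Z_1)\leq MF$ and $I^*(W_1;Z_1)\leq\tfrac1N H^*(Z_1)$ — the latter because $\sum_{i=1}^N I(W_i;Z_1)\leq I(W_{1,\dots,N};Z_1)\leq H(Z_1)$, so $H^*(Z_1|W_1)\geq\tfrac{N-1}{N}H^*(Z_1)$ — drive the right-hand side to its worst case, giving $R^*_{\textup{ave}}\geq\tfrac{2N-1}{N}-\tfrac{3N-2}{N^2}M-o(1)$, which is exactly the chord of $R_{\textup{u,ave}}(N,2,r)$ over $r\in[0,1]$. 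For $r\in[1,2]$, discarding the nonnegative $H^*(W_2|\cdot)$ term and using $H^*(W_1|Z_1)\geq F-\tfrac1N H^*(Z_1)\geq(1-\tfrac MN)F$ gives $R^*_{\textup{ave}}\geq 1-\tfrac MN-o(1)$, matching $R_{\textup{u,ave}}(N,2,r)$ there. Taking the larger of the two affine bounds at each $M$ (the case $N=1$ being immediate, with the same formula) completes the proof.

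The main obstacle is to carry out this coupling correctly rather than settling for a per-type argument: the two cut-set-type inequalities must be kept as explicit functions of the cache quantities $H^*(Z_1)$ and $H^*(Z_1|W_1)$, and one has to invoke the symmetry inequality $I^*(W_1;Z_1)\leq H^*(Z_1)/N$ to pin down the worst prefetching. Checking that the resulting worst case lies exactly on the piecewise-linear upper bound for every memory size $M$ — i.e.\ that the converse is tight, not merely within the constant factor of Theorem~\ref{th:gconst} — is the real content; the individual entropy steps are routine.
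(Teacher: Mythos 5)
Your proof follows the same route as the paper's Appendix~G: decompose the demand set by $N_{\textup{e}}$, keep the per-type converse bounds coupled through the common prefetching $\boldsymbol{\phi}$ so that the weighted sum has coefficient $1$ (rather than $\tfrac{N-1}{N}$) on $H^*(W_1|Z_1)$, and run the same entropy chain --- $H^*(W_1|Z_1)=F-H^*(Z_1)+H^*(Z_1|W_1)$, $H^*(W_2|Z_{\{1,2\}},W_1)\geq F-\tfrac{2}{N-1}H^*(Z_1|W_1)$, and $I^*(W_1;Z_1)\leq H^*(Z_1)/N$ --- to obtain the two affine pieces $\tfrac{2N-1}{N}-\tfrac{3N-2}{N^2}M$ and $1-\tfrac{M}{N}$, whose pointwise maximum is exactly $R_{\textup{u,ave}}(N,2,r)$. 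This matches the paper's argument step for step; your added remark on why the coupling is essential (a per-type minimization would be strictly looser) is correct but is the same observation the paper exploits implicitly by minimizing over $\boldsymbol{\phi}$ only after forming $\mathbb{E}_{\boldsymbol{s}}[\cdot]$.
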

		\begin{proof}
		For the single-file case, only one possible demand exists. The average rate thus equals the peak rate, which can be easily characterized. Hence, we omit the proof and focus on cases where $N\geq 2$. Note that $R_{\textup{u,ave}}$ can be achieved using the scheme provided in \cite{yu2016exact}, we only need to prove that $R^*_{\textup{ave}}\geq R_{\textup{u,ave}}(N,K,r)$.
		
		As shown in Appendix \ref{app:ave1}, the average rate within each type $\mathcal{D}_{\boldsymbol{s}}$ is bounded by (\ref{eq:93}). Hence, the minimum average rate under uniform file popularity given a prefetching scheme $\phi$, denoted by $R(\boldsymbol{\phi})$, is lower bounded by
		\begin{align}
			R(\boldsymbol{\phi})\geq& \mathbb{E}_{\boldsymbol{s}}\left[\frac{1}{F}\left(\sum_{k=1}^{N_{\textup{e}}(\boldsymbol{s})} H^*(W_{k}|Z_{\{1,...,k\}}, W_{\{1,...,{k-1}\}})\right)\right.\nonumber\\&\left.- N_{\textup{e}}(\boldsymbol{s})(\frac{1}{F}+\epsilon)\vphantom{\sum_{k=1}^{N_{\textup{e}}(\boldsymbol{s})}}\right].
				\end{align}	
		Note that for the two-user case, $N_{\textup{e}}(\boldsymbol{s})$ equals $1$ with probability $\frac{1}{N}$, and $2$ with probability $\frac{N-1}{N}$. Consequently,
		\begin{align}
			R(\boldsymbol{\phi})\geq& \frac{1}{F}\left( H^*(W_{1}|Z_{1})+\frac{N-1}{N}\cdot H^*(W_{2}|Z_{\{1,2\}}, W_{1})\right)\nonumber\\&- \frac{2N-1}{N}\cdot (\frac{1}{F}+\epsilon).
				\end{align}	
		Using the technique developed in proof of Theorem \ref{th:general}, we have the following two lower bounds
		\begin{align}
			R(\boldsymbol{\phi})\geq& \frac{1}{F}H^*(W_{1}|Z_{1})- \frac{2N-1}{N}\cdot (\frac{1}{F}+\epsilon)\nonumber\\
			\geq& 1-\frac{M}{N}- \frac{2N-1}{N}\cdot (\frac{1}{F}+\epsilon),\\
			R(\boldsymbol{\phi})\geq& \frac{1}{F}\left( H^*(W_{1}|Z_{1})+\frac{N-1}{N}\cdot H^*(W_{2}|Z_{\{1,2\}}, W_{1})\right)\nonumber\\&- \frac{2N-1}{N}\cdot (\frac{1}{F}+\epsilon)\nonumber\\
			\geq& \frac{1}{F}\left( H^*(W_{1}|Z_{1})+\frac{1}{N}\cdot ((N-1)F-2 H^*(Z_{1}| W_{1}))\right)\nonumber\\&- \frac{2N-1}{N}\cdot (\frac{1}{F}+\epsilon)\nonumber\\
			\geq& \frac{2N-1}{N}-\frac{3N-2}{N}\cdot \frac{M}{N}-\frac{2N-1}{N}\cdot (\frac{1}{F}+\epsilon).
				\end{align}
		Hence we have
		\begin{align}
			R^*_{\textup{ave}}\geq& \max\left\{1-\frac{M}{N},\  \frac{2N-1}{N}-\frac{3N-2}{N}\cdot \frac{M}{N} \right\}\nonumber\\=&R_{\textup{u,ave}}(N,K,r).
				\end{align}
		
	 \end{proof}

		\section{Proof of Theorem \ref{th:exact5} for average rate} \label{app:ave3}
		To prove Theorem \ref{th:exact5} for the average rate, we need to show that $R^*_{\textup{ave}}=R_{\textup{u}}(N,K,r)$ for large $N$, for any caching system with no more than $5$ users.  Note that when $N$ is large, with high probability all users will request distinct files. Hence,
	 we only need to prove that the minimum average rate within the \emph{type of the worst case demands} (i.e., the set of demands where all users request distinct files) equals $R_{\textup{u}}(N,K,r)$.
	 Since $R_{\textup{u}}(N,K,r)$ can already be achieved according to \cite{yu2016exact}, it suffices to prove that this average rate is lower bounded by $R_{\textup{u}}(N,K,r)$.

	 Similar to the peak rate case, we prove that this fact holds if $\frac{KM}{N}\leq 1$ or $\frac{KM}{N}\geq \lceil \frac{K-3}{2}\rceil$ for large $N$. 
	 When $\frac{KM}{N}\leq 1$ or $\frac{KM}{N}\geq K-1$, this can be proved the same way as Lemma \ref{lemma:2}, while for the other case (i.e. $\frac{KM}{N}\in \left[\max\{\lceil\frac{K-3}{2}\rceil,1\}, K-1\right)$), we need to prove a new version of Theorem \ref{th:2}, which lower bounds the average rate within the type of the worst case demands.  
						To simplify the discussion, we adopt the notation of $ H^*(X_{\mathcal{D}}, W_{\mathcal{A}},Z_{\mathcal{B}})$ which is defined in (\ref{eq:star}). 
			We  also adopt the corresponding notation for conditional entropy.
					Suppose rate $R$ is $\epsilon$ achievable for the worst case type, we start by proving converse bounds of $R$ for large $N$.
					
					Recall that $r=\frac{KM}{N}$, and let $n=\lfloor r+1 \rfloor$.
					Because $r\in \left[1, K-1\right)$, we have $n\in\{2,...,K-1\}$. Let 
					$\alpha=\lfloor \frac{N-K}{K-n} \rfloor$ and $\beta=N-\alpha(K-n)$. Suppose $N$ is large enough,  such that $\alpha>0$. For any					
					$i\in\{1,..,n\}$ and $j\in\{1,...,\alpha\}$ let $\boldsymbol{d}^{i,j}$ be a demand satisfying
					\begin{align}
					    d^{i,j}_l&=\begin{cases}
					     l-i+ (j-1)(K-n) +\beta \\\ \ \ \ \ \ \ \ \ \ \ \ \ \ \ \ \ \ \ \ \ \ \ \ \ \ \ \ \text{if } i+1\leq l\leq i+K-n,\\
					     l\ \ \ \ \ \ \ \ \ \ \ \ \ \ \ \ \ \ \ \ \ \ \ \ \ \ \  \text{otherwise}.
					    \end{cases}
					\end{align}
					Note that the above demands belong to the worst case type, so we have $ RF\geq H^*(X_{\boldsymbol{d}^{i,j}}) $ for any $i$ and $j$. Following the same steps of proving Lemma \ref{lemma:3}, we have
					\begin{align}
					    RF\geq&  H^*(W_1| Z_1)+\frac{2}{n(n+1)\alpha} \left(\vphantom{\sum_{i=0}^{K-n-1}}\alpha n(K-n)F\right.
					  \nonumber\\& \left.- nH^*(Z_1|W_{\{1,...,\beta\}})-\sum_{i=0}^{K-n-1} H^*(Z_{1}|  W_{\{1,...,\beta+i\alpha\}})\right)\nonumber\\
					  & -\frac{2K-n+1}{n+1}(1+\epsilon F).
					\end{align}
					Then following the steps of proving Theorem \ref{th:2}, we have
						\begin{align}\label{eq:10ave}
		    R\geq \frac{2K-n+1}{n+1}-\frac{K(K+1)}{n(n+1)}\cdot \frac{M}{N}-\frac{2K-n+1}{n+1}(\epsilon+\frac{1}{F})
		\end{align}
					if the following inequality holds:
		\begin{align}
		    K\beta+\alpha\frac{(K-n)(K-n-1)}{2}\leq \frac{n(n+1)\alpha}{2}.\label{eq:th2condave}
		\end{align}
		Otherwise, we have
		\begin{align}
		    R\geq& \frac{2K-n+1}{n+1}-\frac{2K(K-n)}{n(n+1)}\cdot \frac{M}{N-\beta} \nonumber\\&-\frac{2K-n+1}{n+1}(\epsilon+\frac{1}{F}).
		\end{align}
		Similar to the proof of Lemma \ref{lemma:2}, we have proved that $R\geq R_\textup{u}(N,K,r)$ from the above bounds if  $r\in \left[\max\{\lceil\frac{K-3}{2}\rceil,1\}, K-1\right)$ for large $N$, large $F$, and small $\epsilon$. Consequently, we proved that $R^*_{\textup{ave}}=R_{\textup{u}}(N,K,r)$ if $r\leq 1$ or $r\geq \lceil\frac{K-3}{2}\rceil$ for large $N$. For systems with no more than $5$ users, this gives the exact characterization.

\section{Convexity of $R_{\textup{u}}(N,K,r)$ and $R_{\textup{u,ave}}(N,K,r)$}\label{app:convexity}

In this appendix, we prove the convexity of $R_{\textup{u}}(N,K,r)$ and $R_{\textup{u,ave}}(N,K,r)$ as functions of $r$, given parameters $N$ and $K$. We start by proving the convexity of $R_{\textup{u}}(N,K,r)$. 

Recall that for any non-integer $r$, the value of $R_{\textup{u}}(N,K,r)$ is defined by linear interpolation. Hence, it suffices to show that $R_{\textup{u}}(N,K,r)$ is convex on $r\in\{0,1,...,K\}$. Equivalently, we only need to prove 
\begin{align}\label{eq:127}
2R_{\textup{u}}(N,K,r)- R_{\textup{u}}(N,K,r-1)-R_{\textup{u}}(N,K,r+1)\leq 0
\end{align}
for any $r\in\{1,...,K-1\}$.

The proof is as follows. We first observer that $R_{\textup{u}}(N,K,r)$ can be written as
	\begin{align}
			    	R_{\textup{u}}(N,K,r)&=
				\frac{\binom{K}{r+1}-\binom{K-\min\{K,N\}}{r+1}}{\binom{K}{r}}\\
				&=\frac{\sum_{i=1}^{\min\{K,N\}} \binom{K-i}{r}}{\binom{K}{r}}\\
				&=\sum_{i=1}^{\min\{K,N\}} \frac{\binom{K-r}{i}}{\binom{K}{i}}.
			\end{align}
Consequently, the LHS of inequality (\ref{eq:127}) can be written as 
\begin{align}
2R_{\textup{u}}&(N,K,r)- R_{\textup{u}}(N,K,r-1)-R_{\textup{u}}(N,K,r+1)\nonumber\\
&= \sum_{i=1}^{\min\{K,N\}} \frac{2\binom{K-r}{i}-\binom{K-r-1}{i}-\binom{K-r+1}{i}}{\binom{K}{i}}\\
&=\sum_{i=1}^{\min\{K,N\}} \frac{\binom{K-r-1}{i-1}-\binom{K-r}{i-1}}{\binom{K}{i}}\\
&=\sum_{i=2}^{\min\{K,N\}} \frac{-\binom{K-r-1}{i-2}}{\binom{K}{i}}.
\end{align}
Since both  $\binom{K-r-1}{i-2}$ and $\binom{K}{i}$ are non-negative, we have proved inequality (\ref{eq:127}). This guarantees the convexity of $R_{\textup{u}}(N,K,r)$. 

Note that by substituting the variable $\min\{K,N\}$ in function $R_{\textup{u}}(N,K,r)$ by $N_{\textup{e}}(\boldsymbol{d})$, and taking expectation over a uniformly random demand $\boldsymbol{d}$, we exactly obtain  function $R_{\textup{u,ave}}(N,K,r)$. Consequently, by applying the same substitution in the above proof, we obtain a proof for the convexity of $R_{\textup{u,ave}}(N,K,r)$.
\bibliographystyle{ieeetr}
   \bibliography{UCache}

\begin{thebibliography}{10}

\bibitem{8006555}
Q.~Yu, M.~A. Maddah-Ali, and A.~S. Avestimehr, ``Characterizing the rate-memory
  tradeoff in cache networks within a factor of 2,'' in {\em 2017 IEEE
  International Symposium on Information Theory (ISIT)}, pp.~386--390, June
  2017.

\bibitem{sleator1985amortized}
D.~D. Sleator and R.~E. Tarjan, ``Amortized efficiency of list update and
  paging rules,'' {\em Communications of the ACM}, vol.~28, no.~2,
  pp.~202--208, 1985.

\bibitem{maddah-ali12a}
M.~A. Maddah-Ali and U.~Niesen, ``Fundamental limits of caching,'' {\em IEEE
  Trans. Inf. Theory}, vol.~60, pp.~2856--2867, May 2014.

\bibitem{maddah-ali13}
M.~A. Maddah-Ali and U.~Niesen, ``Decentralized coded caching attains
  order-optimal memory-rate tradeoff,'' {\em IEEE/ACM Trans. Netw.}, vol.~23,
  pp.~1029--1040, Aug. 2015.

\bibitem{pedarsani13}
R.~Pedarsani, M.~A. Maddah-Ali, and U.~Niesen, ``Online coded caching,'' {\em
  IEEE/ACM Transactions on Networking}, vol.~24, no.~2, pp.~836--845, 2016.

\bibitem{niesen13}
U.~Niesen and M.~A. Maddah-Ali, ``Coded caching with nonuniform demands,'' {\em
  IEEE Transactions on Information Theory}, 2016.

\bibitem{ji2015order}
M.~Ji, A.~M. Tulino, J.~Llorca, and G.~Caire, ``Order-optimal rate of caching
  and coded multicasting with random demands,'' {\em arXiv preprint
  arXiv:1502.03124}, 2015.

\bibitem{zhang15}
J.~Zhang, X.~Lin, and X.~Wang, ``Coded caching under arbitrary popularity
  distributions,'' in {\em Proc. ITA}, pp.~98--107, Feb. 2015.

\bibitem{ji14b}
M.~Ji, G.~Caire, and A.~F. Molisch, ``Fundamental limits of caching in wireless
  {D2D} networks,'' {\em IEEE Trans. Inf. Theory}, vol.~62, pp.~849--869, Feb.
  2016.

\bibitem{lim2016information}
S.~H. Lim, C.~Y. Wang, and M.~Gastpar, ``Information-theoretic caching: The
  multi-user case,'' {\em IEEE Transactions on Information Theory}, vol.~63,
  pp.~7018--7037, Nov 2017.

\bibitem{bidokhti2016noisy}
S.~S. Bidokhti, M.~Wigger, and R.~Timo, ``Noisy broadcast networks with
  receiver caching,'' {\em arXiv preprint arXiv:1605.02317}, 2016.

\bibitem{zhang2015fundamental}
J.~Zhang and P.~Elia, ``Fundamental limits of cache-aided wireless {BC}:
  Interplay of coded-caching and {CSIT} feedback,'' {\em arXiv preprint
  arXiv:1511.03961}, 2015.

\bibitem{hachem14}
J.~Hachem, N.~Karamchandani, and S.~Diggavi, ``Multi-level coded caching,'' in
  {\em Proc. IEEE ISIT}, pp.~56--60, 2014.

\bibitem{karamchandani14}
N.~Karamchandani, U.~Niesen, M.~A. Maddah-Ali, and S.~N. Diggavi,
  ``Hierarchical coded caching,'' {\em IEEE Transactions on Information
  Theory}, vol.~62, no.~6, pp.~3212--3229, 2016.

\bibitem{maddah2015cache}
M.~A. Maddah-Ali and U.~Niesen, ``Cache-aided interference channels,'' in {\em
  Proc. IEEE ISIT}, pp.~809--813, June 2015.

\bibitem{naderializadeh2016fundamental}
N.~Naderializadeh, M.~A. Maddah-Ali, and A.~S. Avestimehr, ``Fundamental limits
  of cache-aided interference management,'' in {\em Proc. IEEE ISIT}, July
  2016.

\bibitem{7996347}
N.~Naderializadeh, M.~A. Maddah-Ali, and A.~S. Avestimehr, ``Cache-aided
  interference management in wireless cellular networks,'' in {\em 2017 IEEE
  International Conference on Communications (ICC)}, pp.~1--7, May 2017.

\bibitem{DBLP:journals/corr/Chen14h}
Z.~Chen, ``Fundamental limits of caching: Improved bounds for small buffer
  users,'' {\em arXiv preprint arXiv:1407.1935}, 2014.

\bibitem{wan2016caching}
K.~Wan, D.~Tuninetti, and P.~Piantanida, ``On caching with more users than
  files,'' in {\em Information Theory (ISIT), 2016 IEEE International Symposium
  on}, pp.~135--139, IEEE, 2016.

\bibitem{sahraei2016k}
S.~Sahraei and M.~Gastpar, ``K users caching two files: An improved achievable
  rate,'' in {\em Proc. CISS}, pp.~620--624, Mar. 2016.

\bibitem{tian2016caching}
C.~Tian and J.~Chen, ``Caching and delivery via interference elimination,''
  {\em arXiv preprint arXiv:1604.08600}, 2016.

\bibitem{amiri2016fundamental}
M.~M. Amiri and D.~Gunduz, ``Fundamental limits of caching: improved delivery
  rate-cache capacity trade-off,'' {\em arXiv preprint arXiv:1604.03888}, 2016.

\bibitem{amiri2016coded}
M.~M. Amiri, Q.~Yang, and D.~Gunduz, ``Coded caching for a large number of
  users,'' {\em arXiv preprint arXiv:1605.01993}, 2016.

\bibitem{yu2016exact}
Q.~Yu, M.~A. Maddah-Ali, and A.~S. Avestimehr, ``The exact rate-memory tradeoff
  for caching with uncoded prefetching,'' {\em IEEE Transactions on Information
  Theory}, vol.~64, pp.~1281--1296, Feb 2018.

\bibitem{gomez2016fundamental}
J.~G{\'o}mez-Vilardeb{\'o}, ``Fundamental limits of caching: Improved bounds
  with coded prefetching,'' {\em arXiv preprint arXiv:1612.09071}, 2016.

\bibitem{ghasemi15}
H.~Ghasemi and A.~Ramamoorthy, ``Improved lower bounds for coded caching,''
  {\em IEEE Transactions on Information Theory}, vol.~63, pp.~4388--4413, July
  2017.

\bibitem{sengupta15}
A.~Sengupta, R.~Tandon, and T.~C. Clancy, ``Improved approximation of
  storage-rate tradeoff for caching via new outer bounds,'' in {\em 2015 IEEE
  International Symposium on Information Theory (ISIT)}, pp.~1691--1695, IEEE,
  2015.

\bibitem{DBLP:journals/corr/WangLG16}
C.~Y. Wang, S.~H. Lim, and M.~Gastpar, ``A new converse bound for coded
  caching,'' in {\em 2016 Information Theory and Applications Workshop (ITA)},
  pp.~1--6, Jan 2016.

\bibitem{tian2016symmetry}
C.~Tian, ``Symmetry, outer bounds, and code constructions: A computer-aided
  investigation on the fundamental limits of caching,'' {\em arXiv preprint
  arXiv:1611.00024}, 2016.

\bibitem{prem2015critical}
N.~S. Prem, V.~M. Prabhakaran, R.~Vaze, {\em et~al.}, ``Critical database size
  for effective caching,'' {\em arXiv preprint arXiv:1501.02549}, 2015.

\end{thebibliography}

\section*{Biographies}

\begin{IEEEbiographynophoto}{Qian Yu} (S'16) is pursuing his Ph.D. degree in Electrical Engineering at University of Southern California (USC), Viterbi School of Engineering. He received his M.Eng. degree in Electrical Engineering and B.S. degree in EECS and Physics, both from Massachusetts Institute of Technology (MIT). His interests span information theory, distributed computing, and many other problems math-related.


Qian is a recipient of the Google PhD Fellowship in 2018, and received the Jack Keil Wolf ISIT Student Paper Award in 2017. He received the Annenberg Graduate Fellowship in 2015, and Honorable Mention in the William Lowell Putnam Mathematical Competition in 2013.

\end{IEEEbiographynophoto}

\begin{IEEEbiographynophoto}{Mohammad Ali Maddah-Ali} (S'03-M'08) received the B.Sc. degree from Isfahan University of Technology, and the M.A.Sc. degree from the University of Tehran, both in electrical engineering. From 2002 to 2007, he was with the Coding and Signal Transmission Laboratory (CST Lab), Department of Electrical and Computer Engineering, University of Waterloo, Canada, working toward the Ph.D. degree. From 2007 to 2008, he worked at the Wireless Technology Laboratories, Nortel Networks, Ottawa, ON, Canada.  From 2008 to 2010, he was a post-doctoral fellow in the Department of Electrical Engineering and Computer Sciences at the University of California at Berkeley. Then, he joined Bell Labs, Holmdel, NJ, as a communication research scientist. Recently, he started working at Sharif University of Technology, as a faculty member.
 
Dr. Maddah-Ali is a recipient of NSERC Postdoctoral Fellowship in 2007, a best paper award from IEEE International Conference on Communications (ICC) in 2014, the IEEE Communications Society and IEEE Information Theory Society Joint Paper Award in 2015, and the IEEE Information Theory Society Joint Paper Award in 2016. 
\end{IEEEbiographynophoto}

\begin{IEEEbiographynophoto}{A. Salman Avestimehr} (S'03-M'08-SM'17) is an Associate Professor at the Electrical Engineering Department of University of Southern California. He received his Ph.D. in 2008 and M.S. degree in 2005 in Electrical Engineering and Computer Science, both from the University of California, Berkeley. Prior to that, he obtained his B.S. in Electrical Engineering from Sharif University of Technology in 2003. His research interests include information theory, the theory of communications, and their applications to distributed computing and data analytics.

Dr. Avestimehr has received a number of awards, including the Communications Society and Information Theory Society Joint Paper Award, the Presidential Early Career Award for Scientists and Engineers (PECASE) for ``pushing the frontiers of information theory through its extension to complex wireless information networks'', the Young Investigator Program (YIP) award from the U. S. Air Force Office of Scientific Research, the National Science Foundation CAREER award, and the David J. Sakrison Memorial Prize. He is currently an Associate Editor for the IEEE Transactions on Information Theory.
\end{IEEEbiographynophoto}
   
\end{document}